\newcommand{\lvac}{\langle \emptyset \rvert}
\newcommand{\rvac}{\lvert \emptyset \rangle}
\newcommand{\CP}{\mathbb{CP}}
\newtheorem{lemma}{Lemma}
\newtheorem{theorem}{Theorem}
\newtheorem{proposition}{Proposition}
\newtheorem{conjecture}{Conjecture}
\newcommand{\wbar}{\br{w}}
\newcommand{\tr}{\triangle}
\newcommand{\cinfty}{C^{\infty}}
\newcommand{\iso}{\cong}
\newcommand{\PV}{\op{PV}}
\newcommand{\dbar}{\br{\partial}}
\newcommand{\gl}{\mf{gl}}
\renewcommand{\c}{\mathsf{c}}
\renewcommand{\b}{\mathsf{b}}
\newcommand{\zbar}{\br{z}}
\newcommand{\eps}{\epsilon}
\newcommand{\xto}{\xrightarrow}
\newcommand{\what}{\widehat}
\newcommand{\til}{\widetilde}
\newcommand{\mscr}{\mathscr}
\renewcommand{\det}{\operatorname{det}}
\newcommand{\br}{\overline}
\newcommand{\C}{\mathbb C}
\newcommand{\norm}[1]{\left\| #1 \right\|}
\newcommand{\Oo}{\mscr O}
\newcommand{\Z}{\mathbb Z}
\newcommand{\op}{\operatorname}
\newcommand{\mbf}{\mathbf}
\newcommand{\mbb}{\mathbb}
\newcommand{\mf}{\mathfrak}
\newcommand{\mc}{\mathcal}
\newcommand{\ip}[1]{\left\langle #1 \right\rangle}
\newcommand{\abs}[1]{\left| #1 \right|}
\newcommand{\R}{\mbb R}
\renewcommand{\d}{\mathrm{d}}
\DeclareMathOperator{\Sym}{Sym} \DeclareMathOperator{\Hom}{Hom}
\DeclareMathOperator{\Tr}{Tr}
\title{Twisted Holography}
\abstract{We derive and test a novel holographic duality in the B-model topological string theory.
The duality relates the B-model on certain Calabi-Yau three-folds to two-dimensional chiral algebras defined as gauged $\beta \gamma$ 
systems. The duality conjecturally captures a topological sector of more familiar AdS$_5$/CFT$_4$ holographic dualities.}
\author[1]{Kevin Costello,}
\author[1]{Davide Gaiotto}
\affiliation[1]{Perimeter Institute for Theoretical Physics, Waterloo, Ontario, Canada N2L 2Y5}
\begin{document}
\maketitle

\section{Introduction and Conclusions}
In this paper we introduce a general family of ``topological'' holographic dualities. The ``gravitational'' side of the duality 
involves B-model topological string theory \cite{Bershadsky:1993ta, Bershadsky:1993cx} on Calabi-Yau manifolds with an SL$_2(\C)$ isometry group. The 
``gauge theory'' side of the duality involves certain vertex operator algebras defined as gauged $\beta \gamma$ systems.

The dualities are based on a standard holographic dictionary \cite{Maldacena:1997re,Witten:1998qj}, with a
one-to-one correspondence between the insertion of local operators on the gauge theory side and 
local modifications of the asymptotic boundary conditions on the gravitational side. 
\footnote{By the state-operator map, this is dictionary can be also interpreted as a 
direct identification of the spaces of states on the two sides of the duality. This can be a useful perspective 
for making sense of the notions of ``local operator'' and of ``local modification of the boundary conditions''.}
Both sides enjoy a local conformal invariance. 

Our main example is the holographic duality between 
\begin{itemize}
\item A gauged $\beta \gamma$ system $\mathcal{A}_N$ transforming in the adjoint of $\mathfrak{u}(N)$. 
\item B-model topological strings on an SL$_2(\C)$ background (also called ``deformed conifold''), with $N$ being the period
of the holomorphic three form. \footnote{We absorb the topological string coupling into the normalization of the three form.}
\end{itemize} 

The existence of some open-closed duality involving the deformed conifold and a stack of topological D2 branes, say wrapping the $\CP^1$ cycle in the resolved conifold
$O(-1) \oplus O(-1) \to \CP^1$, has been suggested several times in the literature, e.g. in \cite{Gopakumar:1998ki}, and was an important ingredient in the development of the 
Dijkgraaf-Vafa duality \cite{Dijkgraaf:2002dh,Dijkgraaf:2002fc,Aganagic:2003qj,Dijkgraaf:2007sx}, which in particular relates
the Gaussian matrix model to a certain subsector of the B-model on the deformed conifold. 

Our proposal sharpens these duality statements into a fully-fledged holographic duality and appears to be compatible with these earlier proposals. 
In particular, the $\mathcal{A}_N$ vertex algebra includes a small ${\cal N}=4$ super-Virasoro algebra. The topological twist of this super-conformal algebra 
selects a topological subsector of $\mathcal{A}_N$ which roughly matches the Dijkgraaf-Vafa subsector of the B-model.

Notice that the gauge theory side is well-defined for all finite values of $N$, while the gravitational side is 
only defined as an asymptotic series in $N^{-1}$. Our proposal is thus testable all orders in string perturbation theory and can be taken as a non-perturbative definition of B-model topological strings on certain backgrounds. 

In this paper we will limit ourselves to tree level tests. 
\begin{enumerate} 
	\item We show that single-trace operators in the large $N$ chiral algebra are in bijection with modifications of the boundary conditions of the holographic dual gravitational theory. In particular, the Witten indices on both sides coincide.
    \item 		 We use Witten diagrams to define the two- and three-point functions of the holographic dual theory,  and the OPE. We prove algebraically that all two- and three-point functions match.  
\end{enumerate}

Our basic duality can be modified in a variety of ways by adding extra D-branes on the gravity side. 
We will consider, in particular, space-filling branes. 

We also propose a broader family of dualities, involving
\begin{itemize}
\item A gauged $\beta \gamma$ system $\mathcal{A}^\Gamma_N$ based on an affine ADE quiver of type $\Gamma$. 
\item B-model topological strings on an SL$_2(\C)/\Gamma$ background.
\end{itemize} 

\subsection{Relations to physical dualities}
We will present a ``derivation'' of our dualities which is completely analogous to the original derivation of the 
canonical example of holography: the duality between maximally supersymmetric, four-dimensional 
Yang-Mills theory with $U(N)$ gauge group in conformally flat spacetime and Type IIB superstring theory in an 
AdS$_5 \times$S$^5$ background \cite{Maldacena:1997re,Witten:1998qj}.\footnote{We should also recall previous examples of topological open-closed dualities in topological strings, such as the geometric transition \cite{Gopakumar:1998ki,Ooguri:1999bv}, relating A-model topological strings on the resolved conifold and Chern-Simons gauge theory.  A holographic analysis, more in the spirit of the present work, was applied to a different topological string theory setup in the interesting recent paper \cite{IMZ2018}.} 

In both cases, the two sides of the duality arise respectively as 
\begin{itemize}
\item The low-energy limit of the world-volume theory of a certain stack of D-branes in flat space.
\item The near-horizon limit of the back-reacted geometry sourced by the same stack of D-branes. 
\end{itemize}

The original example of holography \cite{Maldacena:1997re} involves a stack of $N$ D3 branes in Type IIB string theory wrapping 
$\R^{3,1} \subset \R^{9,1}$. 
The conformal symmetry group emerges on both sides as a low-energy/near horizon enhancement 
of the Lorentz group. \footnote{Recall that the D3 brane back-reacted geometry is
\begin{equation}
ds^2 = \left(1 + \frac{g_s N}{r^4} \right)^{-\frac12} \sum_{i=1}^4 dx_i^2 +  \left(1 + \frac{g_s N}{r^4} \right)^{\frac12} \left(dr^2 + r^2 ds^2_{S^5} \right)
\end{equation}
After a scaling limit $r \to 0$ recovers AdS$_5 \times$S$^5$. The change of coordinates $y = \sqrt{g_s N}r^{-1}$ gives
\begin{equation}
ds^2 = \sqrt{g_s N} \frac{dy^2 + \sum_{i=1}^4 dx_i^2}{y^2} + \sqrt{g_s N} ds^2_{S^5}
\end{equation}
and shows that the $y \to 0$ asymptotic boundary region of AdS$_5$ replaces the flat space asymptotic region of the original geometry. } Analogously, our  proposal involves a stack of $N$ topological B-branes wrapping $\C \subset \C^3$. 
No extra scaling limits are necessary and the enhanced conformal symmetry group emerges immediately
in the back-reacted geometry. 

We believe that our topological duality actually coincides with a topological sub-sector of 
the physical holographic duality. Indeed, this paper began as an effort to 
identify the holographic dual description of a certain class of protected correlation functions 
in ${\mathcal N}=4$ Super Yang Mills with gauge group $U(N)$. 
These protected correlation functions are encoded in a 2d chiral algebra introduced in \cite{Beem:2013sza}
as the cohomology of a certain generator of the ${\mathcal N}=2$ super-conformal algebra.
The 2d chiral algebra for ${\mathcal N}=4$ $U(N)$ SYM precisely coincides with $\mathcal{A}_N$. 

The operation of cohomological twist of the gauge theory is expected to correspond to twisting supergravity in the sense of \cite{Costello:2016mgj}. Explicitly computing  twisted supergravity is difficult. However, in \cite{Costello:2016mgj},  conjectural descriptions of certain twists of type IIB supergravity were given, in terms of the topological $B$-model on certain $5$-dimensional Calabi-Yau's.  We show, assuming the conjectures of \cite{Costello:2016mgj}, that twisting type IIB supergravity on $AdS_5 \times S^5$ using the supercharge of \cite{Beem:2013sza}  localizes to the topological $B$-model on the deformed conifold.  

A direct localization analysis restricted to the first KK mode in $AdS_5$ can be found in \cite{Bonetti:2016nma}. 

\subsection{Global symmetry algebras}
The large $N$ chiral algebra admits an infinite-dimensional Lie algebra of global symmetries which we call $\mbf{a}_{\infty}$. This consists of all modes of single trace operators which preserve the vacuum at $0$ and $\infty$. This includes the $SL_2$ global conformal symmetries algebra and the $SU(2)_R$ global $R$-symmetries, as well as an infinite tower of higher-spin global symmetries.

The holographic chiral algebra also has an infinite-dimensional Lie algebra of global symmetries,  $\mf{a}_{\infty}^{hol}$.  This is the Lie algebra of infinitesimal gauge symmetries which preserve the vacuum solution to the equations of motion on the deformed conifold $SL_2(\C)$.  It's bosonic part is $\op{Vect}_0(SL_2(\C))$, the Lie algebra of holomorphic vector fields on this manifold which preserve the holomorphic volume form.    This Lie algebra acts by global symmetries on the holographic chiral algebra, defined in terms of Witten diagrams. 

In each case, these global symmetries are strong enough to fix all two- and three-point functions  except the $2$-point functions of a small number of operators of dimension $\le 2$.  

One of our main results is that these global symmetry algebras are isomorphic as super-Lie algebras: $\mf{a}_{\infty} \iso \mf{a}_{\infty}^{hol}$.  This is what allows us to match the holographic and large $N$ two and three point functions. 

We view the identification of the large $N$ and holographic global symmetry algebras as an important part of the topological-holographic dictionary.  The holographic global symmetry algebra has a very clean geometric and physical interpretation, and is easy to compute.  This is in contrast to the holographic chiral algebra, whose structure constants are encoded in difficult Witten diagram computations.

The holographic global symmetry algebra can also be interpreted in string-theory terms.  For any topological or physical string theory the BRST cohomology of the space of closed-string\footnote{This structure is also present once we include open strings.} states is a graded Lie algebra. The Lie bracket is givne by a string scattering processes where $2$ closed strings merge to $1$. The Jacobi identity is a consequence of the classical master equation satisfied by tree-level string vertices in either physical or topological strings, \cite{SenZwi96,Cos05,Sen:2015uaa}.  

The second cohomology group $H^2$ of the space of closed string states describes consistent modifications of the closed string background; and $H^1$ describes the infinitesimal symmetries of the closed string background. \footnote{This statement is closely related to the more familiar physical statements that BRST-closed vertex operators of ghost number $2$ define infinitesimal fluctuations of the background and BRST-closed vertex operators 
of ghost number $1$ define infinitesimal symmetries of the background \cite{Witten:1991zd}.}  The Lie bracket is defined on the cohomology of the space of closed string states, with the ghost number shifted by $1$, putting symmetries in degree $0$. 

In our context, the holographic global symmetry algebra is isomorphic to the cohomology of the space of closed-string states, with a shift of the ghost number by one, and with this Lie bracket. 

We conjecture that in any topological-holography setting, the large $N$ CFT has a Lie algebra of global symmetries, built from modes (and modes of descendents) which preserve the vacuum at $0$ and $\infty$.   This Lie  algebra should be isomorphic to that of closed-string states on the holographic dual closed-string background. 

\subsection{Open questions and future directions}

Our proposal leads to several interesting open problems, both of perturbative and non-perturbative nature. 
Here are some notable examples:
\begin{itemize}
\item In this paper we set up the basic ingredients for a systematic perturbative expansion of Kodaira-Spencer theory in the deformed conifold background.
It would be very interesting to do concrete loop calculations and match them to the $N^{-1}$ expansion of $\mathcal{A}_N$ correlation functions.
It would be even more interesting to make manifest the interpretation of these calculations as finite, well-defined loop calculations in IIB supergravity.
\item It would be very interesting to study the topological analogue of ``giant gravitons'' \cite{McGreevy:2000cw,Grisaru:2000zn}, i.e. local operators of determinant form which are holographic dual to probe D-branes rather than strings. A full match would considerably extend and sharpen the semiclassical comparison proposed in \cite{Biswas:2006tj}. 
\item It would be very interesting to consider the holographic dual of conformal blocks and correlation functions of $\mathcal{A}_N$
on a general Riemann surface. A precise match will likely require a sum over semiclassical saddles on the ``gravitational'' side 
of the duality, perhaps analogous to past AdS$_3$/CFT$_2$ proposals ( e.g. \cite{Maloney:2007ud,Dijkgraaf:2000fq,Yin:2007gv}).
It may give general lessons about the choice of saddles which may contribute to a quantum gravity calculation. 
\item A full classification of modules for $\mathcal{A}_N$ may lead to a non-perturbative classification of dynamical objects which can be found in the B-model topological string. 
\item It should be straightforward to extend this work to chiral algebras associated to gauge theories with orthogonal or symplectic gauge groups and Type IIB holographic duals \cite{Witten:1998xy}. 
\item It should be possible to find holographic dual descriptions for all chiral algebras associated to gauge theories of class ${\mathcal S}$ with known M-theory holographic duals \cite{Gaiotto:2009gz}. 
\item It would be interesting to fully uncover the relationship between the topological twist of $\mathcal{A}_N$, the Dijkgraaf-Vafa duality and its relationship with integrable hierarchies \cite{Dijkgraaf:2002dh,Dijkgraaf:2002fc,Aganagic:2003qj,Dijkgraaf:2007sx}. 
\item The B-model topological strings have a somewhat mysterious relationship with the $c=1$ non-critical string theory at the self-dual radius \cite{Witten:1991zd,Ghoshal:1995wm}. It would be nice to clarify the connections between our holography proposal and non-critical string theory. 
\end{itemize}

\section{The ${\mathcal N}=4$ chiral algebra}

The definitions in \cite{Beem:2013sza} associate a vertex algebra to any 
4d ${\mathcal N}=2$ superconformal quantum field theory. 

If the super-conformal quantum field theory is a gauge theory with exactly marginal couplings and a weakly coupled limit, 
the vertex algebra can be defined as a BRST reduction of a free vertex algebra. 
Namely, consider a gauge theory with gauge group $G$, acting on a symplectic representation 
$M$. There are free vertex algebras $\mathrm{Sb}^M$ and $\mathrm{bc}^{\mathfrak{g}}$
of symplectic bosons valued in $M$ and bc system valued in the Lie algebra $\mathfrak{g}$.

The symplectic bosons are defined by an OPE 
\begin{equation}
Z_a(z) Z_b(w) \sim \frac{\omega_{ab}}{z-w}
\end{equation}
and the bc system by 
\begin{equation}
b_I(z) c^J(w) \sim \frac{\delta_I^J}{z-w}
\end{equation}

The BRST operator takes the schematic form
\begin{equation}
Q_{\mathrm{BRST}} = \oint t_I^{ab} c^I Z_a Z_b + f^I_{JK} b_I c^J c^K
\end{equation}
and is nilpotent iff the second Casimir of $M$ has the correct value to cancel the beta function of the original gauge theory. 
The BRST-reduced vertex algebra $\mathrm{Sb}^M//G$ is defined as the $Q_{\mathrm{BRST}}$ cohomology of G-invariant operators, with the $c$ ghost zeromode removed.

In the specific case of maximally symmetric $U(N)$ gauge theory, the representation $M$ is the sum of two copies of 
$\mathfrak{g}$. The corresponding vertex algebra $\mathcal{A}_N$ is thus built from a pair $Z = (Z^1,Z^2)$ of adjoint fields,
rotated by an $\widehat{\mathfrak{su}}(2)$ current algebra with generators $J^{ij} = \mathrm{Tr} Z^i Z^j$, which is in the BRST cohomology. 

The BRST cohomology contains generators in non-zero ghost number. In particular, we have ``supercurrents'' of dimension $3/2$ and 
ghost number $\pm 1$: 
\begin{equation}
G^i_+ = \mathrm{Tr} \,b Z^i  \qquad \qquad G^i_- = \mathrm{Tr} \,\partial c Z^i 
\end{equation}
They combine with the $SU(2)$ currents and the stress tensor into a (small) ${\mathcal N}=4$ super-Virasoro algebra. 
\footnote{The (small) ${\mathcal N}=4$ super-Virasoro algebra has an $SU(2)_o$ automorphism which rotates $G^i_\pm$ into each other
and extends the ghost number symmetry. It is not immediately obvious if this is a symmetry of the full chiral algebra $\mathcal{A}_N$ (either at finite $N$ or in the large $N$ limit). It is definitely not a symmetry of the underlying four-dimensional gauge theory, as it mixes components of fermions of different chirality. It may have an interesting physical interpretation if the chiral algebra is constructed as a boundary vertex algebra for a three-dimensional gauge theory, as in \cite{Costello:2018fnz}, as an IR symmetry enhancement. We will see a similar emergent symmetry in the B-model dual.}

\subsection{The large $N$ limit}
In the large $N$ limit, the gauge-invariant operators built from adjoint-valued fields organize themselves according to some well-known combinatorics. 
Operators whose dimension remains finite in the large $N$ limits can be written as polynomials of single-trace (a.k.a.\ ``closed string'') operators, with no relations among themselves. 

The computation of a correlation function or OPE involves some collection of Wick contractions, which can be organized by the degree of planarity of the corresponding ribbon graph. For example, if we do not insert powers of $N$ in the normalization of single-trace operators, the two-point functions of an operator of length $\ell$ 
(i.e. a trace of a product of $L$ adjoint fields) will have a leading order $N^{\ell}$. That means an operator of 
length $\ell$ should be thought of as a quantity of order $N^{\frac{\ell}{2}}$. The same normalization works for 
multi-trace operators, while two-point functions between operators with a different number of traces have sub-leading powers of $N$.

If we normalize operators of length $\ell$ by a factor of $N^{-\frac{\ell}{2}}$, the planar contributions to the OPE will give 
the identity operator at order $1$ and single-trace operators at order $N^{-1}$. This is consistent with $N^{-1}$ being the 
closed string coupling of an holographic dual description of the chiral algebra. \footnote{The combinatorics is easily understood:
when computing the OPE of two single-trace operators of lengths $\ell_1$ and $\ell_2$, 
we need to contract at least a pair of fields. This will give a single-trace operator of length $\ell_1 + \ell_2 -2$ 
and produce no extra factors of $N$. Contracting $k$ pairs of fields in a planar way will produce
a single-trace operator of length $\ell_1 + \ell_2 -2k$ and produce $k-1$ extra factors of $N$.
Non-planar contractions will have sub-leading powers of $N$, and will also yield operators which are no longer single-trace.}
For later convenience we will not include any powers of $N$ in the normalization of the single trace operators. 

The planar OPE is linear in the single-trace operators, which means that the 
Fourier modes of the chiral algebra form a Lie algebra in the planar limit.
It is important to remember, however, that the non-singular part of the OPE of two single-trace operators is 
dominated by double-trace operators, which are the terms with no Wick contractions.  

Correspondingly, the decomposition of the space of operators into single-traces, double-traces,
etc. does not play well with the full vertex operator algebra structure: the action of Fourier modes on
the space of states or space of local operators typically adds or removes traces. 

\subsection{Large $N$ single-trace operators}

In the large $N$ limit, the BRST reduction can be executed directly on single-trace operators. 
The full algebra will consist of polynomials in the BRST-closed single-trace operators.
In \cite{Beem:2013sza} it is conjectured that the single-trace BRST cohomology is generated by symmetrized traces
\begin{equation}
A^{(n)} = \Tr Z^{(i_1} Z^{i_2} \cdots Z^{i_n)}
\end{equation}
and their global super-conformal partners
\begin{align}
B^{(n)} &= \Tr b Z^{(i_1} Z^{i_2} \cdots Z^{i_n)} \cr
C^{(n)} &= \Tr \partial c Z^{(i_1} Z^{i_2} \cdots Z^{i_n)} \cr
D^{(n)} &= \frac12 \epsilon_{ij} \Tr \partial Z^{i} Z^{(j} \cdots Z^{i_n)} + \cdots
\end{align}
where the ellipsis omits terms involving the ghosts. 
This conjecture can be readily tested at the level of the character of the chiral algebra:  see Appendix \ref{app:index}.  Further, in Appendix \ref{app:coho} we give a proof of this conjecture, by using spectral sequences to reduce it to some classic results in homological algebra \cite{Tsy83,LodQui84}.

\subsection{Adding Flavor}
\label{subsection_matter}
The choice of matter representation in an ${\mathcal N}=2$ gauge theory must be carefully tuned in order for the theory to 
be super-conformal: the overall beta function receives contributions of opposite sign (proportional to the second Casimir invariant) 
from the gauge fields and the matter fields, which must cancel. Precisely the same condition is required in order to avoid a BRST anomaly
in the construction of the corresponding chiral algebra.

There is an interesting loophole: the contribution of some matter fields to the beta function will switch sign if we switch their Grassmann
parity. This is a somewhat unphysical choice, as it breaks unitarity of the physical gauge theory. It is perfectly fine, though, at the level of the chiral algebra.

In particular, we may consider a chiral algebra $\mathcal{A}^{k|k}_N$ where we take the $U(N)$ BRST reduction of a set of symplectic bosons in the 
adjoint representation, $k$ sets of symplectic bosons in the fundamental plus anti-fundamental representation, and $k$ sets of complex fermions in the fundamental plus anti-fundamental representation.  
We can denote the adjoint fields again as $Z^i$, the fundamental fields as $J$ and the anti-fundamental fields as $I$.

The resulting chiral algebra is rather interesting. The extra terms in the BRST operator break some cancellations which were necessary for the presence in $\mathcal{A}_N$ 
of cohomology in non-zero ghost number. At large $N$, half of the fermionic super-conformal generators are gone from the cohomology \footnote{Physically, this is closely related to the 
fact that the four-dimensional gauge theory has only ${\mathcal N}=2$ supersymmetry. }. The flip side of this disappearance is that one of the naive gauge-invariant $\mathfrak{u}(k|k)$ 
affine currents $IJ$ are not in the BRST cohomology, so that $\mathcal{A}^{k|k}_N$ has a $\mathfrak{pgl}(k|k)$ current algebra\footnote{This discussion is valid in the planar limit. At order $1/N$ the other fermionic super-conformal generators also cancel and we have a $\mf{psl}(k \mid k)$ current algebra.}.   

In the large $N$ limit, the gauge-invariant operators organize themselves into polynomials of closed string operators and ``open string'' operators,
i.e. matrix products of adjoint fields sandwiched between an anti-fundamental and a fundamental field. 

An open string operator with $\ell-1$ adjoint fields has a two-point function of order $N^\ell$ and should be thought of as an object of 
order $N^{\frac{\ell}{2}}$. The OPE of open string operators of lengths $\ell_1$ and $\ell_2$ gives open string operators of length 
$\ell_1 + \ell_2 - 2k-1$ at order $N^k$ and closed string operators of length 
$\ell_1 + \ell_2 - 2k-2$ at order $N^k$. The latter contribution to the OPE is thus sub-leading and open string operators 
have OPE which close to open string operators at the leading order in $N$. The OPE of an open and a closed string operator 
is dominated by open string operators. 

 There are terms in the BRST operator which mix the open and closed string operators. Let us first describe the BRST cohomology where we drop these terms.   
 
 The BRST cohomology of closed string operators 
is generated, as before, by the bosonic $A^{(n)}$ and $D^{(n)}$ towers, and the fermionic $B^{(n)}$ and $C^{(n)}$ towers. The BRST cohomology of open string operators is generated by 
\begin{equation}
E_\mathfrak{t}^{(n)} = \Tr \mathfrak{t}\, I  Z^{(i_1} Z^{i_2} \cdots Z^{i_n)} J
\end{equation}
where $\mathfrak{t}$ is a constant element of the $\mathfrak{gl}(k|k)$ current algebra. We will also write $E_{ij}^{(n)}$ where $i,j$ run over a basis of $\C^{k \mid k}$. 

The operators $E_{ij}^{(n)}$ transform in the adjoint representation of $GL(k \mid k)$, and representation of spin $n/2$ of $SU(2)_R$, and are of spin $n/2+1/2$.  

This description has neglected the term in the BRST operator which mixes open and closed string states. The relevant term in the BRST current is $\oint \op{Tr} I \c J $.  As we have seen, at leading order in the $1/N$ expansion, the OPE between $\op{Tr} (I\c J) $ and an open-string operator is dominated by the term which has a single Wick contraction between open-string fields, yielding again an open-string operator.  This term we have already accounted for in our computation of open-string BRST cohomology.

The OPE of $\op{Tr}( I \c J)$ with a closed-string operator is dominated by the term where we Wick contract the closed-string field $\c$ to yield an open-string field.  This can only happen if the closed-string operator contains a $\b$-ghost, and so is an element of the $B^{(n)}$ or $D^{(n)}$ towers. Applying this term in the BRST operator to an operator in the $B^{(n)}$ or $D^{(n)}$ towers will yield a $GL(k \mid k)$ invariant open-string operator, which is of the form $E^{(n)}_{\op{Id}}$.  Such operators are bosonic. We conclude that this term in the BRST operator only yields a non-zero anser when applied to $B^{(n)}$.  Since the operator $B^{(n)}$ contains a single $\b$-ghost, and has no space-time derivatives, we find
\begin{equation} 
	Q_{BRST} B^{(n)} = E^{(n)}_{\op{Id}}. 
\end{equation}
Thus, the identity component of the $E^{(n)}$ tower cancels with the the $B^{(n)}$ tower.  

It is interesting to note that beyond the planar limit, there is a further cancellation.  The application of the BRST operator $\op{Tr} \oint I \c J$ to $E^{(n)}_{\mf{t}}$ has a sub-leading term which involves two Wick contractions of matter fields $I,J$. The result of this Wick contraction is the fermionic tower $C^{(n)}$, times the trace of $\mf{t}$: 
\begin{equation} 
	Q_{BRST} E_{\mf{t}}^{(n)} = N^{-1} \op{Tr}(\mf{t}) C^{(n)} . 
\end{equation}
Thus, beyond the planar limit, both the towers $B^{(n)}$, $C^{(n)}$, whose lowest-lying states are super-conformal generators, cancel with the open-string operators; and the open-string current algebra is reduced  \footnote{The case of $k=1$ is somewhat special. The resulting algebra is suspiciously similar to $\mc{A}_N$, as in this case the $E$-tower of oprators has the same quantum numbers as the $B$- and $C$-towers in the $\mc{A}_N$ algebra. This fact has no obvious four-dimensional interpretation, but it may have an interesting physical interpretation if the chiral algebra is constructed as a boundary vertex algebra for a three-dimensional gauge theory, as in \cite{Costello:2018fnz}, as an IR mirror symmetry. We will see a similar phenomenon in the B-model dual.} to $\mf{psl}(k \mid k)$.

We will show that the $\mathcal{A}^{k|k}_N$ chiral algebra is dual to B-model topological strings on the deformed conifold, in the presence of $k$ space-filling branes and $k$ space-filling ghost branes. This is a topological analogue of standard holographic statements involving a D3-D7 system \cite{Karch:2002sh}.

\subsection{The affine quivers}
A simple variant of the standard holographic duality involves 4d ${\mathcal N}=2$
gauge theories associated to an affine ADE quiver $\Gamma$, with $U(N)$ gauge groups at all nodes \cite{Kachru:1998ys,Lawrence:1998ja}. 
The holographic dual is expected to be AdS$_5 \times$S$^5/\Gamma$, with $\Gamma$ 
embedded in the $SU(2)$ subgroup of the $SO(6)$ isometry group 
corresponding to the decomposition $\mathbb{R}^6 \to \mathbb{C}^2 \oplus \mathbb{R}^2$. 

The chiral algebra $\mathcal{A}^{\Gamma}_N$ is now built from symplectic bosons $Z^i$ associated to individual edges of the quiver. Almost by construction, we can embed $\Gamma$ in the $SU(2)_R$ acting on the $Z^i$, so that single-trace operators of the affine quiver 
are in one-to-one correspondence with operators in the affine one-node quiver which are invariant under the action of $\Gamma$. 
At the leading order in $N^{-1}$, correlation functions and OPE will essentially coincide with those of the corresponding sub-algebra of 
$\mathcal{A}_{N'}$ for an appropriate $N'$ \cite{Bershadsky:1998cb}. Dually, this corresponds to the statement that an orbifold operation 
does not affect tree-level topological string correlation functions. This supports our proposal of holographic duality with 
the B-model on SL$_2(\C)/\Gamma$.  

We can also add flavor to the affine quiver, with a small twist: we can vary the ranks of the nodes to some $N + n_a$ 
while at the same time adding $(k_a+m_a|k_a)$ flavors at each node, as long as the $n_a$ and $m_a$ mismatches 
balance out, i.e. $\vec m= C_\Gamma \cdot \vec n$ with $C_\Gamma$ being the Cartan matrix.   

On the holographic side, addding flavour amounts to intoducing certain space-filling $B$-branes on $SL_2(C) / \Gamma$. Let us focus on type $A$ quivers, so $\Gamma = \Z/N$.  Then each node of the affine quiver corresponds to a representation of $\Z/N$, and so to a flat $U(1)$ bundle on $SL_2(\C)/(\Z/N)$.  Adding flavour to a node amounts to introducing a space-filling brane of rank $(k_a +m_a \mid k_a)$ twisted by this flat $U(1)$ bundle.  We expect that varying the rank of a node to $N+n_a$ changes the back-reacted geometry to a non-commutative deformation of $SL_2(\C) / (\Z/N)$.

We will leave the detailed analysis of this class of models to future work.

\section{A quick review of B-model topological strings}
The $B$-model topological string is defined on a Calabi-Yau $3$-fold $X$.  It has the feature that there are no world-sheet instantons, so that the entire topological string theory can be captured by a field theory on the space-time manifold $X$.  We will use exclusively this space-time description, and will not discuss the world-sheet.

Let us first describe the closed-string field theory living on the space-time manifold. This is the Kodaira-Spencer theory of gravity introduced in \cite{Bershadsky:1993ta,Bershadsky:1993cx}.   The fields of this theory are expressed in terms of poly-vector fields on $X$, which we now introduce.

We let
\begin{equation} 
	\PV^{j,i}(X) = \Omega^{0,i}(X, \wedge^j T X) 
\end{equation}
be the space of $(0,i)$ forms with coefficients in the $j$th exterior power of the holomorphic tangent bundle of $X$.  We let 
\begin{equation} 
	\PV^{\ast,\ast}(X) = \oplus_{i,j} 	\PV^{j,i}(X)  
\end{equation}
where we place $\PV^{j,i}(X)$ in cohomological degree $i+j$.  Wedging of Dolbeault forms and of sections of the exterior powers of $T X$ makes $\PV^{\ast,\ast}(X)$ into a graded-commutative algebra.

It is equipped with several additional operations. The first is the $\dbar$-operator
\begin{equation} 
	\dbar : \PV^{j,i}(X) \to \PV^{j, i+1}(X) 
\end{equation}
which is the usual $\dbar$ operator on Dolbeault forms with coefficients in a holomorphic vector bundle. The operator $\dbar$ is a derivation for the wedge product on $\PV^{\ast,\ast}(X)$. 

Next, we have an operator 
\begin{equation} 
	\partial : \PV^{j,i}(X) \to \PV^{j-1,i}(X). 
\end{equation}
This is defined by observing that there is an isomorphism
\begin{align} 
	\PV^{j,i}(X) & \iso \Omega^{3-j,i}(X) \\
	\alpha & \mapsto \alpha \vee \Omega_X
\end{align}
where $\Omega_X$ is the holomorphic volume form.  Using this isomorphism, the $\partial$ operator on the de Rham complex becomes the operator  we call $\partial$ acting on polyvector fields.

In local coordinates, the algebra of polyvector fields can be written as 
\begin{equation} 
	\PV^{\ast,\ast}(\C^3) = \cinfty(\C^3)[\eta^i, \d \zbar_i].  
\end{equation}
It is obtained by adjoining to the ring $\cinfty(\C^3)$ of smooth functions on $\C^3$ odd variables $\d \zbar_i$ and $\eta^i$, both of ghost number $1$.  These odd variables both live in the fundamental of $SU(3)$.  We have denoted the elements of $\PV^{0,1}$ by $\eta^i$ rather than by $\partial_{z_i}$ so as not to cause confusion. 

The operators $\dbar$ and $\partial$ are given by the expressions
\begin{align} 
	\dbar &= \sum \d \zbar_i \partial_{\zbar_i} \\
	\partial &= \sum\frac{\partial}{\partial \eta^i } \partial_{z_i}   
\end{align}
Because the isomorphism
\begin{equation} 
	\PV^{\ast,\ast}(X) \iso \Omega^{\ast,3-\ast}(X) 
\end{equation}
does not respect the algebra structure present on both sides, the operator $\partial$ is not a derivation for the wedge product on $\PV^{\ast,\ast}(X)$. Instead, we have the identity
\begin{equation} 
	\partial(\alpha \wedge \beta) = (\partial \alpha) \wedge \beta + (-1)^{\abs{\alpha}} \alpha \wedge \partial \beta + \{\alpha,\beta\} 
\end{equation}
where $\{-,-\}$ is an operation known as the Schouten bracket.   

In local coordinates, if $\alpha,\beta \in\cinfty(\C^3)[\eta^i, \d \zbar_i]$, we have
\begin{equation} 
	\{\alpha,\beta\} = (\partial_{\eta^i} \alpha)(\partial_{z_i} \beta) + (-1)^{\abs{\alpha}} (\partial_{z_i} \alpha) (\partial_{\eta^i} \beta).    
\end{equation}

The final operation we need on polyvector fields is the integration map.  This is defined as the composition
\begin{equation} 
	\int : \PV^{3,3}(X) \xto{-\vee \Omega_X} \Omega^{0,3}(X) \xto{- \wedge \Omega_X} \Omega^{3,3}(X) \xto{\int} \C. 
\end{equation}
We extend this integration map to all polyvector fields, with the convention that it is zero except on $\PV^{3,3}(X)$.  In local coordinates, this integration map is defined by integrating the function that is the coefficient of $\d \zbar_1 \d \zbar_2 \d \zbar_3 \eta^1 \eta^2 \eta^3$.

The Kodaira-Spencer theory on a Calabi-Yau three-fold $X$ has fields
\begin{equation} 
	\op{Ker} \partial \subset \oplus \PV^{i,j}(X)[2-i-j]. 
\end{equation}
The symbol $[2-i-j]$ indicates that elements of $\PV^{i,j}(X)$ are placed in ghost number $i+j-2$.  The fields are those polyvector fields which are killed by the operator $\partial$.

This desribes the fields in the BV formalism, including antifields, ghosts, etc.  The fields of ghost number zero consist of those elements in $\PV^{0,2}(X)$, $\PV^{1,1}(X)$ and $\PV^{2,0}(X)$ which are in the kernel of the operator $\partial$.

Elements of $\PV^{1,1}(X)$ are Beltrami differentials which describe deformations of $X$ as an almost complex manifold. Beltrami differentials which are $\dbar$-closed correspond to integrable deformations of the complex structure.  Those $\dbar$-closed Beltrami differentials which are in the kernel of $\partial$ correspond to deformations of $X$ as a complex manifold equipped with a holomorphic volume form. 

Fields in $\PV^{2,0}(X)$ are holomorphic Poisson tensors, which describe deformations of $X$ into a non-commutative complex manifold.  Elements of $\PV^{0,2}(X)$ are interpreted as deformations of the trivial gerbe on $X$.

The Lagrangian is \cite{Bershadsky:1993ta}
\begin{equation} 
	\tfrac{1}{2}\int_X  (\partial^{-1} \alpha) \dbar \alpha + \tfrac{1}{6} \int_X \alpha^3. 
\end{equation}
The equations of motion are
$$
\dbar \alpha + \tfrac{1}{2} \{\alpha,\alpha\} = 0. 
$$
For instance, if $\alpha \in \PV^{1,1}(X)$, the equations of motion tell us that $\alpha$ gives an infinitesimal deformation of $X$ as a Calabi-Yau manifold.   

It is unfortunate that the fields of the Kodaira-Spencer theory are required to satisfy the differential equation $\partial \alpha = 0$, and that the action has a kinetic term which contains $\partial^{-1}$.  However, the theory has a perfectly nice propagator which allows one to analyze the theory in perturbation theory.  

The propagator is defined as follows. Choose a K\"ahler metric on $X$. Let $\tr$ be the corresponding Hodge-de Rham Laplacian on $\PV^{i,j}(X)$, defined by $\tr = [\dbar, \dbar^\ast]$.  Define a Green's kernel
\begin{equation} 
	G(z,z') \in \PV^{3,3}(X \times X) 
\end{equation}
by asking that, for all $\beta \in \PV^{i,j}(X)$, we have
\begin{equation} 
	\tr_{z} \int_{z'} G(z,z') \wedge \beta(z')  = \beta(z) - \Pi \beta(z) 
\end{equation}
where $\Pi$ is the projection onto the harmonic polyvector fields.  
Then, the propagator is
\begin{equation} 
	P = \partial_z \dbar^\ast_z G(z,z') \in \PV^{2,2}(X \times X). 
\end{equation}
Using the facts that $\partial^2 = 0$ and $(\partial_z + \partial_{z'})G(z,z') = 0$, we see that $\partial_z P = 0$, $\partial_{z'} P = 0$.  

In local coordinates $z_i$ on flat space, the propagator is
\begin{multline} 
	P = \frac{36}{\pi^3} \eps_{ijk} \eps_{abc} (\d \zbar_i - \d \zbar'_i)(\d \zbar_j - \d \zbar'_j)(\zbar_k - \zbar'_k)(\eta_a - \eta'_a) 	(\eta_b - \eta'_b) (\zbar_c - \zbar'_c) \norm{z - z'}^{-8}.  
\end{multline}
Here, as before, we are viewing the space of polyvector fields as being generated over $\cinfty(\C^3)$ by the odd variables $\eta_i$, $\d \zbar_i$. The variable $\eta_i$ corresponds to $\partial_{z_i}$. 

The theory can be analyzed in perturbation theory using this propagator and the $\alpha^3$ interaction. Using a slight variant of Kodaira-Spencer theory, in which the requirement that the field satisfies $\partial \alpha = 0$ is imposed homologically, perturbation theory was analyzed in \cite{CosLi15, Costello:2016mgj}. There, it was shown that despite being non-renormalizable, all counter-terms for the theory are fixed at the quantum level by the quantum master equation and the requirement that a consistent coupling to holomorphic Chern-Simons theory exists. 

\subsection{Coupling to holomorphic Chern-Simons theory}
Kodaira-Spencer theory can be coupled to holomorphic Chern-Simons theory, and this plays an important role in our story.  Holomorphic Chern-Simons theory for the group $\mf{gl}_K$ is the theory on 
$K$ space-filling branes in the topological $B$-model.   The fundamental field of this theory is a partial connection
\begin{equation} 
A \in \Omega^{0,1}(X) \otimes \mf{gl}_K 
 \end{equation}
and the Lagrangian is
\begin{equation} 
\int \Omega_X CS(A) 
 \end{equation}
 where $\Omega_X$ is the holomorphic volume form on $X$ and $CS(A)$ is the Chern-Simons $3$-form.  This action is invariant under gauge transformations which at the infinitesimal level take the form
 \begin{equation} 
 A \mapsto A + \eps \dbar \c + \eps [\c,A]. 
  \end{equation}

Kodaira-Spencer theory can be coupled to holomorphic Chern-Simons theory.  The components 
\begin{align} 
\alpha^{2,0} &\in \PV^{2,0}(X) \\
\alpha^{1,1} & \in \PV^{1,1}(X) \\
\alpha^{0,2} & \in \PV^{0,2}(X) 
 \end{align}
are, to leading order, coupled to the fields of holomorphic Chern-Simons theory by the terms
\begin{align} 
& \tfrac{1}{3} \Omega_X \alpha^{2,0}_{ij} \op{Tr} A \partial_{z_i} A \partial_{z_j} A \\
& \tfrac{1}{2}\Omega_X \alpha^{1,1}_{i j } \op{Tr} A \d \zbar_i \partial_{z_j} A \\
& \Omega_X \alpha^{0,2}_{ij}\op{Tr} A \d \zbar_i \d \zbar_j \label{formulae_coupled} 
 \end{align}
There are complicated higher-order terms which involve both $\alpha$ and the open-string fields, but we will not need them for our analysis.

More generally, one can consider $K$ branes and $L$ anti-branes, giving rise to holomorphic Chern-Simons for the supergroup $\mf{gl}(K \mid L)$.  It turns out that the only situation in which the coupled holomorphic Chern-Simons theory and Kodaira-Spencer theory can be consistent at the quantum level is when $K = L$.

Let us explain why this is the case. A standard descent argument tells us that anomalies\footnote{For a theory like holomorphic Chern-Simons, not all anomalies can be described in this way. This does, however, describe all anomalies that can appear at one-loop.} for a gauge theory in $6$ real dimensions can be written as quartic polynomials in the field strength.  The anomaly to quantizing holomorphic Chern-Simons theory on flat space is of the form 
\begin{equation} 
	\op{Tr}_{Ad} F^4 =	\sum_{r+s = 4} (-1)^r \op{Tr}_{Fun} F^r \op{Tr}_{Fun} F^s . 
\end{equation}
On the right hand side, traces are taken in the fundamental representation.

It was shown in \cite{CosLi15} that a version of the Green-Schwartz mechanism for the topological string cancels all terms in the anomaly except that from $r = 0$ or $s = 0$. This term is $(K-L) \op{Tr}_{Fun} F^4$, which only cancels if $K = L$.   

Because of this, we will focus on the case when we have the same number of space-filling branes and anti-branes. 

\subsection{An alternative formulation of Kodaira-Spencer theory, and a hidden $SL_2(\C)$ symmetry}
\label{sec:KS_alternative}
The fundamental fields of Kodaira-Spencer theory satisfy the differential equation $\partial \alpha = 0$.  For the field $\alpha^{2,0}$, this implies that locally there is some $\gamma \in \PV^{3,0}(X)$ such that $\alpha^{2,0} = \partial \gamma$. Further, $\gamma$ is unique up to the addition of an anti-holomorphic function on $X$; and the equations of motion for $\alpha^{2,0}$ imply that this anti-holomorphic function must be constant. \footnote{We can easily convince ourselves that 
such constant shifts for $\gamma$ indeed happen: a topological brane of complex co-dimension $1$ will source an 
$\alpha^{2,0}$ with a first order pole at the location of the brane of quantized residue. This gives $\gamma$ a logarithmic monodromy.} 

We conclude that we can take our fundamental fields to be
\footnote{The reader may be concerned about the fact that $\gamma$ is defined only locally, and only up to some constant shifts. 
Within perturbation theory, this is definitely not a concern and one can readily verify that replacing $\alpha^{2,0}$ 
with $\gamma$ will not change the propagator. Beyond perturbation theory, we may observe that already in the original KS theory  the field $\alpha^{0,2}$ has a suspicious nature: it controls a gerbe, or B-field, which is not a true 2-form, but rather a 2-form $U(1)$ connection.} $\gamma, \alpha^{1,1}, \alpha^{0,2}$ . In this presentation, the action takes the form
\begin{equation} 
 \tfrac{1}{2}\int_X  (\partial^{-1} \alpha^{1,1}) \dbar \alpha^{1,1} + \tfrac{1}{6} \int_X (\alpha^{1,1})^3
 + \int_X \gamma \dbar \alpha^{0,2} + \int_X \partial(\gamma) \alpha^{1,1} \alpha^{0,2}. \label{lagrangian_gamma} 
 \end{equation}
Recall that in the BV formalism, the field $\alpha^{0,2}$ is the $(0,2)$ component of a field $\alpha^{0,\ast}  \in \Omega^{0,\ast}(X)[2]$.  Similarly, $\gamma$ -- which we view as an element of $\Omega^{0,0}(X)$ -- is the $(0,0)$ component of a field which we call $\gamma^{0,\ast} \in \Omega^{0,\ast}(X)$.  The Lagrangian \eqref{lagrangian_gamma} has a $U(1)$ action under which $\gamma$ has charge $1$ and $\alpha^{0,2}$ has charge $-1$. The corresponding multiplets $\gamma^{0,\ast}$, $\alpha^{0,\ast}$ also have charges $1$ and $-1$ under this $U(1)$ action.  

We can use this $U(1)$ action to twist: we will shift the ghost number of the fields, and the fermion number, by the $U(1)$ charge.  If we do this, we find that the fields of ghost number zero become $\alpha^{1,1}$, as before, together with two fermionic $(0,1)$ forms, $\gamma^{0,1}$ and $\alpha^{0,1}$.  The Lagrangian is
\begin{equation} 
 \tfrac{1}{2}\int_X  (\partial^{-1} \alpha^{1,1}) \dbar \alpha^{1,1} + \tfrac{1}{6} \int_X (\alpha^{1,1})^3
 + \int_X \Omega_X \gamma^{0,1} \dbar \alpha^{0,1} + \int_X \Omega_X \gamma^{0,1} \alpha^{1,1}_{i,j} \d \zbar_i \partial_{z_j} \alpha^{0,1}.\label{lagrangian_psu} 
 \end{equation}
We can view $\alpha^{0,1}$ and $\gamma^{0,1}$ as together being a gauge field $A$ for holomorphic Chern-Simons theory for the Abelian superalgebra $\Pi \C^2$. 

The field $\alpha^{1,1}$ is the fundamental field of a theory described in \cite{CosLi15} as $(1,0)$ Kodaira-Spencer theory\footnote{The terminology is because the field content of $(1,0)$ Kodaira-Spencer theory is related to that of a $(1,0)$ tensor multiplet in $6$ dimensions, whereas ordinary Kodaira-Spencer theory is related to a $(2,0)$ tensor multiplet.} .  This is simply the theory obtained by dropping the fields in $\PV^{0,2}$ and $\PV^{2,0}$ from usual Kodaira-Spencer theory. 

What we have shown is that ordinary Kodaira-Spencer theory is equivalent to $(1,0)$ Kodaira-Spencer theory coupled to holomorphic Chern-Simons theory for the Abelian Lie algebra $\Pi \C^2$. This Lie algebra is purely fermionic, and has inner product given by the symplectic form on $\C^2$.  (One can also interpret holomorphic Chern-Simons for the Lie algebra $\Pi \C^2$ as being holomorphic Rozansky-Witten theory with target the holomorphic symplectic manifold $\C^2$).   

In this formulation, a little care is need with the gauge symmetry.  There are gauge transformations $\alpha^{1,0}$, $\gamma^{0,0}$ and $\alpha^{0,0}$ (the latter two of which are fermionic) under which the fields transform as 
\begin{align} 
	\delta \gamma^{0,1} &= \dbar_{\alpha^{1,1}}  \gamma^{0,0} + \alpha^{1,0}_{z_i} \partial_{z_i}  \gamma^{0,1} . \\
	\delta \alpha^{0,1} &= \dbar_{\alpha^{1,1}}  \alpha^{0,0} + \alpha^{1,0}_{z_i} \partial_{z_i}  \alpha^{0,1} . \\
	\delta \alpha^{1,1} &= \dbar_{\alpha^{1,1}} \alpha^{1,0} + \eps_{ijk} \partial_{z_j}  \alpha^{0,1} \partial_{z_k}  \gamma^{0,0}\partial_{z_i} -    \eps_{ijk} \partial_{z_j}  \alpha^{0,0} \partial_{z_k}  \gamma^{0,1}\partial_{z_i}    
	\label{eqn_gauge_transformations_KS}
\end{align}
These gauge symmetries are written in local coordinates in which the holomorphic volume form is $\d z_1 \d z_2 \d z_3$.  These formulae imply that the generators of the gauge transformations $\alpha^{0,0}$ and $\gamma^{0,0}$ must have non-trivial commutation relations, which yield the gauge transformation in $\alpha^{1,0}$.  The formula is
\begin{equation} 
	[\alpha^{0,0}, \gamma^{0,0} ] = \eps_{ijk} \partial_{z_j} \alpha^{0,0} \partial_{z_k} \gamma^{0,0} \partial_{z_i}
\label{eqn_KS_gauge_commutator}
\end{equation}
where the right hand side is viewed as a polyvector field of type $(1,0)$ on $SL_2(\C)$.

This formulation exhibits a symmetry that was hidden in the usual formulation. The algebra $\Pi \C^2$ has an outer action of $SL_2(\C)$.  This action preserves the Lagrangian \eqref{lagrangian_psu}, and permutes the fields we denoted $\alpha^{0,1}$ and $\gamma^{0,1}$. \footnote{Non-perturbatively, we expect these fields to be both defined up to integral shifts and this $SL_2(\C)$ should be broken to $SL_2(\Z)$ duality. If we view the fields $\alpha^{0,\ast}$, $\gamma^{0,\ast}$ as describing holomorphic Rozansky-Witten theory for $\C^2$, we expect that the non-perturbative formulation will give us holomorphic Rozansky-Witten theory with target $\C^\times \times \C^\times$. This has an evident $SL_2(\Z)$ symmetry.  If we embed the B-model into Type IIB string theory in a standard way, this should coincide with the $SL_2(\Z)$ duality group of the physical theory. See also \cite{Nekrasov:2004js,Dijkgraaf:2007sx}.}

\subsection{Coupling to holomorphic Chern-Simons revisited}
\label{secn_cs_coupling}
Let us now consider how this new formulation of Kodaira-Spencer theory couples to holomorphic Chern-Simons theory. One can use  (\ref{formulae_coupled}) to derive the formula for the Lagrangian.  We will write the Lagrangian in the BV formalism.  Let $A^{0,\ast} \in \Omega^{0,\ast}(X) \otimes \mf{gl}_{K \mid K}[1]$ denote the field for holomorphic Chern-Simons theory in the BV formalism, and as above we have the fields $\gamma^{0,\ast}$ and $\alpha^{0,\ast}$ which are the fields for $\Pi \C^2$ holomorphic Chern-Simons.  At the quadratic level, these are coupled by the Lagrangian
\begin{equation} 
	\lambda^{-1} \int \alpha^{0,\ast} \op{Tr} A^{0,\ast}  \label{quadratic_term}  
\end{equation}
where $\lambda$ is the string coupling constant.

We will need to be careful about the powers of the string coupling constant $\lambda$ that appear, in order to match with the large $N$ chiral algebra in the planar limit.  Any term in the Lagrangian involving only the closed-string fields $\alpha^{0,\ast}$, $\gamma^{0,\ast}$, $\alpha^{1,\ast}$ will be accompanied by $\lambda^{-2}$.  Any term which involves open-string fields $A^{0,\ast}$, or a mixture of open-string and closed-string fields, will be accompanied by $\lambda^{-1}$.  

Because of this, the BV odd symplectic form for closed string fields will have a coefficient of $\lambda^{-2}$, whereas the open-string odd symplectic form  will involve $\lambda^{-1}$.   

Let us compute the linearized BRST operator $Q$ on the space of fields. In the BV formalism, this has the feature that $\tfrac{1}{2} \ip{\phi ,Q \phi}$ is the quadratic term in the action, where $\ip{-,-}$ is the BV odd symplectic pairing.  This is just the statement that the quadratic term in the action functional is the Hamiltonian function which generates the linear operator $Q$. 

The operator $Q$ is the $\dbar$ operator on each multiplet $\alpha^{1,\ast}$, $A^{0,\ast}$, $\alpha^{0,\ast}$, $\gamma^{0,\ast}$; together with two more terms, coming from the quadratic term \eqref{quadratic_term} in the Lagrangian.  The extra two terms are  

This quadratic term contributes a differential where
\begin{align} 
	Q \alpha^{0,\ast} &= \alpha^{0,\ast} \op{Id} \in \Omega^{0,\ast}(X) \otimes \mf{gl}(k \mid k) \\
        Q A^{0,\ast} &= \lambda \op{Tr} A^{0,\ast}  . 
\end{align} 
The first term maps a $\alpha^{0,\ast}$ closed string field to an open string fields, and the second an open string field to a $\gamma^{0,\ast}$ to closed string field. The powers of $\lambda$ are necessary so that these operators contribute the term \eqref{quadratic_term}, bearing in mind that the BV symplectic form on closed string fields has a coefficient of $\lambda^{-2}$ and on open string fields has a coefficient of $\lambda^{-1}$.

We can interpret these new terms in the BRST operator as saying that we are considering holomorphic Chern-Simons for a super Lie algebra $\Pi \C^2 \oplus \mf{gl}(K \mid K)$ with a differential.  Let us take a basis $a_i$ for $\Pi \C^2$ so that the field $\alpha^{0,\ast}$ corresponds to $a_1$ and $\gamma^{0,\ast}$ corresponds to $a_2$, and let $M$ denote an element of $\mf{gl}(K \mid K)$. The differential on  $\Pi \C^2 \oplus \mf{gl}(K \mid K)$ takes the form
\begin{align} 
	Q ( r_i a_i  )&= \op{Id} r_1 \in \mf{gl}(K \mid K) \\
	Q ( \sum m_{ij} M_{ij} ) &= \lambda (\sum \pm m_{ii} ) a_2 
\end{align}
If we set the string coupling constant $\lambda$ to zero, then the cohomology of this differential Lie super-algebra is $\mf{pgl}(K \mid K) \oplus \Pi \C$, where $\mf{pgl}(K \mid K)$ is the quotient of $\mf{gl}(K \mid K)$ by the identity matrix.   This implies that holomorphic Chern-Simons for this differential super-algebra is equivalent to holomorphic Chern-Simons for $\mf{pgl}(K \mid K)\oplus \C$.  The equivalence is realized by integrating out the fields that cancel by the new BRST operator. 

We conclude that, at $\lambda = 0$, the Kodaira-Spencer theory coupled to $\mf{gl}(K \mid K)$ holomorphic Chern-Simons theory is equivalent to $(1,0)$ Kodaira-Spencer theory, coupled to $\mf{pgl}(K \mid K)$, holomorphic Chern-Simons theory, together with a single additional field $\gamma^{0,\ast}$. There is differential of order $\lambda$ which cancels $\gamma^{0,\ast}$ with the trace of $A^{0,\ast} \in \Omega^{0,\ast}(X, \mf{pgl}(K \mid K))$.

There are further terms in the Lagrangian involving $\gamma^{0,\ast}$. The first such term is
\begin{equation} 
	\gamma^{0,\ast} \op{Tr} (\partial A^{0,\ast})^3. 
\end{equation}
This comes from the term involving $\alpha^{2,0}$ in \eqref{formulae_coupled}, recalling that $\partial \gamma^{0,0} = \alpha^{2,0}$.   It turns out that terms such as this of order $>3$ in the fields will not affect our holographic computations \footnote{They will potentially modify the holographic global symmetry algebra into a $L_{\infty}$ algebra, and similarly they may modify the holographic chiral algebra by higher homotopical operations. After taking BRST cohomology, these higher homotopical operations will not be visible.   }.

\section{The B-brane backreaction}

The starting point of our analysis of holography is a stack of $N$ topological B-branes wrapping a complex line $C \equiv \C$ in the $\C^3$ Calabi-Yau manifold. 
We choose coordinates $z,w_1,w_2$ so that the brane wraps the curve $w_i = 0$ and the holomorphic three-form is 
\begin{equation}
\Omega_{\C^3} = \d z \d w_1 \d w_2.
\end{equation} 
The world-volume theory on the stack of branes is precisely the gauged $\beta \gamma$ system which defines the ${\mathcal A}_N$ chiral algebra. 
This is a well known fact \cite{Witten:1992fb}, which we will review in Appendix \ref{app:chiral}.

In this section we aim to compute the back-reaction of such a stack of branes. A brane supported on a complex sub-manifold $C$ generally adds a localized source to the Kodaira-Spencer equation of motion:
\begin{equation} 
	\dbar \alpha + \tfrac{1}{2}\{\alpha,\alpha\} + N \delta_{C} = 0. \label{eqn_source} 
\end{equation}
Here we are treating $\delta_C$ as an element of $\PV^{1,2}(X)$, using the isomorphism between $\PV^{1,2}(X)$ and $\Omega^{2,2}(X)$. 
This equation tells us that the obstruction to $\alpha$ satisfying the Maurer-Cartan equation defining an integrable deformation of complex structure of $X$ is given by $-N\delta_C \in \PV^{1,2}$. We refer the reader to Appendix \ref{app:source} for more details about this. \footnote{There may be more than one solution to equation \eqref{eqn_source}. The moduli of such solutions is simply the moduli of solutions to the equations of motion of KS theory in the presence of the brane.  However, we would expect that, in nice situations, there will be a unique solution to the equation \eqref{eqn_source} which is compatible with the symmetries of the situation and with any boundary conditions present.}

Let us see how this works when $X = \C^3$ and $C= \C$.  We choose coordinates $z,w_1,w_2$ so that the brane wraps the curve $w_i = 0$.     
\begin{lemma} 
	In this situation, the Beltrami differential
	\begin{equation} 
	\beta =	\frac{2N}{8 \pi^2}  \frac{\wbar_1 \d \wbar_2 - \wbar_2 \d \wbar_1 } {\norm{w}^4} \partial_z 
	\end{equation}
	satisfies equation \eqref{eqn_source}. Moreover, this is the unique solution to this equation which in addition :
	\begin{enumerate} 
		\item Satisfies the gauge condition			 
\begin{equation} 
	\dbar^\ast \alpha = 0. 
\end{equation}
\item Is invariant under the $SU(2)_R$ rotating the $w_1,w_2$ complex plane.
\item Is of charge $-2$ under the $U(1)$ action under which the $w_i$ have charge $1$ and $z$ has charge $0$, and is of charge $-1$ under the $U(1)$ action where $z$ has charge $1$ and the $w_i$ have charge $0$.
	\end{enumerate}
\end{lemma}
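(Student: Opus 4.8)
The plan is to verify the equation by direct substitution, exploiting a structural feature of $\beta$ that trivializes the nonlinear term, and then to establish uniqueness by a symmetry-plus-charge reduction followed by an elliptic uniqueness argument.

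First I would record the key simplification. Writing $\beta = f\,\eta_z$ with $\eta_z = \partial_z$ and $f = \frac{2N}{8\pi^2}\frac{\wbar_1\d\wbar_2 - \wbar_2\d\wbar_1}{\norm{w}^4}$ a $(0,1)$-form depending only on $w,\wbar$, the Schouten-bracket formula $\{\alpha,\beta\} = (\partial_{\eta^i}\alpha)(\partial_{z_i}\beta) + (-1)^{\abs\alpha}(\partial_{z_i}\alpha)(\partial_{\eta^i}\beta)$ gives $\{\beta,\beta\}=0$: the only polyvector generator present is $\eta_z$, and its coefficient is $z$-independent, so every term in the bracket vanishes. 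Thus the source equation \eqref{eqn_source} collapses to the linear equation $\dbar\beta = -N\delta_C$. Next I would identify $f$, up to the constant, with the Bochner--Martinelli kernel in the two transverse variables $w_1,w_2$: a short computation shows $\dbar\big(\frac{\wbar_1\d\wbar_2 - \wbar_2\d\wbar_1}{\norm{w}^4}\big)=0$ away from $w=0$, and Stokes' theorem on a small sphere $S^3_\eps\subset\C^2_w$ evaluates the enclosed singularity, producing a multiple of the transverse delta-current $\delta^2(w)\,\d\wbar_1\d\wbar_2$. Since $\delta_C$ corresponds under $\PV^{1,2}\iso\Omega^{2,2}$ to $\delta^2(w)\,\d\wbar_1\d\wbar_2\,\eta_z$, matching residues fixes the prefactor to be exactly $\frac{2N}{8\pi^2}$, establishing existence.

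The three supplementary properties are then quick to check. The gauge condition $\dbar^\ast\beta = 0$ follows because $\beta$ is harmonic away from $C$ (the Bochner--Martinelli kernel is co-closed, and $\dbar^\ast$ only contracts the $\d\wbar$ indices of the $z$-independent coefficient). For the $SU(2)_R$ invariance, note that $\wbar_1\d\wbar_2 - \wbar_2\d\wbar_1 = \eps^{ij}\wbar_i\,\d\wbar_j$ is the unique $SU(2)$-singlet in $\wbar\otimes\d\wbar$, while $\norm{w}^4$ and $\eta_z$ are invariant. The charges are pure bookkeeping: under the first $U(1)$ the numerator carries charge $-2$ and everything else is neutral, while under the second $U(1)$ only $\eta_z=\partial_z$ is charged, with charge $-1$.

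For uniqueness I would argue among Beltrami differentials (fields in $\PV^{1,1}$, which is forced since the source sits in $\PV^{1,2}$) that are $SU(2)_R$-invariant, carry the stated charges, and decay transversally to $C$. Because the only form generators of $\PV^{\ast,\ast}$ are the antiholomorphic $\d\zbar,\d\wbar_i$, enumerating the $SU(2)_R$-singlets of charge $(-2,-1)$ leaves exactly two radial families: the $\eta_z$-family $\beta = g(\norm{w}^2)\,\eps^{ij}\wbar_i\d\wbar_j\,\eta_z$, and an $\eta_w$-family $T = h(\norm{w}^2)\,\d\zbar\,\eps^{ij}\wbar_i\,\eta_{w_j}$. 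Writing a general solution as $\beta + T$, I would decompose \eqref{eqn_source} by its polyvector direction. The crucial points are that $\{T,T\}=0$ (both factors retain $\d\zbar$, so the product vanishes) and that $\{\beta,T\}$ lies entirely in the $\eta_z$-direction, so the $\eta_w$-component of \eqref{eqn_source} is simply $\dbar T = 0$, which a direct computation shows forces $h\equiv 0$. With $T=0$ the remaining $\eta_z$-component is again $\dbar\beta = -N\delta_C$ together with $\dbar^\ast\beta=0$; the co-closed, $SU(2)_R$-invariant, transversally-decaying solution of a Laplace-type equation with prescribed codimension-two residue is unique, namely the Bochner--Martinelli solution found above. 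I expect this uniqueness step to be the main obstacle, in two respects: confirming that the symmetry-and-charge enumeration really exhausts the allowed fields, and checking that the Schouten cross-terms do not couple the two polyvector directions in a way that would spoil the clean decoupling $\dbar T=0$. The existence computation is routine once $\{\beta,\beta\}=0$ is noted, apart from care with the sign and normalization of the delta-current when fixing the constant $\frac{2N}{8\pi^2}$.
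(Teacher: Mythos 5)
Your existence argument is essentially the paper's: you note $\{\beta,\beta\}=0$ because the coefficient of $\partial_z$ is $z$-independent, reduce to the linear equation $\dbar\beta=-N\delta_C$, check closedness away from $w=0$, and fix the normalization by Stokes' theorem on a small $S^3$. The one point you gloss over is why the distributional source is a pure multiple of $\delta_C$ rather than involving normal derivatives of the delta function; the paper rules this out by a scaling-dimension argument, while your identification with the Bochner--Martinelli kernel (whose $\dbar$ is known to be exactly the Dirac current) covers the same point, so this half is fine. Note also that the paper itself does not prove uniqueness at all -- it explicitly declines -- so your uniqueness argument is an attempt to go beyond the paper.

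That attempt, however, has a genuine gap: the enumeration underlying "exactly two radial families" is incomplete. Conditions 2 and 3 of the lemma allow considerably more than your two ans\"atze. First, the coefficient functions may depend on $\abs{z}^2$, since $\abs{z}^2$ is neutral under both $U(1)$'s and $SU(2)_R$-invariant (this dependence is eventually killed by $\dbar$-closedness plus the charge condition, but that must be argued, not assumed). Second, and more seriously, there are additional $SU(2)_R$-invariant tensor structures of charge $(-2,-1)$ that you omit, for example
\begin{equation}
\zbar\, g\!\left(\norm{w}^2,\abs{z}^2\right) \eps^{ij}\, \d \wbar_i\, \partial_{w_j},
\qquad
\zbar\, g\!\left(\norm{w}^2,\abs{z}^2\right) \left(w_i \d \wbar_i\right)\left(\eps^{jk}\wbar_j \partial_{w_k}\right),
\qquad
\zbar\, g\!\left(\norm{w}^2,\abs{z}^2\right)\left(\eps^{ij}\wbar_i \d \wbar_j\right)\left( w_k \partial_{w_k}\right),
\end{equation}
where one uses that $w_i \d\wbar_i$ and $w_k\partial_{w_k}$ are $SU(2)_R$-invariant pairings in addition to the $\eps$-contraction of two anti-fundamental indices. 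Each of these carries charge $-2$ under the first $U(1)$ (two barred/lower-$w$-type indices) and charge $-1$ under the second (from the explicit $\zbar$), so they must be included in any general solution. Consequently your clean decomposition into an $\eta_z$-family and a single $\eta_w$-family, and the resulting decoupled equation $\dbar T=0$, does not cover the general ansatz; the cross-terms in the Schouten bracket and the $\dbar^\ast$ gauge condition would have to be analyzed for this larger family. (You have also quietly added a transverse-decay hypothesis that is not among the lemma's stated conditions; some such condition is indeed needed, but it should be flagged as an assumption.) The computations you do carry out -- $\{T,T\}=0$, $\{\beta,T\}$ lying in the $\partial_z$-direction, and $\dbar T=0$ forcing $h\equiv 0$ -- are correct as far as they go, but they only establish uniqueness within an artificially restricted class.
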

The reason these particular $U(1)$ charge appear is because $\delta_C$, when viewed as an element of $\PV^{1,2}$ instead of $\Omega^{2,2}$, has these $U(1) \times U(1)$ charges.
\begin{proof} 
We note that $\{\beta,\beta\} = 0$, because the coefficient of $\partial_z$ in $\beta$ is independent of $z$.  Therefore equation \ref{eqn_source} reduces to the equation
\begin{equation} 
\dbar \beta = - N \delta_C. 
 \end{equation}
 Let us treat $\beta$ as a $(2,1)$-form
 \begin{equation} 
 \beta' = \beta \vee \Omega_{\C^3} =  \frac{2N  }{8 \pi^2} \d w_1 \d w_2  \frac{\wbar_1 \d \wbar_2 - \wbar_2 \d \wbar_1 }{\norm{w}^4} . 
  \end{equation}
  It is easy to check that $\d \beta' = 0$ away from $w_i = 0$.  Therefore $\d \beta'$ can be expressed in terms of $\delta_C$ and its derivatives. Since $\beta'$ is of dimension $0$ under scaling the $w_i$ and $\br{w}_i$, we find that $\d \beta'$ can not involve any normal derivatives of $\delta_C$.  Therefore, $\d \beta'$ must be some multiple of $\delta_C$.  It remains to calculate which multiple, which is equivalent to calculating
  \begin{equation} 
  \int_{\norm{w} = 1} \beta'. 
   \end{equation}
   When restricted to the $3$-sphere $\norm{w} = 1$, $\beta'$ agrees with
   \begin{equation} 
    \beta''  = \frac{2N  }{8 \pi^2} \d w_1 \d w_2 \left( \wbar_1 \d \wbar_2 - \wbar_2 \d \wbar_1 \right).
    \end{equation}
    By Stoke's theorem, the integral of $\beta''$ over $S^3$ is the same as the integral of 
    \begin{equation} 
    \d \beta'' = \frac{2 N}{8 \pi^2} 2 \d w_1 \d w_2 \d \wbar_1 \d \wbar_2  
     \end{equation}
over the $4$-ball.  Since
\begin{equation} 
\d w_1 \d w_2 \d \wbar_1 \d \wbar_2 = 4 \d x_1 \d x_2 \d x_3 \d x_4 
 \end{equation}
we find
\begin{equation} \int_{\norm{w} \le 1} \d \beta'' = \frac{2 N}{ \pi^2} \op{Vol}(B^4) = N.  
\end{equation}

It is not difficult (or all that important) to verify that there are no other solutions.  

\end{proof}

\subsection{The deformed geometry}
Now we have determined the Beltrami differential sourced by the sourced by the brane wrapping $\C$ inside $\C^3$. This Beltrami differential defines a new complex structure on $\C^3 \setminus \C$. We will see that the new complex structure makes $\C^3 \setminus \C$ into the deformed conifold $SL_2(\C)$.   In what follows we will absorb the factors of $2 \pi$ in the Beltrami differential into the normalization of the holomorphic volume form on $SL_2(\C)$. 

To describe the complex manifold in the deformed complex structure, we need to determine the algebra of holomorphic functions. A function $F(z,\zbar,w_i,\wbar_i)$ is holomorphic in the new complex structure if it satsfies the differential equations
\begin{align} 
 \partial_{\zbar} F &= 0\\ 
	\partial_{\wbar_1} F - 	N \frac{ \wbar_2 } {\norm{w}^4} \partial_z F &= 0 \\
\partial_{\wbar_2} F + 	N  \frac{ \wbar_1 } {\norm{w}^4} \partial_z F &= 0 . 
\end{align}
If $F$ is a polynomial in $w_1,w_2$ then it satisfies these equations and defines a holomorphic function.  There are two other basic solutions to this equation given by
\begin{align} \label{eq:deformation}
	u_1 &= 	w_1 z - N  \frac{\wbar_2}{\norm{w}^2} \\
u_2 &= 	w_2 z + N  \frac{\wbar_1}{\norm{w}^2}.   
\end{align}
All holomorphic functions can be expressed in terms of $u_i,w_i$, and these satisfy the relation
\begin{equation} 
	u_2 w_1-u_1 w_2 = N 
\end{equation}

This is the equation defining the deformed conifold $X_N$ as a submanifold of $\C^4$. Notice that  \ref{eq:deformation} gives a diffeomorphism between the deformed conifold 
and $\C^3 \setminus \C$. We conclude that the deformed conifold is the correct back-reacted geometry. The natural holomorphic three-form on the deformed conifold is 
\begin{equation}
\Omega_{X_N}  |_{w_1 \neq 0} = \frac{\d u_1 \d w_1 \d w_2}{w_1}  \qquad \qquad \Omega_{X_N}  |_{w_2 \neq 0} = \frac{\d u_2 \d w_1 \d w_2}{w_2} 
\end{equation}
This is also the image of $\Omega_{\C^3}$ under the deformation. 

The deformed conifold has a single compact three-cycle, a three-sphere. One can check that the period of $\Omega_{X_N}$ on the compact three-cycle is $N$,
as expected. 

\section{Holography and the deformed conifold}

The deformed conifold has a much larger symmetry group than the original $\C^3 \setminus \C$. If we collect the coordinates into a single matrix
\begin{equation}
g = \begin{pmatrix}w_1 & w_2 \cr u_1 & u_2  \end{pmatrix}
\end{equation}
then the deformed conifold equation is $\det g = N $. We claim that $\Omega_{X_N}$ is invariant under an $SL(2,\C)_L \times SL(2,\C)_R$ 
symmetry acting on $g$ by matrix multiplication from the left and from the right. Indeed, we can write \footnote{For later reference, we record 
another expression which only makes the $SL(2,\C)_R$ symmetry manifest:
\begin{equation}
\Omega_{X_N} = \frac{\left(u_2 \d u_1 - u_1 \d u_2\right) \d w_1 \d w_2}{u_2 w_1- u_1 w_2}
\end{equation}
}
\begin{equation}
\Omega_{X_N} = N \Tr (g^{-1} dg)^3
\end{equation}
The $SL(2,\C)_L$ symmetry group includes the original Lorentz group symmetry along the D-brane worldvolume and enhances it to the full two-dimensional conformal 
group. This is a striking confirmation that we are on the right track: the back-reacted geometry has the correct symmetry enhancement to describe the holographic dual of
a two-dimensional conformal field theory. The $SL(2,\C)_R$ symmetry complexifies holomorphic rotational symmetry in the plane perpendicular to the D-brane world-volume. 

In order to define the Kodaira-Spencer theory on the deformed conifold, we will need to pick a Kahler metric to use in the gauge-fixing condition and the definition of the  
propagator. In order to avoid the risk of anomalies, it would be best to pick a gauge-fixing condition which preserves as many of the desired symmetries as possible. 
There is a very nice choice of K\"ahler form which preserves precisely the $SL(2,\C)_L \times SU(2)_R$ symmetries we need for our holographic duality proposal: 
\begin{equation}
\omega_{X_N} = \Tr (g^{-1} dg)(g^{-1} dg)^\dagger
\end{equation}

The action of $SU(2)_R$ on the deformed conifold has orbits which are a particularly nice family of non-trivial $S^3$ representatives for the basic compact three-cycle. 
In particular, we can look at the representative defined by the equations $u_2 = \bar w_1$, $u_1 = - \bar w_2$, $\norm{w}^2 = N$. 
This is the same as setting $g = N^{1/2}  \gamma$ with $\gamma \in SU(2)$. Then $\Omega_{X_N}$ is proportional to the standard volume form on 
$SU(2)$ and the period can be readily computed.

The quotient $\frac{X_N}{SU(2)_R}$ is precisely hyperbolic space, i.e. global Euclidean $AdS_3$: the projection map $g \to \rho \equiv g g^\dagger$
maps the deformed conifold to the space of Hermitean matrices of fixed positive determinant and positive trace, 
which presents hyperbolic space as a connected component of a space-like hyperboloid in $R^{1,3}$. It is easy to verify that this is precisely the 
$AdS_3$ which is relevant for holography, built from the world-volume coordinate $z$ and the radial direction away from the brane: 
\begin{equation}
\rho = g g^\dagger= \begin{pmatrix}\norm{w}^2 & \bar z \norm{w}^2 \cr z \norm{w}^2 & \frac{N^2}{16 \pi^4} \frac{1}{\norm{w}^2} + \norm{z}^2 \norm{w}^2  \end{pmatrix}
\end{equation}
In Poincare' coordinates $(z, \bar z, y = \norm{w}^{-2})$, the boundary of $AdS_3$ lies at $y \to 0$. 

\subsection{Kaluza-Klein reduction and higher spin theories}
In the remainder of the paper, we will do calculations in the full three-dimensional complex geometry. 
In this section we will briefly  discuss an alternative, physically instructive perspective: the $S^3$ Kaluza-Klein reduction of the Kodaira-Spencer of holomorphic Chern-Simons theories on the deformed conifold.  We leave a detailed analysis of this approach to future work.  

The result of this reduction will be some theory on AdS$_3$ with an infinite tower of fields of increasing spin. 
In order for our holographic proposal to work, these fields should have a very special Chern-Simons-like action, 
so that appropriate boundary conditions at $y \to 0$ will result in the appearance of a boundary chiral algebra
with a structure compatible with that of ${\mathcal A}_N$ or ${\mathcal A}^{(k|k)}_N$.

Let us first  consider the lowest Kaluza-Klein mode for an holomorphic $\mathfrak{psu}(k|k)$ Chern-Simons theory.  
We parameterize this mode of the connection as 
\begin{equation}
A = a_i(\rho) \Tr \sigma^i \rho^{-1} \bar \partial \rho = a_i(\rho) \Tr \sigma^i (g^\dagger)^{-1} \bar \partial g^\dagger
\end{equation}
where $\sigma^i$ are the three Pauli matrices. Then flatness of $A$ is equivalent to 
flatness of the AdS$_3$ connection $a^{(0)} = a_i(\rho) \Tr \sigma^i \rho^{-1} d \rho$. The 
holomorphic Chern-Simons action evaluates to the standard Chern-Simons action for $a^{(0)}$,
with level proportional to $N$. 

Standard WZW boundary conditions for Chern-Simons gauge theory 
on AdS$_3$ will result into a $\widehat{\mathfrak{pu}}(k|k)$ boundary chiral current algebra at level $N$.
This is precisely the correct central charge for the simplest open string operator in the chiral algebra, the traceless part of the $IJ$ currents.
The boundary condition sets to zero $i_{\partial_{\bar z}} a^{(0)}$ at the boundary. In terms of the six-dimensional connection 
this sets to zero the component of $A$ along the $SL(2,\C)_L$ generator $\bar w_1 \partial_{\bar u_1} + \bar w_2 \partial_{\bar u_2}$.
We will later propose the full holomorphic generalization of this zero-mode boundary condition.

The analysis of the KK modes of Kodaira-Spencer theory is more challenging, and we will not give all details. We will focus on the $(1,0)$ part\footnote{The $\Pi \C^2$ part of Kodaira-Spencer theory can be 
analyzed as before, leading to $\Pi \C^2$ Chern-Simons theory on AdS$_3$ and tentatively to the chiral algebra generators associated to 
$\Tr b$ and $\Tr \partial c$. These fields are decoupled from the rest of the algebra, though, and it is not obvious they should be included. The answer may depend on non-perturbative considerations which are beyond the scope of this paper.}of the KS theory, whose fundamental field is a Beltrami differential.

We expect that the lowest-lying modes of Kodaira-Spencer are of two types. We can take Beltrami differentials which are proportional to the vector fields $R_a$, generating the $SU(2)_R$ action; or to $L_a$, generating the $SL_2(\R)$ global conformal symmetry. In the first case, we can parameterize a KK mode of the form 
\begin{equation}
\alpha = \alpha^a_i(\rho) R_a \Tr \sigma^i \rho^{-1} \bar \partial \rho 
\end{equation}
The bracket of two such vector fields reduces to the bracket $\{R_a, R_b \} = \epsilon_{abc} R_c$,
as $R_a$ acts trivially on $\rho$. The equations of motion reduce to a standard Chern-Simons action for
an $SU(2)_R$ gauge field and natural boundary conditions would lead to boundary chiral operators matching the 
$\Tr Z^{(i} Z^{j)}$ currents in the chiral algebra. 

A KK mode coming from a Beltrami differential proportional to the vector fields $L_a$ generating global conformal symmetry will be of the form 
\begin{equation}
\alpha = \beta^k_i(\rho) L_k \Tr \sigma^i \rho^{-1} \bar \partial \rho 
\end{equation}
This field should, abstractly, be another copy of $SL_2$ Chern-Simons theory, but with an ``oper'' boundary condition at which the Virasoro algebra lives.  It is tempting to think of these fields as giving a chiral $3$-dimensional gravity theory. We will leave a detailed analysis of the KK reduction to others.

\section{Global symmetry algebras}
\label{sec:global_symmetry_isomorphism}
In this section we will introduce the global symmetry algebras of the large $N$ CFT and of its holographic dual.  In each case, these are infinite-dimensional Lie algebras which act as symmetries of the theory in the planar limit.

We will show, by a combination of algebra and explicit computations of OPEs, that  holographic global symmetry algebra is isomorphic to the global symmetry algebra of the large $N$ chiral algebra.   This provides very strong constraints on the planar two- and three-point functions,  and we will see later that the global symmetry algebra is almost enough to completely fix them. 

\subsection{The algebra of global symmetries of a chiral CFT}
In any two-dimensional chiral CFT, there is a nice associative algebra of \emph{global symmetries}.  This is a sub-algebra of the algebra of modes of the theory,  consisting of all modes which annihilate the vacuum both at the origin and at infinity.   Concretely, for each operator $\mc{O}(z)$
of dimension $\Delta$, the algebra of global symmetries includes the modes $\oint z^k \mc{O}(z) \d z$ for $0 \le k \le 2\Delta -2$. These are the modes with dimension ranging from $1 - \Delta$ to $\Delta - 1$.  In particular, this includes zeromodes of Kac-Moody currents and the global conformal generators inside the Virasoro algebra of modes of the stress tensor.

We call this the algebra of global symmetries because these modes act on local operators in a way which preserves correlation functions.   Suppose we have a collection of local operators $\mc{O}_i$ which we place at points $z_i$ on a cylinder. Then the action of a global symmetry $\alpha$ on a local operator $\mc{O}(z_i)$ is the same as the commutator $[\alpha,\mc{O}(z_i)]$ in the algebra of modes: it is the difference between placing the contour integral defining $\alpha$ to the left or right of $z$.  Then we have
\begin{equation} 
	\lvac \left[\alpha , \mc{O}_1(z_1), \dots, \mc{O}_n(z_n)\right] \rvac = 0  
\end{equation}
because $\alpha$ preserves the vacua at $0$ and $\infty$. The derivation property for $[\alpha,-]$ tells us 
\begin{equation} 
	\lvac \left[\alpha, \mc{O}_1(z_1), \dots, \mc{O}_n(z_n)\right] \rvac = \sum \lvac \mc{O}_1(z_1) \dots [\alpha,\mc{O}_i(z_i)] \dots \mc{O}_n (z_n) \rvac. 
\end{equation}
from which we conclude that $\alpha$ preserves correlation functions. 

The reason global symmetries form an associative algebra is the following.  If $\alpha_1, \alpha_2$ are global symmetries, so that $\lvac \alpha_i = 0$ and $\alpha_i \rvac = 0$, then $\lvac \alpha_1 \alpha_2 = 0$ and $\alpha_1 \alpha_2 \rvac = 0$.  This is a non-unital algebra, because the unit does not annihilate the vacuum.

\subsection{Global symmetries of the large $N$ chiral algebra} 
For a general CFT, the algebra of global symmetries is simply an associative algebra.  It turns out that when we work in the planar limit of the large $N$ CFT, the algebra of global symemtries is the universal enveloping algebra of a Lie algebra.  This is because we can write all modes as a product of the modes of single-trace operators.  The commutator of modes of single-trace operators yields a combination of a single trace operator and the identity operator.  The identity operator can not appear in the commutator of single-trace modes which leave invariant the vacuum at $0$ and $\infty$.    

We will let $\mf{a}_{\infty}$ denote the Lie algebra of single-trace global symmetries.  This turns out to be an infinite-dimensional Lie algebra which extends the global super-conformal algebra inside the ${\mathcal N}=4$ super-Virasoro
algebra.

The global symmetries of the large $N$ chiral algebra are given by the expressions
\begin{align} 
	\oint z^k \d z A^{(n)}(z) & \text{ for } 0 \le k \le n-2 \\ 	
	\oint z^k \d z B^{(n)}(z) & \text{ for } 0 \le k \le n \\
	\oint z^k \d z C^{(n)}(z) & \text{ for } 0 \le k \le n \\ 
	\oint z^k \d z D^{(n)}(z) & \text{ for } 0 \le k \le n+2 .  
\end{align}
The operators coming from the $B,C$ towers are fermionic, and the others are bosonic. 

The most basic bosonic symmetry generators are the $\mathfrak{su}(2)_R$ generators 
defined by the zeromodes of the dimension $1$ currents
\begin{equation}
J^{ij}_0 = \oint \d z \Tr Z^i Z^j(z)
\end{equation}
and the $\mathfrak{su}(2)_L$ global conformal generators
\begin{equation}
L_{-1} = \oint \d z T(z) \qquad L_{0} = \oint \d z z T(z)\qquad L_{1} = \oint \d z z^2 T(z)
\end{equation}

The $A^{(n)}$ operators give bosonic global symmetry generators $J^{(n),m}_r$ transforming in a finite-dimensional representation of 
spin $\frac{n}{2}$ for $\mathfrak{su}(2)_R$ and spin $\frac{n}{2}-1$ for $\mathfrak{su}(2)_L$. The $D^{(n)}$
operators give bosonic global symmetry generators $L^{(n),m}_r$ transforming in a finite-dimensional representation of 
spin $\frac{n}{2}$ for $\mathfrak{su}(2)_R$ and spin $\frac{n}{2}+1$ for $\mathfrak{su}(2)_L$.
The $B^{(n)}$ and $C^{(n)}$ operators give fermionic global symmetry generators $b^{(n),m}_r$ and $c^{(n),m}_r$ transforming 
in a finite-dimensional representation of 
spin $\frac{n}{2}$ for $\mathfrak{su}(2)_R$ and spin $\frac{n}{2}$ for $\mathfrak{su}(2)_L$.

One can compute the commutator in the global symmetry algebra using the OPE. We will perform some explicit computations later. 

\subsection{The holographic global symmetry algebra}
The holographic dual to the chiral algebra is the topological $B$-model on the deformed conifold.  The OPEs of general operators in the chiral algebra can be computed using Witten diagrams on the deformed conifold. In general, these calculations can be quite difficult to perform on the holographic side.  

The global symmetry algebra, however, has a very simple description from the holographic point of view:  the global symmetry algebra is the Lie algebra of gauge symmetries of the vacuum field configuration for Kodaira-Spencer theory on the deformed conifold.  It is intuitively clear that this Lie algebra should function as the global symmetries of the holographic chiral algebra.  We will give in section \ref{sec:modes} a more formal derivation of this statement. We will see how this algebra is given by certain modes of the holographic chiral algebra, which is defined using Witten diagrams. 

The Lie algebra $\mf{a}^{hol}$ of holographic global symmetries is very easy to describe.   The bosonic part of $\mf{a}^{hol}$ is the Lie algebra 
\begin{equation} 
	\op{Vect}_0 (SL_2(\C)), 
\end{equation}
the Lie algebra of divergence free holomorphic vector fields on the complex manifold $SL_2(\C)$.  These are the symmetries of $SL_2(\C)$ as a Calabi-Yau manifold, and hence the gauge symmetries preserving the vacuum field configuration. 

The fermionic part of $\mf{a}^{hol}$ consists of two fermionic copies of holomorphic functions on $SL_2(\C)$, $\alpha,\gamma \in \Oo(SL_2(\C))$. These are the gauge symmetries of the trivial gauge field for holomorphic Chern-Simons theory for the Lie algebra $\Pi \C^2$, which provide the remaining fields of holomorphic Chern-Simons theory. 

The fermionic part is a module over the bosonic part in an evident way, and the commutator between two fermionic elements is
\begin{equation} 
	[\alpha,\gamma] =  \partial \alpha \partial \gamma.  
\end{equation}
On the right hand side, we are identifying $(2,0)$ forms on $SL_2(\C)$ with vector fields using the holomorphic volume form.  This equation follows from the form of the commutation relations for gauge transformations in Kodaira-Spencer theory as given in equation \eqref{eqn_KS_gauge_commutator}.

Let us describe $\mf{a}^{hol}$ as a representation of $SL_2(\C)$.  The fermionic part is easily described using the Peter-Weyl theorem, which says that the algebra of polynomial functions on $SL_2(\C)$ is 
\begin{equation} 
	\Oo(SL_2(\C)) = \oplus_{j \ge 0, 2 j \in \Z} V_j^L \otimes V_j^R 
\end{equation}
where $V_j^L$, $V_j^R$ indicate the spin $j$ representation of $SL_2(\C)^L$ and $SL_2(\C)^R$.  

To describe the bosonic part, note that we can write vector fields on $SL_2(\C)$ as the tensor product of $\Oo(SL_2(\C))$ with the right-invariant vector fields, which transform in $V_1^L$.  Therefore, 
\begin{align} 
	\op{Vect}(SL_2(\C)) &= \oplus_{j} (V_1^L \otimes V_j^L) \otimes V_j^R \\
	&= \oplus_j \left(V_{j+1}^L + V_j^L + V_{j-1}^L    \right)  \otimes V_j^R. 
\end{align}
The divergence map 
\begin{equation} 
	\op{Div} : \op{Vect}(SL_2(\C)) \to \Oo(SL_2(\C))  
\end{equation}
is a map of $SL_2(\C)$ representations, and is surjective.  It follows that the divergence free vector fields are 
\begin{equation} 
	 \op{Vect}_0(SL_2(\C)) = \oplus_j \left(V_{j+1}^L +  V_{j-1}^L    \right)  \otimes V_j^R. 
\end{equation}

Including the fermionic elements, we find that the entire global symmetry algebra is
\begin{equation} 
	\mf{a}_{\infty}^{hol} =   \bigoplus_j \left(V_{j+1}^L +  V_{j-1}^L    \right)  \otimes V_j^R \oplus \bigoplus_{j}\left( \Pi \C^2 \otimes  V_j^L \otimes V_j^R \right). 
\end{equation}
This matches exactly with the large $N$ global symmetry algebra, as a representation of $SL_2(\C)_L \times SL_2(\C)_R$.

The main theorem we will prove in this section is that:
\begin{theorem} 
	There is an isomorphism of super Lie algebras
\begin{equation} 
	\mf{a}_{\infty} \iso \mf{a}_{\infty}^{hol}. 
\end{equation}
\end{theorem}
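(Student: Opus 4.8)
The plan is to leverage the fact, already established above, that $\mf{a}_\infty$ and $\mf{a}_\infty^{hol}$ agree as representations of $SL_2(\C)_L \times SL_2(\C)_R$, and to upgrade this to a Lie-algebra isomorphism by exploiting equivariance together with a finite amount of explicit structure-constant matching. First I would fix the candidate linear map $\Phi \colon \mf{a}_\infty \to \mf{a}_\infty^{hol}$ summand by summand: send the $A^{(n)}$ tower to the $V^L_{n/2-1}\otimes V^R_{n/2}$ piece of $\op{Vect}_0(SL_2(\C))$, the $D^{(n)}$ tower to the $V^L_{n/2+1}\otimes V^R_{n/2}$ piece, and the fermionic $B^{(n)}, C^{(n)}$ towers to the two fermionic copies $\gamma,\alpha \in \Oo(SL_2(\C))$ sitting in $V^L_{n/2}\otimes V^R_{n/2}$. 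Since these are pairwise-inequivalent irreducible $SL_2\times SL_2$-modules (with the fermionic multiplicity two organized by the $\Pi\C^2$ outer $SL_2$), $\Phi$ is determined up to one scalar per tower, to be fixed by normalization.

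The core of the argument is that both brackets are maps of $SL_2(\C)_L\times SL_2(\C)_R$-representations, $\Lambda^2\mf{a}\to\mf{a}$ in the graded sense. By Schur's lemma, on each pair of isotypic components the bracket is a sum of Clebsch--Gordan channels, each carrying a single scalar structure constant, so proving that $\Phi$ is a morphism reduces to matching these scalars channel by channel. I would first check that each side is generated, as a Lie superalgebra, by its lowest components together with the global $\mf{sl}_2\oplus\mf{sl}_2$ and the R-symmetry zeromodes $J^{ij}_0$; granting this, equivariance lets me restrict attention to brackets $[s,x]$ with $s$ ranging over a fixed finite generating set and $x$ over an arbitrary tower, and the remaining relations follow from the Jacobi identity. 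The brackets of the generating set are then computed twice: on the holographic side from the commutator of divergence-free vector fields and from $[\alpha,\gamma]=\partial\alpha\,\partial\gamma$ (identifying $(2,0)$-forms with vector fields via $\Omega_{X_N}$), organized through the Peter--Weyl decomposition $\Oo(SL_2(\C))=\oplus_j V_j^L\otimes V_j^R$; and on the chiral side from the OPE of the single-trace generators, i.e.\ Wick contractions in the gauged $\beta\gamma$ system.

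Concretely, I would begin by matching the obvious bosonic relations --- the adjoint action of the $\mf{su}(2)_R$ zeromodes $J^{ij}_0$ and of the global conformal $L_{-1},L_0,L_1$ --- which pin down the module structure of $\Phi$ and its overall normalizations. I would then compute the first genuinely non-trivial brackets, such as $[J^{(3)},J^{(n)}]$, on both sides. On the chiral side these produce a spin-raising channel into $J^{(n+1)}$ with an $N$-independent coefficient, together with spin-lowering channels (into $J^{(n-1)}$ and the $L$-tower $L^{(n-3)}$) carrying explicit factors of $N$ that come from a single Wick contraction; on the holographic side the same channels arise, with the factors of $N$ entering through the defining relation $u_2 w_1 - u_1 w_2 = N$ whenever a product of holomorphic coordinates is reduced modulo this relation. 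Matching the \emph{relative} coefficients of the raising and lowering channels, after absorbing overall constants into the tower normalizations, is the substantive verification. The fermionic relations $[B^{(n)},C^{(m)}]\mapsto[\alpha,\gamma]=\partial\alpha\,\partial\gamma$ are checked the same way, with the cancellation $Q_{BRST}B^{(n)}=E^{(n)}_{\op{Id}}$ and the sign conventions of $\Pi\C^2$ dictating the symplectic pairing that governs the two fermionic copies.

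The main obstacle is precisely the spin-lowering structure constants and their $N$-dependence: equivariance constrains but does not determine them, and there is no \emph{a priori} reason a higher-spin algebra with this $SL_2\times SL_2$ content should be rigid, so the match of these finitely many scalars --- together with the verification that the two descriptions are generated by the same low-lying data and that the Jacobi identity closes consistently on the geometric side --- is where the real content lies. A secondary but essential point is the careful handling of the identity operator (which cannot appear in the commutator of vacuum-preserving single-trace modes, matching the absence of a corresponding vector field on $SL_2(\C)$) and of the fermionic sign and normalization conventions, since a sign error in the $\Pi\C^2$ pairing would spoil the isomorphism on the fermionic sector.
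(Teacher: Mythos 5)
Your strategy is sound in outline, but it is genuinely different from the paper's, and it is worth being precise about where the two part ways. The paper's proof is deliberately \emph{indirect}: it first adds $(k \mid k)$ matter on both sides, then reconstructs the commutative algebra $\Oo(SL_2(\C))$ --- including the relation $u_2 w_1 - u_1 w_2 = N$ --- purely from the Lie bracket of the upper-triangular open-string symmetries: associativity of the product $\mu(F,G)$ is extracted from the Jacobi identity, and the conifold relation is established by a single explicit Wick-contraction computation in the mode algebra. Once the open-string sector is identified with $\mf{pgl}(k \mid k)\otimes\Oo(SL_2(\C))$, the bosonic closed-string sector comes along almost for free: it acts by derivations commuting with $PGL(k \mid k)$, hence by holomorphic vector fields, and only two nonvanishing checks (that $J^{(3)}_{1/2}$ and $L^{(3)}_{3/2}$ act nontrivially) are needed for surjectivity onto $\op{Vect}_0(SL_2(\C))$. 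The fermionic commutator at $k=0$ is then forced by soft arguments: Jacobi plus the derivation property make $V(F,G)H$ totally antisymmetric and a derivation in each slot, so invariance leaves only $C\,\Omega^{-1}\,\d F\wedge \d G\wedge \d H$, with $C$ absorbed into the normalization of $c$. Your route is instead the direct one: equivariant channel-by-channel matching of structure constants for a generating set acting on all towers. Your reduction to generator brackets is valid (the induction via the Jacobi identity does go through, and generation is only needed on the source side), and, if completed, your method yields more --- explicit structure constants, and no detour through matter --- whereas the paper buys the isomorphism with a handful of computations plus representation theory, at the price of needing the matter-coupled algebras even to prove the $k=0$ statement.

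The caveats are where your plan would need real work, and they are exactly the computations the paper engineered its argument to avoid. First, the scalars to be matched are not ``finitely many'': a generator bracket such as $[J^{(3)},\,\cdot\,]$ hits every tower, so you need closed-form expressions in $n$, on \emph{both} sides, for the action of the generators on the $A$, $B$, $C$ and $D$ towers. The paper's appendix achieves this only for the $A$ tower and the open-string tower, and the authors state explicitly that they could not put all large-$N$ OPEs in closed form --- the BRST representatives of the $D$ tower (mixtures of terms with $\epsilon_{ij}$ tensors, derivatives and ghosts) being the sticking point. Second, you need a generation lemma for $\mf{a}_{\infty}$ (generation by the global $\mf{sl}_2\oplus\mf{su}(2)_R$ together with the spin-$(3/2,1/2)$, $(1/2,3/2)$ and low-lying fermionic elements); this is plausible and provable from the same raising-channel computations, but it is not free. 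Third, you cannot organize the fermionic multiplicity-two space by the outer $SL_2$ on the CFT side: the paper is explicit that this outer automorphism is not known to be a symmetry of $\mc{A}_N$, so the splitting into the $B$ and $C$ towers must be pinned by ghost number instead. None of these is a wrong step --- they are deferred verifications --- but in your approach they \emph{are} the theorem, and the paper's reconstruction of the conifold ring from the open-string bracket is precisely the device that replaces them.
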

Our proof of this is a little indirect.  We will first introduce matter to both the large $N$ chiral algebra and its holographic dual, and verify the isomorphism of global symmetry algebras with matter.  We will use this to constrain the global symmetry algebras without matter sufficiently that we can prove they are isomorphic. 

Along the way, we will find a beautiful manifestation of holography: we will see how to recover the algebra $\Oo(SL_2(\C))$ of polynomial functions on the deformed conifold directly from the structure constants of the global symmetry algebra with matter.

\subsection{The holographic global symmetry algebra including matter}
We can include matter into the chiral algebra, as discussed in section \ref{subsection_matter}.  This corresponds to introducing $k$ bosonic and $k$ fermionic space-filling probe branes into the topological $B$-model. This changes the global symmetry algebra into an algebra $\mf{a}_{(k \mid k)}^{hol}$. 

This contains,  as before, the Lie algebra $\op{Vect}_0(SL_2(\C))$ of divergence-free vector fields on the deformed conifold, together with the fermionic generators $\alpha,\gamma$.  We also have open-string symmetries, which are the gauge transformations of the trivial $GL(k \mid k)$ bundle on $SL_2(\C)$.  This Lie algebra is $\Oo(SL_2(\C)) \otimes \mf{gl}(K \mid K)$.  

The bosonic part of the closed string algebra acts on the open-string algebra in an evident way.  As we saw in section \ref{secn_cs_coupling}, there is a non-trivial BRST differential which sends the fermionic closed string generator $\alpha$ to the open-string generator $\alpha \otimes \op{Id}$.  Taking the cohomology with respect to this operator cancels the coefficient of the identity in $\mf{gl}(k \mid k) \otimes \Oo(SL_2(\C))$, reducing the open-string algebra to $\mf{pgl}(k \mid k) \otimes \Oo(SL_2(\C)$.

This operation also removes the fermionic closed-string generator $\alpha$, leaving only the generator $\gamma$.

\subsection{The global symmetry algebra with matter for the large $N$ chiral algebra}
For the chiral algebra with $\mf{gl}(k \mid k)$ flavour symmetry, we also have a global symmetry algebra $\mf{a}_{(k \mid k)}^{CFT}$.  As we have seen, the chiral algebra is generated by bosonic closed-string operators $A^{(n)}$, $D^{(n)}$, fermionic closed-string operators, $B^{(n)}$, $C^{(n)}$, and the open-string operators 
\begin{equation}
E_\mathfrak{t}^{(n)} = \Tr \mathfrak{t}\, I  Z^{(i_1} Z^{i_2} \cdots Z^{i_n)} J
\end{equation}
(Here $\mf{t}$ is an element of  $\mf{gl}(k\mid k)$). This operator is of spin $n/2+1$, and so the modes 
\begin{equation} 
	\oint \d z z^k E_{\mathfrak{t}}^{(n)} 	 
\end{equation}
 defines elements of the global symmetry algebra for $k \le n$.  These modes transform in the representation of spin $n/2$ for the conformal $SL_2$ and spin $n/2$ for the $R$-symmetry $SU(2)$.  Summing over $n$, and identifying the two copies of $SL_2$ with the left and right actions of $SL_2(\C)$ on itself, we find that the open-string global symmetries transform in
 \begin{equation} 
 \mf{gl}(k \mid k) \otimes \left( \oplus V_j^L \otimes V_j^R\right) =  \mf{gl}(k \mid k) \otimes \Oo(SL_2(\C)). 
  \end{equation}
 
 The global symmetries corresponding to the generators of the chiral algebra will be denoted by $a,b,c,d,e_{\mf{t}}$, where $\mf{t} \in \mf{gl}(k \mid k)$.  Under the left and right $SL_2(\C)$ action, $a$ and $d$ transform as an element of $\op{Vect}_0(SL_2(\C))$, where as $c,d,e_{\mf{t}}$ transform as elements of $\Oo(SL_2(\C))$. We will often use the notation $c(F)$, $d(F)$, $e_{\mf{t}}(F)$ for $F \in \Oo(SL_2(\C))$ to refer to a particular global symmetry.  
 
As we saw in section \ref{subsection_matter}, at leading order in $1/N$ there is a term in the BRST operator which cancels the symmetries built from $B^{(n)}$ with the identity component of the modes of $E^{(n)}$.  This reduces the open-string part of the symmetry algebra to $\mf{pgl}(k \mid k)$ tensored with a representation of $SL_2(\C) \times SL_2(\C)$ which is easily seen to be $\Oo(SL_2(\C)$. 

This leads to a very satisfying match with the holographic global symmetry algebra.  On both sides, to leading order in $1/N$ and in the string coupling constant $\lambda$, the open-string sector of the algebra is $\mf{pgl}(k \mid k)\otimes \Oo(SL_2(\C))$ as a representation of $GL(k \mid k) \times SL_2(\C)^L \times SL_2(\C)^R$.  In the closed string sector, one of the two fermionic towers survives: the surviving tower is $c$ on the CFT side and $\gamma$ on the holographic side. 

To subleading order in $1/N$ or $\lambda$, there is a term in the BRST differential on both sides which cancels the remaining fermionic closed-string generators $c$ and $\gamma$ with open-string generators, and reduces to the open-string sector to $\mf{psl}(k \mid k)$.

\subsection{Building the deformed conifold from the global symmetry algebra}
In this section we will show that the upper-triangular part of the open-string global symmetry algebras $\mf{a}^{CFT}_{(k \mid k)}$ and $\mf{a}^{hol}_{(k \mid k)}$ are isomorphic. As a corollary, we will see how to build the deformed conifold from the global symmetry algebra $\mf{a}^{CFT}_{(k \mid k)}$. 

Choose a generic element of the real Cartan of $\mf{pgl}(k \mid k)$.  This is the same as the Cartan of its bosonic real part $\mf{su}(k) \oplus \mf{su}(k) \oplus \mf{u}(1)$. The extra copy of $\mf{u}(1)$ corresponds to the matrix which is $1$ on bosonic elements of $\C^{k \mid k}$ and $-1$ on fermionic elements. 

We let $\mf{a}_{+,(k\mid k)}^{CFT}$ be the sub-algebra of elements of positive weight under the chosen generic  element of the Cartan.  This only includes open-string operators, because closed-string operators are of weight $0$.  The open-string operators are those of the form
\begin{equation} 
\oint	\d z z^l E_{ij}^{(n)} 	  
\end{equation}
where $i < j$ in an ordering of a basis of $\C^{k \mid k}$ coming from the choice of element of the Cartan of $\mf{pgl}(k \mid k)$.  We can arrange so that if the index $i$ is bosonic and $j$ fermionic, $i < j$.

For each $i$ and $j$, these open-string global symmetries transform in the representation $\oplus V_s^L \otimes V_s^R$ of $SL_2(\C) \times SL_2(\C)$.  We will change notation slightly, and write an element of the global symmetry algebra of this form as $e_{ij}(F)$, where $F$ denotes an element of $\oplus V_j^L \otimes V_j^R = \Oo(SL_2(\C))$.

We will show how to recover the  product in the algebra $\Oo(SL_2(\C))$ from the Lie bracket on $\mf{a}^{CFT}_{+,(k \mid k)}$. 

To see this, note that if $i < j < k$, we must have
\begin{equation} 
	[e_{ij}(F), e_{jk}(G)] = e_{ik}(\mu(F,G) ) 
\end{equation}
for some bilinear $SL_2(\C) \times SL_2(\C)$-invariant map
\begin{equation} 
	\mu : \left ( \oplus V_j^L \otimes V_j^R\right)^{\otimes 2} \to \oplus V_j^L \otimes V_j^R.  
\end{equation}
(The fact that the commutator in $\mf{a}_{+, (k \mid k)}^{CFT}$ must be of this form is dicated by the $PSL(k \mid k)$ symmetries).  

The Jacobi identity tells us that 
\begin{equation} 
	[[e_{ij}(F),[  e_{jk}(G), e_{kl}(H) ]] = [[e_{ij}(F), e_{jk}(G)], e_{kl}(H)]. 	
\end{equation}
This implies that 
\begin{equation} 
	e_{il}( \mu(F, \mu(G,H) ) = e_{il}(\mu(\mu(F,G), H)). 
\end{equation}
In other words, the operator $\mu$ is an associative product on $\oplus V_j^L \otimes V_j^R$. 

We would like to show that this associative product is given by multiplication of polynomial functions on the deformed conifold.  This will show how to recover the deformed conifold from the global symmetry algebra $\mf{a}^{CFT}_{+,(k \mid k)}$. 

To do this, we should write down the open-string global symmetries which correspond to the functions $u_i, w_i$ which generate the ring of functions on $SL_2(\C)$.  These are the elements which transform in $V_{1/2}^L \otimes V_{1/2}^{R}$.  The global symmetries are given by the expressions
\begin{align} 
e_{ij}(w_1) &= \oint_z I_i Z_1 J_j z \d z\\
e_{ij}(w_2) &= \oint_z I_i Z_2 J_j z \d z\\
e_{ij}(u_1) &= \oint_z I_i Z_1 J_j   \d z\\
e_{ij}(u_2) &= \oint_z I_i Z_2 J_j   \d z\\
\end{align}

To prove that the product we have defined on $\oplus V_j^L \otimes V_j^R$ is indeed that given by multiplication of functions on the conifold, we need to prove two things.
\begin{enumerate} 
	\item The elements $u_i$, $w_i$ generate the associative algebra built from the global symmetry algebra $\mf{a}^{CFT}_{(k \mid k)}$ 
	\item The elements $u_i$, $w_i$ commute with each other in this algebra. 
	\item In the Lie algebra $\mf{a}^{CFT}_{(k \mid k)}$, we have
		\begin{equation} 
			[e_{ij}(u_1), e_{jk}(w_2)] - [e_{ij}(u_2), e_{jk}(w_1)] = e_{ik} N. \label{eqn_conifold} 
		\end{equation}
\end{enumerate}
The first statement follows immediately from equation \ref{eqn_lvl1_open_currents} in the appendix. Indeed, this equation shows that the action of $e_{ij}(u_a)$, $e_{ij}(w_b)$ on $e_{jk}(F)$ for $F \in V_n^L \otimes V_n^R$ is a sum of two terms of the form $e_{ik}(G)$, where in one term $G \in V_{n+1/2}^L \otimes V_{n+1/2}^R$ and in the other $G \in V_{n-1/2}^L \otimes V_{n-1/2}^R$.  Since the $SL_2(\C) \times SL_2(\C)$-invariant map
\begin{equation} 
V_{1/2}^L \otimes V_{1/2}^R \otimes V_n^L \otimes V_n^R \to V_{n+1/2}^L \otimes V_{n+1/2}^R 
 \end{equation}
is surjective, we conclude that the elements in $V_{1/2}^L \otimes V_{1/2}^R$ generate.

Next, we need to verify that the elements $u_i, w_i$ commute with each other.  To see this, note that the commutator of these elements must be an $SL_2(\C) \times SL_2(\C)$-equivariant map
\begin{equation} 
	\wedge^2  \left( V^L_{1/2} \otimes V^R_{1/2}\right) \to \oplus V^L_j \otimes V^R_j. 
\end{equation}
There are no such maps.

The final thing we need to verify is equation \eqref{eqn_conifold}.  Since this statement is so central to our derivation of the conifold algebra from the global symmetry algebra, let us verify it carefully.   

We need to show that the following commutation relation holds in the mode algebra:
\begin{equation}  
	\left[ \oint_z I_i Z_1 J_j   \d z, \oint_z I_j Z_2 J_k z \d z\right] -\left[ \oint_z I_i Z_2 J_j   \d z,   \oint_z I_j Z_1 J_k z \d z\right] = N \oint I_i J_k  \d z. 
\end{equation}
Factors of $\pi$ can be absorbed into an $N$-independent rescaling of the variables, so we will ignore them.  

The commutator in the mode algebra is given by the modes of certain OPE coefficients. For any two operators $\mc{O}_1,\mc{O}_2$ we have
\begin{equation} 
	\left[ \oint \mc{O}_1(z) z^k \d z, \oint \mc{O}_2(z) z^l \d z \right] = \oint_{\abs{z} = 1} \oint_{\abs{u} = \eps}  \mc{O}_1(z) \mc{O}_2(z + u) z^k (z + u)^l \d u \d z. 
\end{equation}
The product $\mc{O}_1(z) \mc{O}_2(z+u)$ can be expanded in a series in $u^{-1}$ using the OPE. 

From this, we find that
\begin{equation} 
	\left[ \oint_z I_i Z_1 J_j   \d z, \oint_z I_j Z_2 J_k z \d z\right] = \oint_{\abs{z} = 1} \oint_{\abs{u} = \eps} I_i(z) Z_1(z) J_j (z)  I_j(z+u) Z_2(z+u) J_k(z+u)  (z+u) \d u \d z. 
\end{equation}
(Note that there are no ordering ambiguities in defining either of the composite operators we are considering). 

In the planar limit, there are two contractions between $I_i Z_1 I_j$ and $I_j Z_2 I_k$ that contribute. We must always contract $J_j$ in the first operator with $I_j$ in the second, yielding $u^{-1}$.  We may or may not contract $Z_1$ in the first operator with $Z_2$ in the second operator. This gives
\begin{equation} 
	I_i(0) Z_1(0) J_j(0) I_j(u) Z_2(u) J_k(u) = u^{-1} I_i(0) : Z_1 Z_2 : (0) + N u^{-2} I_i(0) J_k(0). 
\end{equation}
Inserting this into the formula for the commutator of operators, we find
\begin{align} 
	\left[ \oint_z I_i Z_1 J_j   \d z, \oint_z I_j Z_2 J_k z \d z\right] =& \oint_{\abs{z} = 1} \oint_{\abs{u} = \eps} I_i(z) : Z_1 Z_2: (z) J_k(z) u^{-1}  (z+u) \d u \d z \\
	+ N \oint_{\abs{z} = 1} \oint_{\abs{u} = \eps}  I_i(z) J_k(z) u^{-2} (z+u) \d u \d z \\
	=& \oint_{\abs{z} = 1} I_i(z) : Z_1 Z_2:(z) J_k(z) z \d z + N \oint I_i(z) J_k(z) \d z. 
\end{align}
By a similar calculation, we have
\begin{equation} 
 \left[ \oint_z I_i Z_2 J_j  z \d z,   \oint_z I_j Z_1 J_k \d z\right] =  \oint_{\abs{z} = 1} I_i(z) : Z_2 Z_1:(z) J_k(z) z \d z - N \oint I_i(z) J_k(z) \d z. 
 \end{equation}
 The minus sign on the right hand side occurs because we are contracting $Z_1$ and $Z_2$ in the other order.  

Since $:Z_2 Z_1: = :Z_1 Z_2:$,   comparing the two equations gives us exactly the relation in the algebra of functions on the conifold.

In sum, we have proved that the positive part of the open-string symmetries in $\mf{a}^{CFT}_{(k \mid k)}$ is isomorphic to $\mf{n}_+ \otimes \Oo(SL_2(\C))$, where $\mf{n}_+$ is the positive part of $\mf{pgl}(k \mid k)$. Further, we can recover the multiplication on $\Oo(SL_2(\C))$ from the Lie bracket.

\subsection{Matching all open-string symmetries}
The calculation given above shows that concatenation of open-string symmetries, as in figure \ref{figure_concatenate}, is given by the product in the algebra of functions on the conifold.  The result of this concatenation is clearly independent of the open-string indices $a,b,c$. 

The commutator in the algebra of open-string global symmetries is always given by such concatenations, where one pair of matter fields $I,J$ are contracted.  We conclude that for open-string global symmetries $e_{\mf{t}}(F)$, for $\mf{t} \in \mf{pgl}(k \mid k)$, we have  
\begin{equation} 
	[e_{\mf{t}} (F), e_{\mf{t}'}(G)] = e_{[\mf{t}, \mf{t}']}(F G).\label{formula_commutator} 
\end{equation}
This matches with the commutator of open-string global symmetries in $\mf{a}^{hol}_{(k \mid k)}$.  

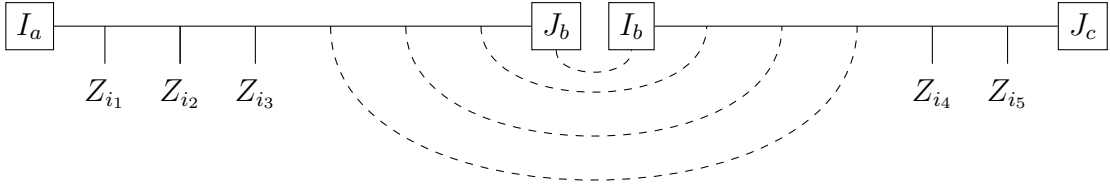
\begin{figure}
\begin{tikzpicture}
\node[draw, rectangle] (N1) at (-7,0) {$I_a$};
\node[draw, rectangle] (N2) at (0,0) {$J_b$};
\node[draw, rectangle] (N3) at (1,0) {$I_b$};
\node[draw, rectangle] (N4) at (7,0) {$J_c$};
\draw (N1) to (N2);
\draw[dashed, out=-90,in=-90](N2) to (N3);
\node(Z1) at (-6,-0.9) {$Z_{i_1}$};
\node(Z1) at (-5,-0.9) {$Z_{i_2}$};
\node(Z1) at (-4,-0.9) {$Z_{i_3}$};
\node(Z1) at (5,-0.9) {$Z_{i_4}$};
\node(Z1) at (6,-0.9) {$Z_{i_5}$};
\draw (-5,0) to (-5,-0.5); 
\draw (N3) to (N4);
\draw [dashed, out=-90,in=-90] (-1,0) to (2,0);
\draw [dashed, out=-90,in=-90] (-2,0) to (3,0);
\draw [dashed, out=-90,in=-90] (-3,0) to (4,0);
\draw (-4,0) to (-4,-0.5);
\draw (-5,0) to (-5,-0.5);
\draw (5,0) to (5,-0.5);
\draw (-6,0) to (-6,-0.5);
\draw (6,0) to (6,-0.5); 
\end{tikzpicture}
	\caption{Concatenating open string global symmetries. The dotted lines represent  Wick contractions that are allowed in the planar limit.  Note we contract one $I$ field with one $J$ field and a varying number of fields $Z_i$.  \label{figure_concatenate} }  
\end{figure}

\subsection{Matching closed-string global symmetries}

 It is straightforward to verify that there is an isomorphism of $GL(k \mid k) \times SL_2(\C) \times SL_2(\C)$ representations between $\mf{a}^{hol}_{(k \mid k)}$ and $\mf{a}^{CFT}_{(k \mid k)}$, and we have seen that the Lie brackets match on the open-string sector. In this section we will check that the Lie brackets between bosonic elements of the closed-string sector match.  

For $\mf{a}^{hol}_{(k \mid k)}$ the bosonic closed string symmetries are given by $\op{Vect}_0(SL_2(\C))$.  For $\mf{a}^{CFT}_{(k \mid k)}$ it is the subalgebra given by the modes of $A^{(n)}$ and $D^{(n)}$.  

One can check readily that the subalgebra of closed string operators is, in each case, independent of $k$. For $ \mf{a}^{hol}_{(k \mid k)}$ this is obvious. For $ \mf{a}^{CFT}_{(k \mid k)}$, this follows from the fact that the operators $A^{(n)}$ and $D^{(n)}$ do not contain the matter fields.  

We need to show that the bosonic closed string operators in $\mf{a}^{CFT}_{(k \mid k)}$ are isomorphic to $\op{Vect}_0(SL_2(\C))$ as a Lie algebra.  We only need to check this for $k$ sufficiently large.
 
We have seen that the algebra of open-string global symmetries is, on both side, $\mf{pgl}(k \mid k) \otimes \Oo(SL_2(\C))$.  The algebra of closed-string global symmetries acts on this by Lie algebra derivations, which commute with the action of $PGL(k \mid k)$.  The only such derivations are derivations of the commutative algebra $\Oo(SL_2(\C))$.

Derivations of $\Oo(SL_2(\C))$ are holomorphic vector fields.  Therefore we find a homomorphism of Lie algebras
\begin{equation} 
 \mf{a}^{CFT}_{closed} \to \op{Vect}(SL_2(\C)).
 \end{equation}
The sub-algebra $\op{Vect}_0(SL_2(\C))$ is given by all those $SL_2(\C) \times SL_2(\C)$ representations in $\op{Vect}(SL_2(\C))$ which are not of the form $V_j^L \otimes V_j^R$.  Since these representations do not occur in $\mf{a}^{CFT}_{closed}$, we find that we must have a homomorphism of Lie algebras
\begin{equation} 
 \mf{a}^{CFT}_{closed} \to \op{Vect}_0(SL_2(\C)).
 \end{equation}
It remains to show that this map is an isomorphism, which is equivalent to showing that it is surjective (because the $SL_2(\C) \times SL_2(\C)$ representations on each side are isomorphic). 

To see this, we note that $\op{Vect}_0(SL_2(\C))$ is generated by the infinitesimal left and right $\mf{sl}_2(\C)$ actions, together with the vector fields which transform in $V^L_{3/2} \otimes V^R_{1/2}$ and $V^L_{1/2} \otimes V^{R}_{3/2}$.  To show surjectivity, we need only find two global symmetries which act in a non-zero way on the open string algebra and which transform under $SL_2(\C) \times SL_2(\C)$ as highest weight vectors of weights $(3/2,1/2)$ and $(1/2,3/2)$.  The symmetries $J^{(3)}_{1/2}$ and $L^{(3)}_{3/2}$ transform in the correct representations, and one can check that both act in a non-zero way on the open-string algebra.

\subsection{Matching fermionic closed-string symmetries when $k > 0$}
When $k > 0$, there are fermionic closed string symmetries which on the CFT side are given by the modes of $C$, and on the holographic side are given by $\gamma \in \Oo(SL_2(\C))$.  On the holographic side, $\gamma$ transforms under the action of $\op{Vect}_0(SL_2(\C))$ as a holomorphic function on $SL_2(\C)$.  Further, the commutator of $\gamma$ with an open-string symmetry vanishes.

Let us check that the same holds on the CFT side.  The commutator of $c(F)$ with open-string global symmetries must produce a fermionic symmetry of the open-string Lie algebra which commutes with the action of $PGL(k \mid k)$.  All such symmetries are trivial, so that $c(F)$ commutes with the open-string symmetries.  

To see that $c(F)$ transforms in the correct way under the bosonic closed-string symmetries, we note that there is a subleading term in the $1/N$ expansion of the BRST differential which sends $c(F)$ to $e_{\op{Id}}(F)$. Since $e_{\op{Id}}(F)$ transforms under the action of $\op{Vect}_0(SL_2(\C))$ as an element of $\Oo(SL_2(\C))$, and the subleading term in the BRST action must commute with the action of $\op{Vect}_0(SL_2(\C))$, we conclude that $c(F)$ also transforms as an element of $\Oo(SL_2(\C))$ under the $\op{Vect}_0(SL_2(\C))$ action. 

\subsection{Matching fermionic closed-string commutators at $k = 0$}

The argument so far provides a complete proof that the global symmetry algebras match on both sides as long as $k > 0$.  At $k = 0$, the argument above tells us that the $a,c,d$ global symmetries commute with each other in the correct way.  The $a,d$ towers give us a copy of $\op{Vect}_0(SL_2(\C))$, and the $c$ tower transforms under the $a,d$ towers via the action of $\op{Vect}_0(SL_2(\C))$ on $\Oo(SL_2(\C))$. 

It remains to check that the $b$ tower transforms in the same way under the action of $\op{Vect}_0(SL_2(\C))$, and that there is a certain non-trivial commutation relation between the $b$ and $c$ towers.  

On the holographic side, the $b,c$ towers are represented by $\alpha,\gamma \in \Oo(SL_2(\C))$, and the commutation relation is that
\begin{equation} 
[\alpha,\gamma] = \partial \alpha \partial \gamma  
 \end{equation}
where the right hand side is viewed as a divergence-free vector field on $SL_2(\C)$ using the isomorphism between divergenge free holomorphic vector fields and closed $(2,0)$ forms.

To check that $b$ transforms in the correct way under $\op{Vect}_0(SL_2(\C))$, we introduce matter fields again. Then, there is a term in the BRST operator (to leading order in $1/N$) whereby
\begin{equation} 
Q_{BRST} e_{\mf{t}}(F) = (\op{Tr} \mf{t}) b(F) 
 \end{equation}
 for $F \in \Oo(SL_2(\C))$. Because $e_{\mf{t}}(F)$ transforms as an element of $\Oo(SL_2(\C))$ under the $\op{Vect}_0(SL_2(\C))$ action, we conclude that $b(F)$ must as well. 

It remains to calculate the commutator between $b(F)$ and $c(G)$. We denote the commutator by $V(F,G) \in \op{Vect}_0(SL_2(\C))$.  To match with the holographic computation, we need to show that for $H \in \Oo(SL_2(\C))$
\begin{equation} 
V(F,G) H = \Omega^{-1} \partial F \wedge \partial G \wedge \partial H \label{bc_commutator} 
 \end{equation}
 On the left hand side, we have the action of the vector field $V(F,G)$ on $H$, and on the right hand side, we are using the holomorphic volume form to identify closed $(3,0)$-forms on $SL_2(\C)$ with holomorphic functions.  

To prove the identity \eqref{bc_commutator}, we will use the Jacobi identity.  Since we know that $\op{Vect}_0(SL_2(\C))$ acts on the symmetries $c(H)$ according to the action on $H \in \Oo(SL_2(\C))$, we have
\begin{equation} 
[[b(F), c(G)], c(H)] = c( V(F,G) H ). 
 \end{equation}
The Jacobi identity tells us that
\begin{equation} 
 [[b(F), c(G)], c(H)] = [b(F), [c(G), c(H)] ] - [ [b(F), c(H)] , c(G) ]. 
 \end{equation}
Since $[c(G), c(H)] = 0$, we find that
\begin{equation} 
c (V(F,G) H ) = - c( V(F,H), G ). 
 \end{equation}
Thus, $V(F,G) H$ is anti-symmetric under the permutation of $G$ and $H$. A similar argument using the Jacobi identity for the commutator with two $b$'s and one $c$ tells us that $V(F,G) H$ is anti-symmetric under permutation of $F$ and $H$, and hence totally anti-symmetric.  We will change notation slightly and right
\begin{equation} 
\Lambda(F,G,H) = V(F,G) H. 
 \end{equation}
 The operator $\Lambda$ is a totally anti-symmetric map 
\begin{equation} 
\Lambda :  \Oo(SL_2(\C))^{\otimes 3} \to \Oo(SL_2(\C)). 
 \end{equation}
Since $V(F,G)$ acts as a derivation on $\Oo(SL_2(\C))$, we have
\begin{equation} 
\Lambda(F,G,H H') =  \Lambda(F,G,H)  H' +     \Lambda(F,G,H')  H. 
 \end{equation}
 By anti-symmetry, $\Lambda$ is a derivation in each factor. Therefore $\Lambda$ involves only a single derivative in each variable, so that $\Lambda (F,G,H)$ is a function of the $3$-form $\d F \wedge \d G \wedge \d H$.  Because $\Lambda$ is invariant under the $SL_2(\C)$ action, we must have
\begin{equation} 
\Lambda (F,G,H) = C \Omega^{-1} \d F \wedge \d G \wedge \d H 
 \end{equation}
for some constant $C$.  

This determines $\Lambda$ up to a constant.  The constant can be set to $1$ by rescaling the symmetry $c(F)$.

This concludes the proof that the holographic global symmetry algebra $\mf{a}^{hol}_{(k \mid k)}$ is isomorphic to the CFT global symmetry algebra $\mf{a}^{CFT}_{(k\mid k)}$, for all values of $k$ including $k = 0$.

\subsection{An alternative interpretation of the large $N$ global symmetry algebra}
Let us sketch here an alternative way to understand the large $N$ global symmetry algebra, which can be applied more generally. 

For any field theory, we can consider the cohomology of the space of local Lagrangians of varying ghost number, under the BRST operator.  If we shift the ghost number by $1$ we find that this cohomology is a graded Lie algebra, with bracket given by the BV anti-bracket.   The first cohomology of this Lie algebra consists of infinitesimal deformations of the theory, and $H^0$ consists of infinitesimal symmetries.
These symmetries and deformations need not preserve any symmetry of the theory, including the Poincar\'e or conformal symmetry.  

At the classical level, the definition of this Lie algebra is straightforward. At the quantum level, things are quite a bit more difficult, because some of the structures in this Lie algebra are subject to UV divergence.  The renormalization formalism of \cite{Cos11} allows one to define the Lie algebra\footnote{In the language of \cite{Cos11}, if $L_i$ are two BSRT closed Lagrangians of arbitrary ghost number, and $\eps_i$ are parameters satisfying $\eps_i^2 = 0$ of opposite ghost number, then the Lie bracket $[L_1,L_2]$ is the obstruction to $\eps_1 L_1 + \eps_2 L_2$ satisfying the quantum master equation modulo $\eps_i^2$. }  

In the language of deformation theory, this is the Lie algebra \emph{controlling} deformations of the quantum field theory. This means that deformations of the QFT are given by Maurer-Cartan solutions in the Lie algebra, and the Lie algebra of symmetries of the QFT is given by $H^0$ of the Lie algebra.   

For a large $N$ gauge theory in the planar limit, there is a version of this Lie algebra consisting of only single-trace symmetries and deformations. 

For a large $N$ chiral CFT,  it turns out that the Lie algebra given by single-trace Lagrangians is isomorphic to the Lie algebra of global symmetries.  Let us explain how to match global symmetries with Lagrangians.

Suppose $\mc{O}$ is a single-trace primary of spin $r$ in the large $N$ gauge theory, then we get a single-trace Lagrangian by the expression 
\begin{equation} 
	\int_{\CP^1} z^k \d z \til{\mc{O}} 
\end{equation}
where $\til{\mc{O}}$ is the descendent of $\mc{O}$: it satisfies $Q_{BRST} \til{\mc{O}} = \dbar \mc{O}$. Since $\mc{O}$ is of spin $r$, the operator $\til{\mc{O}}$ takes values in $\Omega^{0,1}(\CP^1,\Oo(-2 r))$. It follows that we must multiply it by a section of $\Oo(2r-2)$ to get a $(1,1)$-form, so that $0 \le k \le 2 r -2$. 

From this we see that the collection of single-trace Lagrangians, up to total derivative, is in bijection with the collection of global symmetries. This isomorphism involves a shift by $1$ in the ghost number, because we used descent.

It is not \emph{a priori} obvious that the Lie bracket on Lagrangians matches the commutator of global symmetries. However, this was proved by Si Li \cite{Li16}, who showed in that quite generally the Lie bracket on Lagrangians which are obtained from descent of local operators matches the commutator of the modes of the same local operators.   

From this point of view, the isomorphism between holographic and large $N$ global symmetry algebras has a nice interpretation.  The large $N$ global symmetry algebra is the Lie algebra controlling symmetries and deformations of the large $N$ theory, as a field theory on $\CP^1$.  The holographic global symmetry algebra is the Lie algebra controlling symmetries and deformations of the $B$-model background on $SL_2(\C)$.  The statement that they are isomorphic has a very clear holographic meaning.

\section{Boundary conditions from the holomorphic point of view}

In order to define boundary conditions in an holomorphic setup it is useful to compactify the deformed conifold. 
The deformed conifold $X$ is the affine algebraic variety defined by the equations $\epsilon^{ij} u_i w_j= N$. 
We can compactify this to the projective algebraic variety with homogeneous coordinates $(U_1:U_2:W_1:W_2:V)$ satisfying the equation 
\begin{equation} 
\br{X}_N = \{(U_1:U_2:W_1:W_2:V)\mid  \epsilon^{ij} U_i W_j= N V^2\}. \label{equation_quadric} 
\end{equation}
This compactified deformed conifold is a quadric in $\CP^4$. All quadrics in $\CP^4$ associated to non-degenerate quadratic forms in $5$ variables are equivalent.  

Because $\br{X}$ is associated to a quadratic form in $5$ variables, it has a transitive action of the group $\op{Spin}(5) = \op{Sp}(4)$.  We can identify $\br{X}$ with the set of null complex lines in $\C^5$, the vector representation of $\op{Spin}(5)$.

The boundary of the deformed conifold is the subvariety $V= 0$ inside the quadric \eqref{equation_quadric}.   The boundary is thus given by the quadric
\begin{equation} 
	\epsilon^{ij} U_i W_j= 0 
\end{equation}
inside the $\CP^3$ with homogeneous coordinates $(U_1:U_2:W_1:W_2)$.   A quadric in $\CP^3$ is a copy of $\CP^1 \times \CP^1$. Explicitly, a point in the boundary is given by a $2 \times 2$ matrix 
\begin{equation} 
	\begin{pmatrix}
	W_1 & W_2 \\ U_1 & U_2 
	\end{pmatrix}
\end{equation}
of rank $1$, taken up to scale.  To such a matrix we can assign its kernel and image, which are both lines in $\CP^1$.  In this way the set of rank $1$ matrices up to scale is identified with $\CP^1 \times \CP^1$.  

If we let $\Oo(1)$ denote the line bundle on $\br{X}$ restricted from the line bundle of the same name on $\CP^4$, then the canonical line bundle of $\br{X}$ is 
\begin{equation} 
K_{\br{X}} = \Oo(-3). 
 \end{equation}
This follows from the fact that $K_{\CP^4} = \Oo(-5)$ and that $\br{X}$ is the zero locus of a section of $\Oo(2)$.  Since the boundary divisor $D \subset \br{X}$ is the zero locus of a section of $\Oo(1)$, we conclude that the holomorphic volume form on $X$ must extend to a volume form on $\br{X}$ which has a cubic pole along $D$.

\subsection{Boundary conditions for holomorphic Chern-Simons theory}

Let $A^{0,\ast} \in \Omega^{0,\ast}(X) \otimes \mf{gl}(K \mid K)[1]$ be the field for holomorphic Chern-Simons theory, in the BV formalism.  Our boundary condition is that  $A^{0,\ast}$ extends to an element
\begin{equation} 
	A^{0,\ast} \in \Omega^{0,\ast}(\br{X}, \Oo(- D) ) \otimes \mf{gl}(K \mid K)[1]. 
\end{equation}
That is, $A$ has a first-order zero along the boundary.

Since $\Omega_X$ has a pole of order $3$ on the boundary divisor, the cubic interaction
\begin{equation} 
\int \Omega_X (A^{0,\ast} \wedge A^{0,\ast} \wedge A^{0,\ast}) 
 \end{equation}
 is well-defined.  

To check that this is a consistent boundary condition, we need to know that we can write down a propagator compatible with this constraint.   We will verify this in the appendix \ref{app:bc}.  

As we verify in the appendix \ref{appendix_coherent_cohomology}, we have 
\begin{equation} 
H^\ast_{\dbar}(\br{X}, \Oo(-D)) = 0. 
 \end{equation}
 This implies that there are no zero-modes of the theory, and that any modification of the boundary condition on $D$ will source a unique bulk field. 
\subsection{Boundary conditions for Kodaira-Spencer theory}

As we have seen \ref{sec:KS_alternative}, we can view Kodaira-Spencer theory as $(1,0)$ Kodaira-Spencer theory coupled to holomorphic Chern-Simons theory for the Abelian Lie algebra $\Pi \C^2$. The field of $(1,0)$ Kodaira-Spencer theory is $(2,1)$ form $\alpha^{2,1} \in \Omega^{2,1}(X)$, which satisfies $\partial \alpha = 0$. 

We will prescribe the same boundary conditions for holomorphic Chern-Simons theory for $\Pi \C^2$ as we did for holomorphic Chern-Simons for $\mf{gl}(k \mid k)$.

For the field $\alpha^{2,1}$, we will ask that $\alpha^{2,1}$ extends to a $(2,1)$ form on $\br{X}$ with a logarithmic pole at the boundary divisor. Let us recall what this means. 

Following the terminology standard in algebraic geometry, we can define the sheaf on $X$ of holomorphic differential forms with a log pole on $D$ as follows. Away from the divisor $D$, such differential forms are ordinary holomorphic differnential forms.  Near the divisor, in a system of coordinates $n,w,z$ where the divisor is at $n = 0$, the logarithmic de Rham complex is generated over the ring of holomorphic functions by the anti-commuting elements $n^{-1} \d n$, $\d w$, $\d z$.  The de Rham differential makes this into a subsheaf of the sheaf of differential forms with arbitrary poles on $D$. 

In a similar way, we can define $\Omega^{\ast,\ast}(\br{X}, \log D)$, where the only poles we allow appear as $n^{-1} \d n$. In this complex $\br{n}^{-1} \d \br{n}$ does not appear.  We remind the reader of the theorem of Deligne \cite{Deligne1971}: there is an isomorphism
\begin{equation} 
H^\ast(\Omega^{\ast,\ast}(\br{X}, \log D), \partial + \dbar) \iso H^\ast_{dR} (X).  
 \end{equation}

The boundary condition for the field $\alpha^{2,1}$ we impose is that we ask that $\alpha^{2,1}$ extends to a $(2,1)$ form on $\br{X}$ with logarithmic poles on $D$: 
\begin{equation} 
\alpha \in \Omega^{2,1}(\br{X}, \log D) \label{eqn_bc} 
 \end{equation}
We will also ask that the gauge transformations for $\alpha$ have a logarithmic pole, so that the whole field $\alpha^{2,\ast}$ has a logarithmic pole. Finally we require that the constraint $\partial \alpha^{2,\ast} = 0$ holds in $\Omega^{3,\ast}(X, \log D)$. 

It is convenient, as in \cite{CosLi15}, to impose the constraint $\partial \alpha^{2,\ast} = 0$ cohomologically. This involves introducing a new tower of fields $\alpha^{3,\ast} \in \Omega^{3,\ast}(\br{X}, \log D))$, so that $\alpha^{2,\ast}$ and $\alpha^{3,\ast}$ together can be viewed as living in a piece of the de Rham complex:
\begin{equation} 
\alpha^{\ge 2,\ast} = \alpha^{2,\ast} + \alpha^{3,\ast} \in \Omega^{\ge 2,\ast} (\br{X}, \log D) [1]. 
 \end{equation}
The BRST operator is then $\partial + \dbar$.  For the fields of ghost number $0$, this imposes the constraint $\partial \alpha^{2,1}= \dbar \alpha^{3,0}$, $\dbar \alpha^{2,1} = 0$.  

By using the holomorphic volume form on $X$, we can identify $\alpha^{2,1}$ with an element $v^{1,1} \in \PV^{1,1}(X)$.  The volume form on $\br{X}$ has an order $3$ pole on the boundary, so that in local coordinates it takes the form $n^{-3} \d n \d w \d z$.   Asking that $\alpha^{2,\alpha}$ has logarithmic poles on the boundary means that the $\partial_z,\partial_w$ components of $v^{1,\ast}$ have zeroes of order $2$, and the $\partial_n$ component has a zero of order $3$.  
The cubic interaction for Kodaira-Spencer theory is quadratic in $\Omega_X$, and therefore has an order $6$ pole along the boundary divisor $D$.  The order $6$ pole in the cubic interaction cancels with the order $7$ zero in $(v^{1,\ast})^3$, so the cubic interaction is well-defined. 

\subsection{Zero modes in Kodaira-Spencer theory}
Unfortunately, there is a single zero-mode for Kodaira-Spencer theory when we use this boundary condition.  One can show, by explicit computation or using the degeneration of the logarithmic Hodge to de Rham complex, that the natural map
\begin{equation} 
H^\ast (\Omega^{\ge 2,\ast}(\br{X}, \log D) ) \to H^\ast(X)  
 \end{equation}
is in isomorphism in total degrees $2$ and higher.  This implies that
\begin{equation} 
H^1(\Omega^{\ge 2,\ast}(\br{X}, \log D)) = \C 
 \end{equation}
 and other cohomology groups vanish. The non-zero cohomology class is represented by some logarithmic $(2,1)$-form $\alpha^{2,1}$ such that 
\begin{equation} 
\int_{n= 0 , z = 0} \op{Res}_{n = 0} \alpha^{2,1} = 1. 
 \end{equation}
The reside of $\alpha^{2,1}$ is a $(1,1)$ form on the boundary, which we integrate over the $\CP^1$ given by $z = 0$.

Such a $(2,1)$-form is cohomologous to the holomorphic volume form on $SL_2(\C)$.  This zero mode therefore represents the string coupling constant, which we would like to be a parameter of the theory and not a mode to be integrated over.  

To fix this problem, we will modify the boundary condition by asking that our fields not only have logarithmic poles, but that also the integral of $\op{Res} \alpha^{2,1}$ over $n = 0, z = 0$ vanishes.  At first sight, it may seem that we have broken one of the $SL_2(\C)$ symmetries to impose this condition. However, at the level of the equations of motion, the choice of $\CP^1$ over which we integrate does not matter.

In the appendix \ref{app:bc} we will verify that the boundary condition is consistent, i.e. that there is well-defined propagator.

\section{Boundary operators as modifications of the boundary condition}

We can identify $AdS_3$ with the quotient of $X = SL_2(\C)$ by the left action of $SU(2)$.  This action extends to an action on $\br{X}$, because $\br{X}$ has a manifest $SO(5)$ symmetry which includes $SU(2)$.   The $SU(2)$ orbits, as they approach the boundary, degenerate from being $S^3$'s to $\CP^1$'s.   The boundary divisor is $D = \CP^1 \times \CP^1$, and $SU(2)$ rotates one of the $\CP^1$'s.  The other $\CP^1$ remains in the quotient.

The quotient $\br{X}/ SU(2)$ is therefore a compactification $\br{AdS}_3$, where we have added a $\CP^1$ at infinity.  The boundary divisor $D$ of $\br{X}$ is $\CP^1 \times \CP^1$, and is the trivial $\CP^1$ fibration over the boundary of $\br{AdS}_3$. 

The choice of boundary conditions for the fields on $X$ gives rise to a choice of boundary conditions for the effective theory on $AdS_3$.   We therefore get an effective theory on $\br{AdS}_3$ whose fields are the modes of Kodaira-Spencer theory on $\br{X}$.  

Single-trace operators of the holographic dual theory are given by changes of the boundary condition which are localized to a point on the boundary of $AdS_3$.  Because a point on the boundary of $AdS_3$ corresponds to a $\CP^1$ in the boundary of the deformed conifold $X$, we see that changes of the boundary condition for Kodaira-Spencer theory along a $\CP^1$ should match the operators of the large $N$ chiral algebra.  In this section we will verify this statement.

The formal statement is the following.  Consider curve $\CP^1 \times z$ in the boundary of the deformed conifold.  Let $U$ be a small neighourhood of this curve in $\br{X}$, and consider the complement $U\setminus (\CP^1 \times z)$.  A modification of the boundary condition localized at $\CP^1 \times z$ will induce a solution to the equations of motion on $U \setminus (\CP^1 \times z)$. Away from $\CP^1 \times z$, this solution will satisfy the boundary conditions described above. 

What we will show in the next several sections is the following.
\begin{proposition}
There is an isomorphism of $SU(2)_R \times SO(2)$ representations between:
\begin{enumerate} 
\item Single-trace operators of the large $N$ chiral algebra.
\item Solutions to the linearized equations of motion for Kodaira-Spencer theory on $U \setminus (\CP^1 \times z)$, modulo those which extend to  solutions on all $U$ (satisfying the boundary conditions).  
\end{enumerate}
\end{proposition}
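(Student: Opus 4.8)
The plan is to reduce the statement to a computation of local cohomology supported on the curve $Z = \CP^1 \times \{z\}$, and then to evaluate this cohomology using the explicit geometry of the compactified conifold $\br{X}$ and its boundary divisor $D = \CP^1 \times \CP^1$. The starting observation is that, by the long exact sequence relating a space, an open subset, and the closed complement, the space of solutions to the linearized equations of motion on $U \setminus Z$ modulo those extending over all of $U$ is exactly a local cohomology group: if $R\Gamma$ denotes the complex computing the linearized BRST cohomology (solutions modulo gauge, with the boundary conditions of the previous sections), then the cofiber of the restriction $R\Gamma(U) \to R\Gamma(U \setminus Z)$ is $R\Gamma_Z(U)[1]$. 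So the proposition is equivalent to identifying $H^\ast_Z$ of the field complex with single-trace operators as $SU(2)_R \times SO(2)$ representations, where $SU(2)_R$ is the $R$-symmetry, which acts by rotating $Z \iso \CP^1$, and $SO(2)$ is the rotation of $\br{AdS}_3$ fixing the boundary point $z$, i.e. the conformal spin, which grades the Laurent expansion in the direction along $D$ towards $z$.

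Next I would split the computation according to the decomposition of Section~\ref{sec:KS_alternative}: the fields are those of $(1,0)$ Kodaira--Spencer theory (a Beltrami field $\alpha^{\ge 2,\ast} \in \Omega^{\ge 2,\ast}(\br{X},\log D)[1]$ with differential $\partial + \dbar$) together with holomorphic Chern--Simons theory for the purely fermionic Abelian algebra $\Pi \C^2$ (two copies of $\Omega^{0,\ast}(\br{X},\Oo(-D))$ with differential $\dbar$). For the fermionic $\Pi\C^2$ part the solution complex is simply Dolbeault, so its contribution is the local coherent cohomology $H^\ast_Z(\br{X},\Oo(-D))^{\oplus 2}$. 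Since $Z$ has codimension two and sits inside $D$, I would compute this by a Koszul/residue resolution along $Z$: the answer is $H^0$ over $Z \iso \CP^1$ of the normal-bundle twists of $\Oo(-D)$, where the normal bundle splits into a direction along $D$ (whose $\op{Sym}$-powers give the $SO(2)$ spin tower) and a direction transverse to $D$ (controlled by the pole order in the boundary-defining coordinate). The resulting finite-dimensional $SU(2)_R$ representations, graded by spin, should reproduce exactly the $B^{(n)}$ and $C^{(n)}$ towers, consistent with the earlier identification of $\alpha,\gamma$ with $b,c$.

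The bosonic $(1,0)$ part is the substantive computation. Here the complex is the logarithmic Hodge complex $(\Omega^{\ge 2,\ast}(\br{X},\log D),\, \partial + \dbar)$, so I would invoke Deligne's theorem (quoted before \eqref{eqn_bc}) together with the degeneration of the logarithmic Hodge-to-de Rham spectral sequence to replace the local cohomology of this two-step complex by a local de Rham computation around $Z$. A singular Beltrami differential on $U \setminus Z$ is then detected by its residue along $Z$ and its transverse Laurent tail; the cohomologically imposed constraint $\partial \alpha = 0$ (via the auxiliary tower $\alpha^{3,\ast}$) together with the log-pole boundary condition trims the naive $V_{j+1}^L \oplus V_j^L \oplus V_{j-1}^L$ content of holomorphic vector fields down to the divergence-free piece $V_{j+1}^L \oplus V_{j-1}^L$, which should reproduce the $A^{(n)}$ and $D^{(n)}$ towers with $SU(2)_R$ spin $\tfrac{n}{2}$ and the correct $SO(2)$ spins $\tfrac{n}{2}-1$ and $\tfrac{n}{2}+1$. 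I would also impose the single-zero-mode-killing modification of the boundary condition (vanishing of $\int \op{Res}\,\alpha^{2,1}$), which removes the string-coupling mode and ensures the absence of the unwanted $n=0$ term, matching the chiral-algebra side.

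Finally I would assemble the four towers and compare characters with the explicit list of single-trace generators $A^{(n)}, B^{(n)}, C^{(n)}, D^{(n)}$ (and, in the matter case, the open-string tower $E^{(n)}_{\mf{t}}$ arising in the same way from $\mf{gl}(k\mid k)$ holomorphic Chern--Simons with the $\Oo(-D)$ boundary condition), checking that the $SU(2)_R \times SO(2)$ multiplicities agree term by term. The main obstacle I anticipate is precisely the bosonic $(1,0)$ Kodaira--Spencer computation: correctly propagating the log-pole boundary condition and the constraint $\partial \alpha = 0$ through the local cohomology along the codimension-two locus $Z$, so that the vector-field content is trimmed to the divergence-free part and nothing is over- or under-counted. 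Bookkeeping the two distinct normal directions and the boundary zero-mode subtraction simultaneously is where errors are most likely to appear.
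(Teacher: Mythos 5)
Your overall architecture coincides with the paper's: reduce ``solutions on $U\setminus(\CP^1\times z)$ modulo extendable ones'' to cohomology supported on the curve (the paper uses exactly this long exact sequence, e.g. $H^1(\br{X}\setminus\CP^1,\Oo(-D))\iso H^2_{\CP^1}(\br{X},\Oo(-D))$ for holomorphic Chern--Simons, and its analogue for the logarithmic complex), split the fields as $(1,0)$ Kodaira--Spencer theory plus $\Pi\C^2$ holomorphic Chern--Simons per Section \ref{sec:KS_alternative}, compute each supported cohomology, and match towers. Your Koszul/normal-bundle computation of $H^2_Z(\br{X},\Oo(-D))$ is equivalent to the paper's explicit basis $w^l\partial_n^{(k)}\partial_z^{(r)}\delta_{n=z=0}$, which transforms in $\Oo(k)$ on the $\CP^1$ and yields $SU(2)_R$ spin $k/2$ with $SO(2)$ spin $k/2+r+1$; so the identification of the fermionic sector with the $B^{(n)},C^{(n)}$ towers goes through.

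The gap is in the bosonic $(1,0)$ sector, and it is twofold. First, Deligne's theorem and the degeneration of the logarithmic Hodge-to-de Rham spectral sequence are statements about the \emph{global} hypercohomology of the log complex on the projective variety $\br{X}$; as stated they do not compute the \emph{supported} cohomology $H^\ast_{\CP^1}$ of the truncated complex $\Omega^{\ge 2,\ast}(\br{X},\log D)$, which is what the proposition needs. What actually closes this step is a concrete cancellation: take the four families of distributional generators $\d \log n\, \d z\, \partial_n^{(k)}\partial_z^{(l)}\delta_{z=n=0}$, $\d \log n\, \d w\, \partial_n^{(k)}\partial_z^{(l)}\delta_{z=n=0}$, $\d w\, \d z\, \partial_n^{(k)}\partial_z^{(l)}\delta_{z=n=0}$, $\d \log n\, \d w\, \d z\, \partial_n^{(k)}\partial_z^{(l)}\delta_{z=n=0}$, compute the $\dbar$-cohomology as $H^0$ of line bundles on the $\CP^1$, and then observe that the $\partial$-differential kills every $(3,\ast)$ class against one combination of the $(2,\ast)$ classes, leaving exactly two surviving towers. (Alternatively one can use the quasi-isomorphism of $\Omega^{\ast,\ast}(\br{X},\log D)$ with the pushforward of the constant sheaf from $X$, which forces the full supported log-de Rham cohomology along a curve in $D$ to vanish and lets one trade the $\ge 2$ truncation for the $\le 1$ truncation; but either way the divergence-free ``trimming'' is an explicit local-cohomology cancellation, not a consequence of degeneration.)

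Second, the representation content you quote is wrong, and since the proposition \emph{is} a statement about $SU(2)_R\times SO(2)$ content, this is not a cosmetic slip. The $SO(2)$ spins $\tfrac{n}{2}\pm 1$ you assign to the two bosonic towers are the $SL_2$-module labels appearing in the global symmetry algebra $\op{Vect}_0(SL_2(\C)) = \oplus_j \left(V^L_{j+1}\oplus V^L_{j-1}\right)\otimes V^R_j$, i.e. the spins of \emph{mode} representations, not the conformal weights of boundary modifications localized at a point of the boundary (which do not organize into finite-dimensional $V^L_j$'s at all). The supported-cohomology computation gives one tower of $SU(2)_R$ spin $(k+1)/2$ with $SO(2)$ weights $(k+1)/2+l$, $l\ge 0$, matching $A^{(n)}$ with $n=k+1$, and one tower of $SU(2)_R$ spin $(k-1)/2$ with $SO(2)$ weights $(k-1)/2+2+l$, matching $D^{(n)}$ with $n=k-1$. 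Getting these weights right is precisely the check the proposition demands, so this part of the argument has to be redone along the lines above.
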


\subsection{Modifying the boundary conditions for holomorphic Chern-Simons theory}

Fix a surface $\CP^1 \times 0$ in the boundary divisor $D$. We will choose coordinates $w$ on this curve, $z$ along the other $\CP^1$ in the boundary, and $n$ in the normal direction.  Since the normal bundle to the boundary is $\Oo(1,1)$, the function $n$ is a section of the bundle $\Oo(-1,-1)$ and so has a pole at $w = \infty$ and $z = \infty$.  

We will work on a neighbourhood $\CP^1 \times \C$ of the algebraic curve $\CP^1 \times 0$ in the boundary divisor. A neighbourhood of this region in the whole manifold $\br{X}$ is  not isomorphic to the total space of the normal bundle $\Oo(1)$ on $\CP^1 \times \C$. Instead, this bundle is deformed by the Beltrami differential   
\begin{equation} 
N n^2  \d \wbar \frac{1}{(1 + \abs{w}^2)^2}   \partial_z . \label{eqn_beltrami} 
 \end{equation}
(We are dropping some factors of $\pi$, which can be absorbed into scale of the holomorphic volume form on the deformed conifold). 

Here, we note that because $n^2$ has a pole of order $2$ at $w =\infty$, this expression makes sense globally on the total space of $\Oo(1) \to (\CP^1 \times \C)$.    

The complement of $n = 0$ in this complex manifold is the deformed conifold. There are holomorphic functions with poles at $n = 0$ given by $w_1 = 1/n, w_2 = w/n$ and
\begin{align} 
u_1 &= z/n -  N n    \frac{\wbar}{(1 + \abs{w}^2)}\\ 
u_2 &= wz/n + N n    \frac{1}{(1 + \abs{w}^2)}\label{eqn_boundary_coords} 
 \end{align}
We have $u_2 w_1 - u_1w_2 = N$. 

The group $SU(2)_R$ rotates the $\CP^1$ with coordinate $w$, and the Lorentz group $SO(2)$ rotates the $z$-plane. 

In these coordinates, the holomorphic volume form is, up to a factor independent of $N$, 
  \begin{align}
\Omega &= \frac{1}{w_1} \d u_1 \d w_1 \d w_2 \\
          &= - n^{-3} \d n \d w \d z  + N n^{-1} \d n \d w \d \wbar \frac{1}{(1 + \abs{w}^2)^2}. 
  \end{align}
  One can see quite explicitly in these coordinates that the integral over the three-cycle in $SL_2(\C)$ of $\Omega$ must be proportional to $N$.  This three-cycle is a Hopf fibration over the $\CP^1$ with coordinate $w$ in the boundary, so the integral of $\Omega$ over the three-cycle is proportional to the integral of the reside of $\Omega$ over the $w$-plane, which is evidently proportional to $N$. 

The fundamental field of holomorphic Chern-Simons theory is an element $A\in \Omega^{0,1}(\br{X}, \Oo(-D))\otimes \mf{psl}(K \mid K)[1]$. The ``vacuum'' boundary condition requires $A$ to vanish at the boundary, so that in the expansion in powers of $n$ the leading term is of order $n$.   We can try to modify the boundary condition in a very simple way, by asking that away from $z = 0$, the leading order term is still of order $n$, but near $z = 0 $ the leading order term is
\begin{equation} 
n^{-k}w^l \delta^{(r)} _{z = 0} t_a \label{configuration_initial} 
 \end{equation}
 where $a$ is an index for a basis of $\mf{psl}(K \mid K)$.  In this expression, $k \ge 0$, and $l \le k$ to ensure that there are no poles at $w = \infty$. Also $\delta^{(r)}_{z = 0}$ indicates the $r$th $z$-derivatives of the delta function. The group $SU(2)_R$ acts by change of coordinates on the $ w $ plane. This expression transforms in the $ SU (2)_R$ representation of dimension $ k +1 $. 

This is not quite sufficient to describe a boundary condition, however.  To fully describe a boundary condition, we need to write down a field configuration localized at $z = 0$ which solves the equations of motion modulo terms of order $n$ and higher. The configuration of \eqref{configuration_initial} does not satisfy the equations of motion, because of the Beltrami differential \eqref{eqn_beltrami} deforming the complex structure.   We can, however, easily add correction terms to \eqref{configuration_initial}   to ensure that it satisfies the equations of motion, while still ensuring that the singular terms remain localized near $z = 0$.  We will write the complete solution for the boundary condition which transforms as a highest weight vector under $SU(2)_R$.  The general case is obtained by applying the lowering operator
$$
\partial_{\wbar} - w^2 \partial_{w}
$$
of the $R$-symmetry algebra.

The complete expression for the solution to the equations of motion for the highest weight boundary condition,  stripping off the colour factor, is 
\begin{multline} 
	\mc{E}_{k,r} =   n^{-k} \delta^{(r)}_{z = 0} + \sum_{s = 1}^{k} \frac{1}{s!} n^{2s - k} (-1)^s N^s \frac{\wbar^s  }{ (1 + \abs{w}^2)^s   }  \delta_{z = 0}^{(r + s)} \\
	+ n^{k+2} \frac{(r+k+1)!}{k!} (-1)^{r}  \frac{N^{k+1}}{ 2 \pi \i} \frac{\wbar^k \d \wbar  }{ (1 + \abs{w}^2)^{k+2}   } \frac{1}{z^{r+k+2}} \label{eqn_boundary_soln_hcs} 
 \end{multline}
The field $\mc{E}_{k,r} t_a $ solves the equations of motion exactly, not just up to terms which satisfy the vacuum boundary condition.  To see this, we note that the Beltrami differential \eqref{eqn_beltrami} of one term in the expansion is minus the $\dbar$ operator in the $w$ plane applied to the next term; and the Beltrami differential applied to the last term vanishes, because it contains a $\d \wbar$.

The expression $\mc{E}_{k,r}$ is of weight $k/2$ under the Cartan of $SU(2)_R$ and of spin $k/2+r+1$ under $SO(2)$.  Because the boundary conditions require that our field is divisible by $n$, then $\mc{E}_{k,r}$ violates the boundary conditions at $z = 0$ even if $k = 0$.  

This expression is localized at $z = 0$ up to order $n^k$. At this point, we are forced to introduce a non-local expression. If we continued the pattern of the previous terms, we would find $n^{k+2} \wbar^{k+1} (1 + \abs{w}^2)^{-k-1}$ has a pole at $w = \infty$. To determine the prefactors in the last term we use the equation
\begin{equation} 
	\frac{1}{2 \pi i} (-1)^{r+k+1} (r+k+1)! \dbar z^{-r-k-2} = \delta_{z = 0}^{r+k+1}. 
\end{equation}

\subsection{An alternative formula for a field satisfying modified boundary conditions.}
Above we wrote an explicit formula for a field satisfying the linearized equations of motion of holomorphic Chern-Simons theory and the boundary condition except at $z = 0$. Here we will write another formula for such a field which has better global behaviour. These two field configurations are gauge equivalent.

We use homogeneous coordinates $U,W,V$ on $\br{X}$ so that $(U,W) =  N V^2$. Here the $(\cdot,\cdot)$ is the contraction of $SU(2)_R$ indices by $\epsilon^{ij}$. We let $\sigma$ be the Pauli matrix
\begin{equation}
\sigma = \begin{pmatrix} 0 & 1 \\
         -1 & 0 
\end{pmatrix}
\end{equation}

Then, we can take
\begin{align}
\mc{F}_{0,0}  &= \dbar \frac{(W, \sigma \br{U} )}{(U, \sigma \br{U})} \\
        &= \dbar \frac{W_1 \br{U}_1 + W_2 \br{U}_2  }{\abs{U_1}^2 + \abs{U}_2^2  }.
 \end{align}

The field $\mc{F}_{0,0} t_a$ satisfies the linearized equations of motion for holomorphic Chern-Simons theory. 

The boundary of $\br{X}$ is the locus $V = 0$, and so is the quadric with homogeneous coordinates $U,W$ satisfying $(U,W) = 0$.  This is a $\CP^1 \times \CP^1$ with coordinates $z,w$ so that $W = (1,w)$, $U = zW$.  As above we take $z$ to be the chiral algebra plane. The field $\mc{F}_{0,0}$ is non-singular, and divisible by $V$, except at the locus $U = 0$.  In the boundary, this is the curve where $z =0$.  Therefore $\mc{F}_{0,0}$ defines a modification of the boundary condition at $z = 0$.

We can analyze the behaviour of $\mc{F}_{0,0}$ near the boundary in the coordinates $w,z$: 
\begin{equation}
\mc{F}_{0,0} = \dbar \frac{\br{z} ( 1 + \abs{w}^2) }{\abs{z}^2(1 + \abs{w}^2)  }  = \dbar z^{-1}.  
\end{equation}
Thus, $\mc{F}_{0,0}$ is defined by taking some function which on the boundary is $1/z$ and applying $\dbar$.  Since $\mc{E}_{0,0}$ is also defined in the same way, they are gauge equivalent. 

Away from the boundary and from the locus $U = 0$ we have
\begin{equation}
\mc{F}_{0,0}= -N V^2 \frac{(\sigma \br{U},\d \sigma \br{U})}{(U, \sigma \br{U})^2} . 
\end{equation}
It turns out that for all $n$
\begin{equation}
\dbar \left( \bar U_{i_1} \cdots \bar U_{i_n} \frac{(\bar U,\d \bar U)}{(U, \sigma\br{U})^{n+2}} \right) =0 
\end{equation}
and thus we can define 
\begin{equation}
\mc{F}_{0,k} =  V^{2+k} \bar U_{i_1} \cdots \bar U_{i_k} \frac{(\bar U,\d \bar U)}{(U,  \sigma \br{U})^{k+2}}
\end{equation}
This satisfies the equations of motion and the boundary condition except at $U = 0$.  

The field $\mc{E}_{0,k}$ has a similar, but slightly more singular, global expression. The term which is not a $\delta$-function at $z = 0$ in $\mc{E}_{0,k}$ takes the form
\begin{equation}
 N^{k+1}  \bar{W}_2^k V^{2+k} \frac{ (\bar{W}, \d \bar W) } {(U,\sigma \br{W} )^{k+2}  } 
\end{equation}
where we have used used \eqref{eqn_boundary_coords} to translate between the coordinates $z,w,n$ and the coordinates $U_i,W_i$.  (Here we write the field which is a highest weight vector for $SU(2)_R$).   

The two field configurations are gauge equivalent up to a function of $N$. We can write down a family of field configurations
\begin{equation}
\mc{G}_{0,k}(s) = (s \bar{U}_2 + (1-s) \bar{W}_2)^k V^{2+k} \frac{ ((s \bar{U} + (1-s) \br{W}, s \d \bar U + (1 -s) \d \bar W  } {(U,(s \sigma \br{U} + (1-s)\sigma \br{W}) ^{k+2}  }. 
\end{equation}
We have
\begin{equation}
\partial_s \mc{G}_{0,k}(s) = \dbar \left(\iota_{(- \br{W}_i + \br{U}_i)(\partial_{\br{W}_i} + \partial_{\br{U}_i})} \mc{G}_{0,k}(s)   \right)  .
\end{equation}
Here $\iota_{-}$ indicates contraction with respect to the vector field.  Thus, the family of fields $\mc{G}_{0,k}(s)$ are gauge equivalent, and in particular $N^{-k-1}\mc{E}_{0,k}$ and $\mf{F}_{0,k}$ are.  

We will tend to use the field $\mc{E}_{0,k}$ to calculate holographic OPE coefficients because of its nice expansion near the boundary.

\subsection{Classifying all modifications of the boundary condition}
It turns out that all possible modifications of the boundary condition are of this form, up to gauge equivalence. Consider the most general consistent modification of the boundary condition localized at the $\CP^1$ given by $z = 0$, $n=0$.
We will assume that our fields will always extend across the boundary as distributions, so that 
\begin{equation} 
A \in \br{\Omega}^{0,1}(\br{X}, \Oo(-D)) 
 \end{equation}
(where $\br{\Omega}^{0,\ast}$ is the distributional de Rham complex). 

Then, the most general modification we can make to the equations of motion is asking that
\begin{equation} 
\dbar A = \eta 
 \end{equation}
 for some $\eta \in \br{\Omega}^{0,2}_{\CP^1}(\br{X})$. That is, $\eta$ is a distributional $(0,2)$ form supported on the $\CP^1$ where $n = z = 0$.  For example, we could take $\eta$ to be some $n$ and $z$ derivative of the $\delta$-function at $n = z = 0$, leading to a solution $A = n^{-k} \delta_{z =0}^{(r)} + \dots$.  

 In order for this equation to be consistent, we need $\dbar \eta = 0$.  Further, if $\eta = \dbar \chi$, for some $\chi \in \br{Omega}^{0,1}_{\CP^1}(\br{X})$, then we can solve this equation by setting $A = \chi$. Since $\chi$ is supported on the $\CP^1$ in the boundary, $A$ will be zero away from the boundary. This modified boundary condition does not source any bulk field and should be treated as being trivial. 

We see that possible modifications of the boundary condition are given by the Dolbeault cohomology groups 
\begin{equation} 
H^2_{\CP^1}(\br{X},\Oo(-D)) 
 \end{equation}
 with support on the chosen $\CP^1 \subset D$.
\begin{lemma} 
The cohomology groups $H^i_{\CP^1}(\br{X}, \Oo(-D))$ vanish unless $i = 2$. A basis for $H^2_{\CP^1}(\br{X}, \Oo(-D))$ is provided by $\dbar$ of the fields 
\begin{equation} 
	(\partial_{\wbar} - w^2 \partial_{w})^l \mc{E}_{k,r} t_a 
\end{equation}
for $k,r \ge 0$ as $0 \le l \le k$. These boundary modifications transform in the representation of spin $k/2$ of $SU(2)_R$, and are of spin $k/2+r+1$ under the $SO(2)$ Lorentz group.   \end{lemma}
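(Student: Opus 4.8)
The plan is to read $H^i_{\CP^1}(\br{X},\Oo(-D))$ as the sheaf-theoretic local cohomology $H^i_Z(\br{X},\Oo(-D))$ supported on the curve $Z=\CP^1\times\{0\}$ (this is exactly the distributional Dolbeault cohomology with support identified earlier), and to compute it with the local-cohomology spectral sequence
\begin{equation}
E_2^{p,q} = H^p\!\bigl(\br{X},\mc{H}^q_Z(\Oo(-D))\bigr)\ \Rightarrow\ H^{p+q}_Z(\br{X},\Oo(-D)).
\end{equation}
Since $Z$ is a smooth closed complex submanifold of codimension $2$ and $\Oo(-D)$ is locally free, the local cohomology sheaves $\mc{H}^q_Z$ vanish for $q\neq 2$, and the standard computation of the top local cohomology of a smooth complete intersection gives
\begin{equation}
\mc{H}^2_Z(\Oo(-D)) = \Oo(-D)|_Z \otimes \det N \otimes \Sym^\bullet N ,
\end{equation}
where $N=N_{Z/\br{X}}$; concretely this is the space of principal parts $g(w)\,n^{-k}\delta^{(r)}_{z=0}$ in the two normal coordinates $z,n$. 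The spectral sequence then collapses onto the single row $q=2$, reducing everything to sheaf cohomology of explicit line bundles on $Z\cong\CP^1$.

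First I would pin down the relevant bundles on $Z\cong\CP^1$. Restricting $\Oo(D)|_D = N_{D/\br{X}} = \Oo(1,1)$ to $Z$ gives $\Oo(-D)|_Z = \Oo(-1)$. For the normal bundle, the normal sequence $0\to N_{Z/D}\to N\to N_{D/\br{X}}|_Z\to 0$ reads $0\to\Oo\to N\to\Oo(1)\to0$ — the $z$-direction $N_{Z/D}$ is trivial and the $n$-direction $N_{D/\br{X}}|_Z$ is $\Oo(1)$ — and it splits because $\op{Ext}^1(\Oo(1),\Oo)=H^1(\CP^1,\Oo(-1))=0$, so $N=N_z\oplus N_n=\Oo\oplus\Oo(1)$. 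Hence the graded piece carrying $a$ normal $z$-derivatives and $c$ normal $n$-derivatives is $\Oo(-1)\otimes\det N\otimes\Sym^a N_z\otimes\Sym^c N_n = \Oo(-1)\otimes\Oo(1)\otimes\Oo\otimes\Oo(c)=\Oo(c)$, so
\begin{equation}
\mc{H}^2_Z(\Oo(-D)) = \bigoplus_{a\ge 0,\ c\ge 0}\Oo(c) \quad \text{on } \CP^1 .
\end{equation}
The twist by $\Oo(-D)$, recording first-order vanishing at the boundary, shifts the $n$-orders down by one so that the lowest class is $n^{0}\delta_{z=0}$ rather than $n^{-1}\delta_{z=0}$; this is precisely the leading term of $\mc{E}_{0,0}$.

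Then the vanishing statement and the basis both drop out. Collapse gives $H^i_Z = H^{i-2}(\CP^1,\mc{H}^2_Z(\Oo(-D)))$, which is zero for $i<2$ (negative degree) and for $i>3$ (cohomological dimension of $\CP^1$ is $1$); and $H^3_Z = H^1(\CP^1,\bigoplus_{c}\Oo(c))=0$ since every summand has degree $c\ge 0\ge -1$. This establishes $H^i=0$ for $i\neq2$. For the surviving group, $H^2_Z = \bigoplus_{a,c\ge0}H^0(\CP^1,\Oo(c))$, and $H^0(\CP^1,\Oo(c))$ is spanned by $w^l$, $0\le l\le c$. Relabelling $c=k$ and $a=r$, each summand is an $SU(2)_R$ multiplet of spin $k/2$ (the $k+1$ powers $w^l$) of $SO(2)$-weight $k/2+r+1$ — exactly the weights already computed for $\mc{E}_{k,r}$. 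I then identify these classes with $\dbar$ of the explicit fields: $\mc{E}_{k,r}$, whose principal part $n^{-k}\delta^{(r)}_{z=0}$ is the highest-weight monomial above and which solves the linearized equations away from $Z$, provides a representative; the $SU(2)_R$ lowering operator $\partial_{\wbar}-w^2\partial_w$ sweeps out the $w^l$ tower. By the weight match these representatives are independent and span, hence form a basis.

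The main obstacle is rigour around the deformed complex structure: near $Z$ the manifold $\br{X}$ is not the honest total space of $\Oo(1)\to\CP^1\times\C$ but its deformation by the Beltrami differential \eqref{eqn_beltrami}, so the local cohomology must be computed with the deformed $\dbar$. The saving feature is that this Beltrami differential vanishes to order $n^2$ along $D$ and thus to high order along $Z$, so it changes neither the formal normal-bundle data nor the leading principal parts; equivalently, one may run the whole computation in the genuinely holomorphic coordinates $w_1=1/n$, $w_2=w/n$, $u_1,u_2$ of \eqref{eqn_boundary_coords}, in which the normal directions and the function algebra are manifest. Verifying that the corrected fields $\mc{E}_{k,r}$ — including the non-local last term forced by regularity at $w=\infty$ on $\CP^1$ — are genuine $\dbar$-closed representatives rather than merely formal principal parts is where the real work lies; but this is essentially the check already carried out when $\mc{E}_{k,r}$ was constructed, so what remains is only to confirm completeness and independence, which the $SU(2)_R\times SO(2)$ weight bookkeeping supplies.
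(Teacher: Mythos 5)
Your proposal is correct and is essentially the paper's own argument in more formal dress: the paper's local statement that the Dolbeault cohomology of $\C^3$ supported on $\C$ sits entirely in degree $2$ and is spanned by the $\delta$-function and its normal derivatives is precisely your identification $\mc{H}^q_Z(\Oo(-D))=0$ for $q\neq 2$ together with $\mc{H}^2_Z(\Oo(-D))=\Oo(-D)|_Z\otimes\det N\otimes\Sym^\bullet N$, and the paper's global step (cohomology of $\CP^1$ with coefficients in $\Oo(k)$, concentrated in degree $0$ since $k\ge 0$) is exactly the collapse of your local-cohomology spectral sequence onto $H^0\bigl(\CP^1,\bigoplus_{a,c}\Oo(c)\bigr)$. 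Your handling of the deformed complex structure --- dropping the Beltrami differential, which is harmless because the undeformed answer is concentrated in a single total degree so no further differentials can act --- likewise matches the paper's opening move, so the two proofs coincide in content.
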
 
\begin{proof}
Let $U$ be a neighourhood of the $\CP^1$ given by $n = 0$, $z = 0$.  As usual, let $D \subset U$ be the boundary divisor $n = 0$. 

	We can calculate the relevant Dolbeault cohomology groups by first dropping the effect of the Beltrami differential \ref{eqn_beltrami}.  In this case, $U$ is a neighourhood of a $\CP^1$ inside the total space of $\Oo(1,1) \to \CP^1 \times \CP^1$. We can therefore identify $U$ with a neighbourhood of $\CP^1$ in $\Oo(1) \to \CP^1 \times \C$.

Let us work locally on $\CP^1$ for the moment. The Dolbeault cohomology of $\C^3$ with support on $\C \subset \C^3$ is entirely in degree $2$, and is spanned by the $\delta$-function on $\C$ and its normal derivatives.  

Thus, locally on $\CP^1$, the cohomology is given by expressions like
	\begin{equation} 
		w^l \partial_n^k \partial_z^r \delta_{n = z = 0}. 
	\end{equation}
This transforms, at first sight, as a section of the bundle $\Oo(k+1)$ on $\CP^1$, however we have an overall twist by $\Oo(-D) = \Oo(-1)$ so we find that this expression transforms in $\Oo(k)$. Under the $SO(2)$ action which rotates the plane of the CFT, this expression has spin $r+1+k/2$.   

Globally on $\CP^1$, we find that the relevant cohomology group is given by the cohomology of $\CP^1$ with coefficients in $\Oo(k)$ for each expression like $\partial_n^k \partial_z^r$. Because $k \ge 0$, we only find cohomology on $\CP^1$ in degree $0$. The total cohomological degree is $2$.  This cohomology transforms as the spin $k/2$ representation of $SL_2(\C)_R$, and spin $k/2+r+1$ under $SO(2)$, as desired.

\end{proof}
Although we have given an explicit formula for the bulk field sourced by these boundary modification, we can prove abstractly that there is a unique bulk field (up to gauge equivalence) satisfying any of these modified boundary conditions.

To see this, we note that there is a long exact sequence in Dolbeault cohomology
\begin{equation} 
\dots \to H^i (\br{X},\Oo(-D)) \to H^i(\br{X} \setminus \CP^1, \Oo(-D)) \to H^{i+1}_{\CP^1}(\br{X}, \Oo(-D)) \to \dots 
 \end{equation}
where the subscript $\CP^1$ on the last cohomology group indicates Dolbeault cohomology supprted at the $\CP^1$ in the boundary.  Because $H^i(\br{X}, \Oo(-D))$ vanishes for all $i$, we find that 
\begin{equation} 
H^1(\br{X} \setminus \CP^1, \Oo(-D)) \iso H^2_{\CP^1}(\br{X}, \Oo(-D)).  
 \end{equation}
 The space $H^1(\br{X} \setminus \CP^1,\Oo(-D))$ represents gauge equivalence classes of solutions to the linearized equations of motion for holomorphic Chern-Simons theory, which satisfy the boundary conditions except at the $\CP^1$ in the boundary.  This space is then isomorphic to $H^2_{\CP^1}(\br{X}, \Oo(-D))$, which is the space of modifications of the boundary conditions.  

 This abstract argument shows that given any modification of the boundary conditions, there is a unique up to gauge equivalence solutions of the linearized equations of motion for holomorphic Chern-Simons theory satisfying this modified boundary condition. 

It is clear from this calculation that boundary operators for holomorphic Chern-Simons transform in the same representation for $SL(2) \times SO(2)$ as the open-string operators for the large $N$ chiral algebra with matter.

\subsection{Boundary modifications for Kodaira-Spencer theory}
The fundamental field of $(1,0)$ Kodaira-Spencer theory is an element $\alpha^{2,1} \in \Omega^{2,1}(X)$, which satisfies the equation $\partial \alpha^{2,1}$ and the equation of motion $\dbar \alpha^{2,1} = 0$.  The boundary conditions state that $\alpha$ has a pole of order $1$ at $n = 0$.  

In this section we will calculate all possible boundary modifications at $z = 0$. The calculation is quite a bit more involved than the calculation we did for holomorphic Chern-Simons theory. 

As in the case of holomorphic Chern-Simons theory, the possible boundary modifications are controlled by distributional Dolbeault cocycles 
\begin{equation} 
 \mu^{2,2}  \in \br{\Omega}^{2,2}(\br{X}, \log D ).
 \end{equation}
We require that $\mu^{2,2}$ is supported on the $\CP^1$ given by $z = n = 0$. The modified boundary condition is given by modifying the equations of motion so that 
\begin{equation} 
\dbar \alpha^{2,1} = \mu^{2,2}. 
 \end{equation}
Since $\mu$ is given by a $\delta$-function and its derivatives, this equation is equivalent to specifying the poles of $\alpha^{2,1}$ along $z = n = 0$.  

For this equation to be consistent, we need $\mu^{2,2}$ to be $\dbar$-closed.  The field $\alpha^{2,1}$ also has the constraint that $\partial \alpha^{2,1} = 0$. This constraint can also be modified along a $\CP^1$ in the boundary, to give a constraint of the form
\begin{equation} 
\partial \alpha^{2,1} = \mu^{3,1} 
 \end{equation}
where 
\begin{equation} 
\mu^{3,1} \in \br{\Omega}^{3,1} (\br{X}, \log D)  
 \end{equation}
has support along the $\CP^1 \subset \br{X}$.  

For this modified boundary condition to be consistent, we need 
\begin{equation} 
\dbar \mu^{3,1} + \partial \mu^{2,2} = 0. 
 \end{equation}
As in the case of holomorphic Chern-Simons theory, $\mu = \mu^{3,1} + \mu^{2,2}$ should be taken up to gauge equivalence.  We thus find that the possible boundary modifications are given by cohomology classes of degree $1$ in the complex
\begin{equation} 
\Omega^{\ge 2,\ast}_{\CP^1}(\br{X}, \log D)  
 \end{equation}
with the differential $\dbar + \partial$.

We can calculate the cohomology group by first computing the cohomology of the $\dbar$-operator, dropping the term involving the Beltrami differential. We first work locally on $\CP^1$, and take Dolbeault cohomology in the normal direction. The resulting classes will transform as sections of some line bundle on $\CP^1$, and also will have a weight under $SO(2)$. We will write down the line bundles on $\CP^1$: 
\begin{align} 
	\d \log n \d z \partial_n^{(k)} \partial_z^{(l)} \delta_{z = n = 0} & \in \Oo(k+1)   \\ 
	\d \log n \d w \partial_n^{(k)} \partial_z^{(l)} \delta_{z = n = 0} & \in \Oo(k-1) \\
	 \d w \d z \partial_n^{(k)} \partial_z^{(l)} \delta_{z = n = 0} & \in \Oo(k-1)   \\ 
	\d \log n \d w \d z \partial_n^{(k)} \partial_z^{(l)} \delta_{z = n = 0} &\in  \Oo(k-1). 
 \end{align}

Note that there are no line bundles of degree $< -1$, so that all Dolbeault cohomology with coefficients in these bundles is in degree $0$.  (Incorporating the degree shifts from the $\delta$-functions, the total degree will always be $2$) .The spin under $SU(2)_R$ of  a class transforming in $\Oo(k)$ is $k/2$.  

Next let us incorporate the $\partial$ differential,
\begin{equation} 
\partial = ( \d \log n) n \partial_n + \d w \partial_w + \d z \partial_z. 
 \end{equation}
 This kills $\d \log n \d z \partial_n^{(k)} \partial_z^{(l)} \delta_{z = n = 0}$, and sends  
\begin{align} 
	\partial \left(\d \log n \d w \partial_n^{(k)} \partial_z^{(l)} \delta_{z = n = 0}\right)  & = \d n \d w \d z \partial_n^{(k)} \partial_z^{(l+1)} \delta_{z = n = 0}.   \\
	\partial \left( \d w \d z \partial_n^{(k)} \partial_z^{(l)} \delta_{z = n = 0} \right) &= (k+1) \d \log n \d w \d z \partial_n^{(k)} \partial_z^{(l)} \delta_{z = n = 0}.   
 \end{align}
From this we see that every $3$-form class is exact. 

Therefore a basis of the cohomology is provided by expressions like
\begin{equation} 
 \d \log n \d z \partial_n^{(k)} \partial_z^{(l)} \delta_{z = n = 0} 
 \end{equation}
which is of spin $(k+1)/2$ under $SU(2)_R$ and of spin $l+(k+1)/2$ under $SO(2)$, together with 
\begin{equation} 
  \d \log n \d w \partial_n^{(k)} \partial_z^{(l)} \delta_{z = n = 0} - \frac{1}{k+1}  \d w \d z \partial_n^{(k)} \partial_z^{(l+1)} \delta_{z = n = 0}  
 \end{equation}
This has spin $(k-1)/2$ under $SU(2)_R$ and spin $l + 2 + (k-1)/2$ under $SO(2)$.  

This gives us a perfect match with the operators of the large $N$ chiral algebra.  The first collection of operators corresponds to the $A$-tower of operators in the large $N$ chiral algebra, and the second collection to the $D$-tower.  The spins and $R$-charges match up perfectly. 

To find the boundary modifications corresponding to the $C$ and $D$ towers, we incorporate holomorphic Chern-Simons theory for $\Pi \C^2$.  As we have seen above, this gives us two towers of operators which transform in the spin $k/2$ representation of $SL_2(\C)_R$, and are of spin $k/2+r+1$ under $SO(2)$, for $k \ge 0$, $r \ge 0$.  This matches precisely the $B$ and $C$ towers of the large $N$ chiral algebra.

\subsection{The field sourced by a boundary modification in Kodaira-Spencer theory}
We can give a cohomological construction of the field sourced by a boundary modification, in analogy with what we did for holomorphic Chern-Simons theory. To do this, we note that there is an exact sequence of Dolbeault cohomology
\begin{equation} 
\dots \to H^i(\br{X}, \Omega^{\ge 2,\ast}_{\br{X}}(\log D) ) \to H^i(\br{X} \setminus \CP^1, \Omega^{\ge 2,\ast}_{\br{X}}(\log D) ) \to H^{i+1}_{\CP^1}(\br{X}, \Omega^{\ge 2,\ast} _{\br{X}} (\log D)) \to \dots  
 \end{equation}
 Here $H^i_{\CP^1}(-)$ indicates Dolbeault cohomology with support on a $\CP^1$ in the boundary.

As we have seen, $H^i(\br{X}, \Omega^{\ge 2,\ast}_{\br{X}}(\log D)) = \C$ if $i = 1$, and is zero for $i \neq 1$. Further $H^i_{\CP^1}(\Omega^{\ge 2,\ast}_{\br{X}} (\log D))$ is concentrated in degree $2$.  and the other cohomology groups vanish. We thus find an exact sequence
\begin{equation} 
0 \to \C \to  H^1(\br{X} \setminus \CP^1, \Omega^{\ge 2,\ast}_{\br{X}}(\log D) )  \to H^{2}_{\CP^1}(\br{X}, \Omega^{\ge 2,\ast}_{\br{X}} (\log D)) \to 0. 
 \end{equation}
This tells us that every boundary modification -- defined by an element of $H^{2}_{\CP^1}(\br{X}, \Omega^{\ge 2,\ast}_{\br{X}} (\log D))$ -- lifts to a solution to the equations of motion on $\br{X} \setminus \CP^1$, which satisfies the boundary conditions except along the $\CP^1$. 

This lift is not unique, because of the first term in the exact sequence. We are free to add on a closed $(2,1)$-form with logarithmic poles along $D$ whose residue has non-trivial period along a curve in the boundary. However, if we impose the additional constraint that $\int_{n = 0, z = 0 } \op{Res} \alpha^{2,1} = 0$, uniqueness is restored. 

We have shown that the space of boundary modifications of Kodaira-Spencer theory matches exactly with the single-trace elements of the large $N$ chiral algebra. Further, every boundary modification sources a unique field which satisfies the boundary conditions everywhere except at the chosen $\CP^1$ in the boundary.

The same statements hold when we couple $\mf{gl}(k \mid k)$ holomorphic Chern-Simons theory. In the $\lambda \to 0$ limit, the identity component of $\mf{gl}(k \mid k)$ cancels with one of the two towers of fermionic closed-string fields, leaving us with holomorphic Chern-Simons theory for $\mf{pgl}(k \mid k) \oplus \Pi \C$. The $\Pi \C$ factor gives the other tower of fermionic closed-string fields. This matches exactly with what happens in the planar limit of the large $N$ chiral algebra.

\section{Hamiltonian approach to boundary operators}
Above we described boundary operators in terms of modifications of the boundary condition. 
We will find it useful to discuss the other side of the state-operator map: the space of states of the bulk 
theory on a manifold analogous to the Wick rotation of global Lorentzian AdS$_3$, whose boundary is a 
$\C^*$. We will construct the space of states by geometrcic quantization of the phase space, and we will see that it matches the space of local modifications of the boundary conditions.

\subsection{The space of states}
The Wick rotation of global Lorentzian AdS$_3$ differs from global AdS$_3$ by the removal of 
the points $z=0$ and $z=\infty$ at the boundary, so that the boundary is a cylinder and the compactified 
geometry is a solid cylinder. Time-translations are given by the 
$L_0+ \bar L_0$ generator of the conformal group and one is interested in the space of states 
for a constant time slice $y^2 + |z|^2 = \mathrm{const}$. 

In six dimensions, we can remove the $\CP^1$'s at $z=0$ and $z=\infty$ at the boundary, i.e. 
the loci $U_1 = U_2 = 0$ and $W_1 = W_2 = 0$ in $\br{X}$ (recalling that $\br{X}$ is the projective variety defined by the homogeneous equation $\eps_{ij} U_i W_j = N V^2$).   We can denote the resulting manifold as 
$\widehat{X}$. We have an obvious projection $\widehat{X} \to \CP^1 \times \CP^1$
where $U_i$ are homogeneous coordinates on the first $\CP^1$ factor, $W_i$ on the second. 
The fibers are acted upon transitively by $L_0$. The image of the boundary is the diagonal in $D^1 \in \CP^1 \times \CP^1$.

We can now consider the space of linearized solutions of the KS or hCS equations of motion 
on this geometry.   We will find that this space has a symplectic form.  By quantizing, using the polarization given by the decomposition into positive and negative eigenspaces of $L_0$,  we will find the Hilbert space of the theory.

This is guarateed to give us the same space as we found when we studied boundary modifications. The variety $\what{X}$ is obtained from $\br{X}$ by removing the two $\CP^1$'s, $\CP^1_{z = 0}$ and $\CP^1_{z = \infty}$.  We get a cover of $\br{X}$ by saying $Y_0$ is the locus where we have removed $\CP^1_{0}$, and $Y_{\infty}$ is the locus where we have removed $\CP^1_{\infty}$. The intersection $Y_0 \cap Y_{\infty}$ is $\what{X}$. 

We can compute the space of linearized solutions to the equation of motion for holomorphic Chern-Simons theory (or KS theory) on $\br{X}$ by using a Mayer-Vietoris exact sequence in Dolbeault cohomology:
\begin{equation}
\dots \to H^i_{\dbar}(\br{X}, \Oo(-D)) \to  H^i_{\dbar}(Y_0,\Oo(-D)) \oplus H^i_{\dbar}(Y_{\infty}, \Oo(-D) ) \to  H^i_{\dbar}(\what{X}, \Oo(-D)) \to \dots. 
\end{equation} 
Since $H^i_{\dbar}(\br{X}, \Oo(-D)) = 0$, because of the absence of zero-modes on $\br{X}$, we find that $H^i_{\dbar}(\what{X}, \Oo(-D))$ is a direct sum of $H^i_{\dbar}(Y_0,\Oo(-D))$ and $H^i_{\dbar}(Y_{\infty}, \Oo(-D))$.  That is, the phase space is the direct sum of boundary modifications at $0$ and at $\infty$.

The positive eigenvalues of $L_0$ are at $0$, and the negative are at $\infty$. Quantizing using the polarization given by the eigenspaces of $L_0$ gives a Fock space which is given by products of boundary modifications at $0$.

The same argument applies to KS theory, as all we have used is the fact that there are no zero modes. We know this is the case for KS theory, as long as we impose a constraint about the vanishing of the integral of the residue of the $(2,1)$ form along a $\CP^1$ in the boundary.

For holomorphic Chern-Simons theory, we can write down a basis of solutions to the equations of motion explicitly. The boundary modifications at $z = 0$ which are in the kernel of $L_1$ are given by 
\begin{equation}
\mc{F}_{0,k}^+ =  V^{2+k} \bar U_{i_1} \cdots \bar U_{i_k} \frac{(\bar U,\d \bar U)}{(U,  \sigma \br{U})^{k+2}}
\end{equation}
(times some Lie algebra element $\mf{t}_a$).   Those at $z = \infty$ which are in the kernel of $L_{-1}$ are given by
\begin{equation}
\mc{F}_{0,k}^{-} =  V^{2+k} \bar W_{i_1} \cdots \bar W_{i_k} \frac{(\bar W,\d \bar W)}{(W,  \sigma \br{W})^{k+2}.}
\end{equation}
Since the global conformal $SL_2(\C)$ is represented by the vector fields
\begin{align}
L_1 &= -U_i \partial_{W_i} - \br{U}_i \partial_{\br{W}_i} \\ 
L_0 &= \tfrac{1}{2} \left( - U_i \partial_{U_i} - \br{U}_i \partial_{\br{U}_i} + W_i \partial_{W_i} + \br{W}_i \partial_{\br{W}_i}\right) \\ 
L_{-1} &= W_i \partial_{U_i} + \br{W}_i \partial_{\br{U}_i}
\end{align}
we find that a complete basis of solutions to the linearized equations of motion is given by
\begin{equation}
L_{-1}^n \mc{F}_{0,k}^+ \ \ L_{1}^n \mc{F}_{0,k}^{-}.
\end{equation}
The Fock space given by the positive eigenvalues of $L_0$ is spanned by polynomials in $L_{-1}^n \mc{F}_{0,k}^+$. This, of course, matches what we found studying boundary modifications.

\subsection{The symplectic form}
The symplectic form on the phase space for hCS theory is easy to define.  An element in the phase space is represented by $(0,1)$ form $A$ on $\what{X}$, with coefficients in $\mf{psl}(n \mid n)$.  The symplectic form is given by 
\begin{equation}
\ip{A,A'}  =\int \Omega_{\br{X}} \op{Tr} A A'. 
\end{equation}
where the integral is performed over the $5$-cycle which is the unit circle bundle in the $\C^\times$ fibration $\what{X} \to \CP^1 \times \CP^1$.

For KS theory, the symplectic form is given by the same formula, except we must use $\partial^{-1}$:
 \begin{equation}
\ip{\alpha,\alpha'}  =\int \alpha \partial^{-1} \alpha'
\end{equation}
Here $\alpha$, $\alpha'$ are $(2,1)$ forms and the integral is performed over the same $5$-cycle.

In the case of holomorphic Chern-Simons theory, we can perform the integral over the base of the $S^1$ fibration: 
\begin{equation}
\int_{\CP^1 \times \CP^1} \left(i_{L_0} \Omega\right) \wedge  A \wedge  A'
\end{equation}
The $(2,0)$ form $i_{L_0} \Omega$ can be written as 
\begin{equation}
i_{L_0} \Omega = \frac{4 \pi^2}{N} V^{-4} (U, \d U)(W,\d W) = 4 \pi^2 N \frac{(U, dU)(W,dW)}{(U,W)^2}
\end{equation}

We can compute 
\begin{equation}
\int_{\CP^1 \times \CP^1} \left(i_{L_0} \Omega\right) \wedge \mf{F}_{0,0}^+ \wedge \mc{F}_{0,0}^{-} =  
\frac{4 \pi^2}{N}  \int_{\CP^1} \frac{(W,\d W)(\bar W,\d \bar W)}{(W, \bar W)^2}\int_{\CP^1} \frac{(U,\d U)(\bar U,\d \bar U)}{(U, \bar U)^2}
\end{equation}
or more generally
\begin{multline}
\int_{\CP^1 \times \CP^1} \left(i_{L_0} \Omega\right) \wedge \mc{F}_{0,n}^+ \wedge \mc{F}_{0,n}^-   \\
        = \frac{4 \pi^2}{N^{1+n}}  \int_{\CP^1 \times \CP^1} \frac{\bar W_{i_1} \cdots \bar W_{i_n}\bar U_{j_1} \cdots \bar U_{j_n}(U,W)^{n}}{(U, \bar U)^n(W, \bar W)^n}\frac{(U,\d U)(\bar U,\d \bar U)}{(U, \bar U)^2} \frac{(W,\d W)(\bar W,\d \bar W)}{(W, \bar W)^2} 
\end{multline}

As we quantize the system, we will obtain a Fock space generated by the creation operators. 
Each of these correspond to one of the expected closed string operators in the chiral algebra.  The two point functions of the closed-string operators are determined by the commutator between a creation operator and the corresponding annihilation operator.  Thus, the two-point functions of the state $\mc{F}_{0,n}^+$ is of order $N^{-1-n}$. Since $\mc{E}_{0,n}$ is cohomologous to $N^{1+n} \mc{F}_{0,n}^+$, we find that it's two point function is of order $N^{1+n}$, a result which we will reproduce by Witten diagrams shortly.

\section{Witten diagrams}
In this section we will explain how to calculate two and three-point functions in the planar limit by using Witten diagrams.  We will not actually perform many such calculations, as they quickly become difficult. We will find that a more efficient computational method is to use the global symmetry algebra, which we have already shown matches on the CFT and the holographic side.  Our goal is rather to show that one can perform Witten diagram computations in principle; and to demonstrate that what we asserted is the holographic global symmetry algebra really is the global symmetries of the chiral algebra one builds using Witten diagrams.

\subsection{Two point functions for holomorphic Chern-Simons theory}
We will first compute Witten diagrams for holomorphic Chern-Simons theory. We will strip off the colour factors in this analysis. 

Suppose we have some different boundary modifications
\begin{equation} 
	\mu_i^{0,2} \in \br{\Omega}^{0,2}_{\CP^1 \times z_i} (\br{X},\Oo(-D)) 
 \end{equation}
 where $\CP^1 \times z_i$ are curves in the boundary $D = \CP^1_w \times \CP^1_z$.
We solve the equations of motion with the modified boundary conditions:
\begin{align} 
	A^{0,1}_i& \in \br{\Omega}^{0,1}(\br{X}, \Oo(-D))\\
	\dbar A^{0,1}_i &= \mu_i^{0,2}.
\end{align}
Explicit formulae for gauge representatives of $A^{0,1}_i$ are given in equation \eqref{eqn_boundary_soln_hcs}.

Then, the two-point function is given by 
\begin{align} 
	\ip{\mu_1(z_1) \mu_2(z_2)} &= \int_{X} A^{0,1}_1 \dbar A^{0,1}_2 \Omega_X \\
        &= \int_{X} A^{0,1}_1 \mu^{0,2}_2 \Omega_X. 
\end{align}
Note that, by construction, $\mu^{0,2}_2$ is a $\delta$-function supported near a point $z_i$ in the boundary, so that this integral reduces to an evaluation of $A^{0,1}_1$ at $z_2$. 

The three-point function is
\begin{equation} 
	\ip{\mu_1(z_1)\mu_2(z_2) \mu_3(z_3)} = \int_{X} A^{0,1}_1 \wedge A^{0,1}_2 \wedge A^{0,1}_3 \Omega_X. 
\end{equation}

How do we compute these? Let us do the example where both boundary modifications have no $z$-derivatives and live in the trivial representation of $SU(2)_R$. In this case, we can take \begin{equation} A^{0,1}_i =  
	\mc{E}_{0,0} =    \delta_{z = z_i} + n^{2}  \frac{N}{ 2 \pi \i} \frac{ \d \wbar  }{ (1 + \abs{w}^2)^{2}   } \frac{1}{(z-z_i)^{2}} 
\end{equation}
Recall that the boundary conditions require the fields to be divisible by $n$. The term $\delta_{z = z_i}$ fails to satisfy the boundary conditions, and has a first-order pole when viewed as a section of the line bundle $\Oo(-D)$. Therefore have 
\begin{equation} 
	\mu_i = \dbar A^{0,1}_i = \delta_{n = 0} \delta_{z = z_i}. 
\end{equation}
Here we must take a little care: we view $\delta_{n = 0}$ as a distributional section of $\Oo(-D)$.

The holomorphic volume form is, up to a constant,
 \begin{equation}
\Omega =  n^{-3} \d n \d w \d z  - N n^{-1} \d n \d w \d \wbar \frac{1}{(1 + \abs{w}^2)^2}. 
 \end{equation}
  So the two-point function is
\begin{equation} 
	\int_{n,w,z} \Omega  A^{0,1}_{1} \dbar A^{0,1}_2 .
\end{equation}
Note that the terms in $A^{0,1}_1$ and $A^{0,1}_2$ which are not localized at $z = z_i$ both involve $\d \wbar$.  Therefore only the $n^{-3} \d n\d w \d z$ term in the holomorphic volume form contributes to the integral.

Since $\dbar A^{0,1}_2$ is supported near $n = 0$, $z = z_2$, we should only integrate $n,z$ in a neighbourood of these values.  This means that we should only keep the coefficient of $n^2$ in $A^{0,1}_1$, as the coefficient of $n^0$ is supported near $z = z_1$.  The integral becomes
\begin{equation} 
  \int_{\abs{n} \le \eps, \abs{z - z_2} \le \eps,w}  n^{-3} \d n \d w \d z     n^2 \frac{N}{2 \pi \i}  \frac{ \d \wbar  }{ (1 + \abs{w}^2)^{2}   } \frac{1}{(z-z_1)^{2}} \dbar \delta_{z = z_2} 
\end{equation}
Note that here the $\dbar$ operator can only apply in the $n$ direction, as we already have a $\d \wbar$ and a $\d \zbar$ (present in $\delta_{z = z_2}$).  For the purposes of integrating over the $z$ and $w$ planes, the $\dbar$ plays no role. 

Because of this, we can write $\dbar (\delta_{z = z_2})$ as $\dbar(1) \delta_{z = z_2}$, recalling that the $\dbar$ operator is that acting on sections of $\Oo(-D)$. Integrating over $z,w$ yields 
\begin{equation} 
	\op{Vol}(\CP^1_w)\frac{ N}{2 \pi \i} \frac{1}{(z_1 - z_2)^2}  \int_{{\abs{n} \le \eps}} n^{-1} \d n \dbar 1 
\end{equation}
Since $1$ is viewed as a section of the line bundle $\Oo(-D)$ of functions vanishing at $n = 0$, it has a pole at $n = 0$. Sections of $\Oo(-D)$ pair naturally with $1$-forms such as $n^{-1} \d n$ with a pole at $n = 0$, as in this integral.

By integrating by parts and picking up the boundary term the integral becomes
\begin{equation} 
\op{Vol}(\CP^1_w)  \frac{ N}{2 \pi \i} \frac{1}{(z_1 - z_2)^2}  \oint_{\abs{n} = 1} n^{-1} \d n.
 \end{equation}

We have found that the two-point function is
\begin{equation} 
 	\ip{\mu_1(z_1) \mu_2(z_2)} \simeq \frac{N}{(z_1 - z_2)^2} 
 \end{equation}
where we have dropped factors of $\pi$, etc. 

One can, in the same way, compute the $2$-point function for the other boundary modifications of holomorphic Chern-Simons theory.  We can take
\begin{equation} 
A^{0,1}_1 = \mc{E}_{k,0} 
 \end{equation}
where as before
\begin{multline} 
	\mc{E}_{k,0}(z_1) =   n^{-k} \delta_{z = z_1} + \sum_{s = 1}^{k} \frac{1}{s!} n^{2s - k} (-1)^s  N^s \frac{\wbar^s  }{ (1 + \abs{w}^2)^s   }  \delta_{z = z_1}^{ (s)} \\
	+ n^{k+2} \frac{(k+1)!}{k!}  \frac{N^{k+1}}{ 2 \pi \i} \frac{\wbar^k \d \wbar  }{ (1 + \abs{w}^2)^{k+2}   } \frac{1}{(z-z_1)^{k+2}} 
 \end{multline}
This field is the highest weight vector in a representation of $SU(2)_R$ of spin $k/2$. To get a non-zero two-point function, we should pair it with the lowest weight vector which is
\begin{equation} 
 A^{0,1}_2 = \mc{L}_{\partial_{\wbar} - w^2 \partial_{w}}^k \mc{E}_{k,0}(z_2)
 \end{equation}
 where we bear in mind that we are applying the Lie derivative and that $n$ transforms as a half-density $(\d w)^{k/2}$. 

By following the analysis for the case $k = 0$, we find that the integral
\begin{equation} 
\int n^{-3} \d n \d w \d z A^{0,1}_1 \dbar A^{0,1}_2 
 \end{equation}
reduces to an integral over the region where $n$ is near $0$, and $z$ is near $z_2$.  Thus only the coefficient of $n^{k+2}$ in $A^{0,1}_1$ plays a role. Similarly, only the coefficient of $n^{-k}$ in $A^{0,1}_2$ plays a role, because we need to get an overall $n^{-1}$ to find a non-zero answer. We find we are computing (dropping various factors of $\pi$) 
\begin{equation} 
N^{k+1} \frac{1}{(z_1 - z_2)^{k+2} }  \int_{n,w} n^{k-1} \d n \d w  \frac{\wbar^k \d \wbar  }{ (1 + \abs{w}^2)^{k+2}} \dbar \mc{L}_{w^2 \partial_w}^k n^{-k}. 
 \end{equation}
 Recalling that $n$ transforms as $(\d w)^{1/2}$, $\mc{L}_{w^2 \partial_w} n^{-1} =- w n^{-1}$, so that the integral becomes (up to an overall constant)
 \begin{equation} 
 N^{k+1} \frac{1}{(z_1 - z_2)^{k+2} }  \int_{n,w} n^{k-1} \d n \d w  \frac{\wbar^k \d \wbar  }{ (1 + \abs{w}^2)^{k+2}} w^k  \dbar n^{-k}. 
  \end{equation}
  The integral over the $w$-plane is the convergent expression
  \begin{equation} 
  \int \d w \d \wbar \frac{\norm{w}^{2k} }{ (1 + \norm{w}^2)^{k+2}}. 
   \end{equation}
   The remaining integral over the disc $\abs{n} \le 1$ can be evaluated by integration by parts as before, to show that the $2$-point function is proportional to 
   \begin{equation} 
   N^{k+1} (z_1 - z_2)^{-k-2} . 
    \end{equation}
This matches the two-point function of the corresponding operators in the large $N$ chiral algebra.

Note that the calculations given above apply also to the holographic dual of the $C$ and $D$ towers of the large $N$ chiral algebra.

Unfortunately, we did not see an easy way to compute the two-point functions of Kodaira-Spencer theory using Witten diagrams (except for the fermionic operators which match  the $C$ and $D$ towers). The reason is that the kinetic term of Kodaira-Spencer theory is $\int \alpha \dbar \partial^{-1} \alpha$, and the presence of $\partial^{-1}$ complicates the analysis.  However, this is not a serious problem, because these two-point functions are determined from the OPE coefficients and the two-point functions of the $C$ and $D$ towers. 

\subsection{OPE coefficients and three point functions}
In principle, we could calculate the three-point function by analyzing the integral
\begin{equation} 
\int_{SL_2(\C)} A_1^{0,1} A_2^{0,1} A_3^{0,1} \Omega 
 \end{equation}
 where $A_i^{0,1}$ are forms sourced by modifications of the boundary condition.  

 It turns out to be simpler to compute the OPE coefficients by a slightly different technique.  The $3$-point functions are determined from the OPE coefficients and the $2$-point functions. 

 Let us explain, in general, how to compute the OPE coefficients.  We will describe the computation in a way that is independent of gauge. If we choose a gauge, we will find an answer equivalent to that given by Witten diagrams.

 Suppose that $A_{1,a_1}(z_1)$, $A_{2,a_2}(z_2)$ are two fields of holomorphic Chern-Simons theory which satisfy the linearized equations of motion $\dbar A = 0$, and satisfy the boundary condition except along the curve $z = z_i$ in the boundary.  The three-point function with any field $A_{3,a_3}(z_3)$ is given by 
 \begin{equation} 
\int f^{a_1 a_2 a_3} A_{1,a_1}(z_1)   A_{2,a_2}(z_2)   A_{3,a_3}(z_3)\Omega . 
  \end{equation}
The OPE coefficients will be defined by constructing some field  $A_{12,c}(z_1,z_2)$ whose two-point function with $A_{3,a_3}$ will be given by the three-point function above. That is, we need
\begin{equation} 
 \int g^{c,a_3}  \dbar A_{12,c} (z_1,z_2)  A_{3,a_3}\Omega  =   \int f^{a_1,a_2,a_3} A_{1,a_1}(z_1)   A_{2,a_2}(z_2)   A_{3,a_3}(z_3)\Omega . 
 \end{equation}
 (Here $g$ is the inner product on the Lie algebra and $f$ are the structure constants).  For this to hold, we must have
 \begin{equation} 
 \dbar A_{12,c}(z_1,z_2) = f^{a_1a_2}_c A_{1,a_1}(z_1) A_{2,a_2}(z_2). \label{eqn_holographic_ope1} 
  \end{equation}

To find the OPE, we need to take any solution to this equation and then expand in series in $(z_1 - z_2)^{-1}$.  Note that equation \eqref{eqn_holographic_ope1} is equivalent to the statement that 
\begin{equation} 
\eps_1 A_{1} (z_1) + \eps_2 A_{2}(z_2) - \eps_1 \eps_2 A_{12}(z_1,z_2)\label{eqn_holographic_ope2} 
 \end{equation}
solves the equations of motion for holomorphic Chern-Simons theory modulo terms involving $\eps_1^2$ or $\eps_2^2$.  

The terms in equation \eqref{eqn_holographic_ope2} linear in $\eps_i$ have no pole at $z_1 = z_2$. It follows that the polar part of $A_{12}(z_1,z_2)$ must satisfy the linearized equations of motion $\dbar A_{12}(z_1,z_2) = 0$, which is what we need to define a modification of the boundary condition.

This is a general prescription for calculating OPE coefficients holographically. A boundary operator is given by a solution of the linearized field equations which fails to satisfy the boundary conditions at a point on the boundary.  The OPE of two such boundary operators is obtained by finding solutions to the field equations as in \eqref{eqn_ope}, whose terms linear in $\eps_i$ are the two solutions of the linearized field equations whose OPE we are computing.  The singular part of the coefficient of $\eps_1 \eps_2$ gives the OPE coefficients.

In the example of holomorphic Chern-Simons theory, we can compute the OPE coefficients explicitly.  Let us do this for the example when 
\begin{equation} 
	A_i =  
	\mf{t}_{a_i} 	\mc{E}_{0,0} (z_i) = \mf{t}_{a_i}   \delta_{z = z_i} + \mf{t}_{a_i} n^{2}  \frac{N}{ 2 \pi \i} \frac{ \d \wbar  }{ (1 + \abs{w}^2)^{2}   } \frac{1}{(z-z_i)^{2}} 
\end{equation}
where $\mf{t}_{a}$ is a basis of the Lie algebra $\mf{gl}(n \mid n)$.  Note that 
\begin{equation} 
\mc{E}_{0,0}(z_i) =  \frac{1}{2 \pi \i} \dbar \frac{1}{z - z_i}. 
 \end{equation}
Here we use the full $\dbar$ operator of the deformed conifold, including the term from the Beltrami differential.  

Then, we see that 
\begin{equation} 
\eps_1 \mc{E}_{0,0}(z_1) \mf{t}_{a_1} + \eps_2 \mc{E}_{0,0}(z_2) \mf{t}_{a_2} + \eps_1 \eps_2 [ \mf{t}_{a_1}, \mf{t}_{a_2}] \mc{E}_{0,0}(z_1) \frac{1}{2 \pi \i} \frac{1}{z - z_2}   
 \end{equation}
satisfies the equations of motion modulo $\eps_1^2$, $\eps_2^2$.  We conclude that the OPE coefficients are given by the polar part of 
\begin{multline} 
- [ \mf{t}_{a_1}, \mf{t}_{a_2}] \mc{E}_{0,0}(z_1) \frac{1}{2 \pi \i} \frac{1}{z - z_2} 
\\= -[ \mf{t}_{a_1}, \mf{t}_{a_2}]\left( \frac{1}{2 \pi \i} \frac{1}{z_1 - z_2} \delta_{z = z_1} + n^2  \frac{N}{2 \pi \i} \frac{ \d \wbar  }{ (1 + \abs{w}^2)^{2}   } \frac{1}{(z-z_1)^{2}(z - z_2)}  \right) \label{eqn_ope} 
 \end{multline}
 The part of this expression that violates the boundary condition is $ -[ \mf{t}_{a_1}, \mf{t}_{a_2}]\left( \frac{1}{2 \pi \i} \frac{1}{z_1 - z_2} \delta_{z = z_1}\right) $.  We can conclude formally that if we take the coefficient of $n^2$, and take its polar part as a function of $z_1 - z_2$, that this must be 
 \begin{equation} 
   -[ \mf{t}_{a_1}, \mf{t}_{a_2}]n^2 \frac{1}{2 \pi \i (z_1 - z_2)}  \frac{N}{ 2 \pi \i} \frac{ \d \wbar  }{ (1 + \abs{w}^2)^{2}   } \frac{1}{(z-z_1)^{2}}  
  \end{equation}
The point is that  the coefficient of $n^2$ is characterized by the fact that together with the coefficient of $n^0$, the linearized equations of motion are satisfied.

We can also see this quite explicitly.  Let us change coordinates, and set $z_1 = -s$, and write $z_2=s$.  We then expand in series in $1/s$.  We are computing the polar part in $s$ of $(\dbar (z+s)^{-1} ) (z-s)^{-1}$. Up to a gauge transformation by $(z+s)^{-1} (z-s)^{-1}$, this is anti-symmetric under $s \to -s$.   

To extract the polar part of the coefficient of $n^2$ we should compute the integral
\begin{equation} 
\oint_s s^k \d s \frac{1}{(z+s)^2 (z - s)}. 
 \end{equation}
We compute this integral around a very large circle, and we only care about $k$ even because of anti-symmetry under $s \mapsto -s$. We find that the leading order term in the expansion in $1/s$ is $s^{-1} z^{-2}$, which is what we want.  The sub-leading terms are of the form $s^{-3} + s^{-5} z^2 + \dots$.  These sub-leading terms contribute field configurations which are regular everywhere except at $z = \infty$, and so do not correspond to modifications of the boundary condition at $z = 0$.   

To sum up, we find that the polar part in the expansion in powers of $z_1 - z_2$ of \eqref{eqn_ope} is 
 \begin{equation} 
   -[ \mf{t}_{a_1}, \mf{t}_{a_2}] \frac{1}{2 \pi \i} \frac{1}{z_1 - z_2} \mc{E}_{0,0}(z_i). 
 \end{equation}
These operators correspond to the $\mf{gl}(k \mid k)$ currents. We find that they have the standard OPE for the $\mf{gl}(k \mid k)$ current algebra.

\subsection{OPE coefficients for Kodaira-Spencer theory}
One can perform a similar analysis to obtain the OPE coefficients for Kodaira-Spencer theory.  These contain the stress-energy tensor and their superconformal cousins, among other operators. Let us explain how to check that $T T \sim z^{-1} \partial T + z^{-2} T + c z^{-4} $ (the central charge $c$ is given by the two-point function and we will not calculate it directly).

Any chiral CFT can be coupled to a background Beltrami differential, governing deformations of the complex structure of the two-dimensional space-time.  The stress-energy tensor is the operator given by the background Beltrami differential $\delta_{z= 0} \partial_z$.  

Field configurations for Kodaira-Spencer include divergence free Beltrami differentials on $SL_2(\C)$.  The stress-energy tensor should be given by such a Beltrami differential which, near the boundary, looks like
\begin{equation} 
\delta_{z = 0} \partial_z  + \delta^{(1)}_{z = 0} \tfrac{1}{2} n \partial_n + O(n^2). 
 \end{equation}
The term involving $n \partial_n$ is included to ensure that the Beltrami differential is divergence free.   The higher-order terms in $n$ are needed to make sure that this Beltrami differential is $\dbar$-closed. 

We can write the required Beltrami differential with a pole at $z_0$, including all higher-order terms, in a succinct way as
\begin{align} 
\mc{T} (z_0) =& \dbar \left( (z-z_0)^{-1} \partial_{z=z_0} - \tfrac{1}{2} (z-z_0)^{-2} n \partial_n \right) \\
=& \delta_{z=z_0} \partial_z + \tfrac{1}{2}\delta^{(1)}_{z = z_0} n \partial_n + N \frac{n^2}{(z-z_0)^2}  \frac{ \d \wbar  }{ (1 + \abs{w}^2)^{2}} \\
& + 2 N^2 \frac{n^3}{(z-z_0)^3} \frac{ \d \wbar  }{ (1 + \abs{w}^2)^{2}  }. 
 \end{align}
 In the second and third lines we have dropped factors of $2$ and $\pi$ to keep the formulae to a reasonable length.

To compute the OPE coefficient of $\mc{T}(z_1)$ with $\mc{T}(z_2)$, we should try to find a solution to the equations of motion of the form
\begin{equation} 
\eps_1 \mc{T}(z_1) + \eps_2 \mc{T}(z_2) + \eps_1 \eps_2 \mc{B}(z_1,z_2)  
 \end{equation}
 As before, we work modulo $\eps_i^2$.  We then expand $\mc{B}(z_1,z_2)$ in series in $(z_1 - z_2)^{-1}$ to find the OPE coefficients.

The equations of motion state that
\begin{equation} 
\dbar \mc{B}(z_1,z_2) + [\mc{T}(z_1), \mc{T}(z_2) ] = 0. 
 \end{equation}
Here $[-,-]$ is given by combining the commutator of $(1,0)$ vector fields on a complex manifold with the wedge product of $(0,1)$ forms.

We have
\begin{equation} 
[\mc{T}(z_1), \mc{T}(z_2) ] =   \dbar \left[ (z- z_1)^{-1} \partial_z - \tfrac{1}{2} (z-z_1)^{-2} n \partial_n, \mc{T}(z_2)   \right]. 
 \end{equation}
 We can thus take 
 \begin{align} 
 \mc{B}(z_1,z_2) =& \left[ (z- z_1)^{-1} \partial_z - \tfrac{1}{2} (z-z_1)^{-2} n \partial_n, \mc{T}(z_2)   \right] \\
 =&  \left[ (z - z_1)^{-1} \partial_z - \tfrac{1}{2} (z-z_1)^{-2} n \partial_n , \delta_{z=z_2} \partial_z + \tfrac{1}{2}\delta^{(1)}_{z = z_2} n \partial_n \right] \\
 & + O(n^2). 
  \end{align}
  The terms of order $n^2$ and higher are non-singular at the boundary and their polar parts in $(z_1 - z_2)$ are fixed by the requirement that the whole polar part  satisfies the equations of motion.

Computing the commutator of the vector fields appearing in the expression for $\mc{B}(z_1,z_2)$,  we find 
\begin{multline} 
\mc{B}(z_1,z_2) = (z_2 - z_1)^{-2} \delta_{z = z_2} \partial_z + (z - z_1)^{-1} \delta_{z = z_2}^{(1)} \partial_z \\  + \tfrac{1}{2} (z - z_1)^{-1} \delta_{z = z_2}^{(2)} n \partial_n- (z_2 - z_1)^{-3}\delta_{z = z_2}  n \partial_n + O(n^2). 
 \end{multline}
By noting that 
\begin{equation} 
 f(z) \delta^{(l)}_{z = z_2} =\sum (-1)^{m} {{l}\choose{m}} f^{(m)}(z_z) \delta^{(l-m)}_{z = z_2} 
 \end{equation}
 we can simplify the expression for $\mc{B}(z_1,z_2)$ to
 \begin{align} 
 \mc{B}(z_1,z_2)  =& 2 (z_2 - z_1)^{-2} \delta_{z = z_2} \partial_z  + (z_2 - z_1)^{-1} \delta^{(1)}_{z = z_2} \partial_z \\ 
  &+  (z_2 - z_1)^{-2} \delta^{(1)}_{z = z_2} n \partial_n    + \tfrac{1}{2}(z_2 - z_1)^{-1} \delta^{(2)}_{z = z_2} n \partial_n  + O(n^2) \\
=& 2 (z_2 - z_1)^{-2} \mc{T}(z_2) + (z_2 - z_1)^{-1} \partial \mc{T}(z_2)  + O(n^2). 
  \end{align}
which is exactly the usual OPE of the stress tensor, except for the central extension term determined by the two-point function.

\section{The mode algebra}
\label{sec:modes}
The mode algebra of the holographic chiral algebra is defined in a very similar way to the algebra of local operators. In any ordinary vertex algebra, the mode algebra is the associative algebra generated by integrals of modes of local operators around a circle.   In the holographic chiral algebra, similarly, the mode algebra is generated by modifications of the boundary condition localized at $\abs{z} = 1$ in the boundary.  These boundary modifications can be realized as integrals of point-like modifications around the circle $\abs{z} = 1$.

The mode algebra of the holographic chiral algebra has a special feature: it is the universal enveloping algebra of the central extension of a Lie algebra.  As such, it can be described by specifying three pieces of data.
\begin{enumerate} 
	\item The vector space underlying the Lie algebra. These are the boundary modifications localized on $\abs{z} = 1$.
	\item The Lie bracket.  These are encoded by OPE coefficients, and can be calculated in a similar way.
	\item The central extension. This is given by $2$-point functions.
\end{enumerate}

Since the analysis here is so similar to that given for local operators, we will be brief.  If we take a field of holomorphic Chern-Simons theory  such as 
\begin{equation} 
\mf{t}_a	\mc{E}_{0,0} (s) =\mf{t}_a  \delta_{z = s} +\mf{t}_a  n^{2}  \frac{N}{ 2 \pi \i} \frac{ \d \wbar  }{ (1 + \abs{w}^2)^{2}   } \frac{1}{(z-s)^{2}} 
 \end{equation}
 which satisfies the equation of motion and violates the boundary condition at $z = s$, we can average it over the circle $\abs{s} = 1$ to get a field configuration
 \begin{equation} 
\mf{t}_a \oint_{\abs{s} = 1} s^k \d s \mc{E}_{0,0}(s) =\mf{t}_a z^k \delta_{\abs{z} = 1} + \mf{t}_a n^{2}  \frac{N}{ 2 \pi \i} \frac{ \d \wbar  }{ (1 + \abs{w}^2)^{2}   }\oint_{\abs{s} = 1} \d s s^k  \frac{1}{(z-s)^{2}} 
  \end{equation}
  This violates the boundary condition only at $\abs{z} = 1$. 

A field configuration of this form (or defined more generally using $\mc{E}_{m,0}(s)$) defines a generator of the mode algebra for holomorphic Chern-Simons theory.

We will explain how to calculate the Lie bracket and central extension of such elements of the mode algebra, focusing on the simplest case $\mc{E}_{0,0}(s)$.  Similar calculations apply to Kodaira-Spencer theory.

Let us compute the commutator between 
\begin{equation} 
M_1(C) =  \oint_{\abs{s} = C} s^k \mf{t}_a \d s \mc{E}_{0,0}(s) 
 \end{equation}
and
\begin{equation} 
M_2(C') =  \oint_{\abs{s} = C'} s^l \mf{t}_b \d s \mc{E}_{0,0}(s) 
 \end{equation}
Note that these modes violate the boundary condition only along the circle $\abs{z} = C$ or $\abs{z} = C'$. 

The multiplication of two elements of the mode algebra is given by taking the composite operator where $M_2$ is placed to the right of $M_1$: this is $M_1(1) M_2(1 + \eps)$.  The multiplication in the other order is $M_1(1) M_2 (1 - \eps)$.  We are interested in the commutator
\begin{equation} 
M_1(1) M_2(1 + \eps) - M_1(1) M_2 ( 1 - \eps). 
 \end{equation}
There is a gauge transformation $\chi$ such that 
\begin{equation} 
\dbar \chi = M_2(1 + \eps) - M_2 ( 1 - \eps). 
 \end{equation}

We can apply the gauge transformation $\chi$ in the presence of the field $M_1(1)$: 
\begin{equation} 
\dbar_{M_1(1)} \chi =  M_1(1) M_2(1 + \eps) - M_1(1) M_2 ( 1 - \eps) + [M_1(1), \chi].  
 \end{equation}
 Since this is a gauge-trivial variation of the field $M_1(1)$, we conclude that
 \begin{equation} 
   M_1(1) M_2(1 + \eps) - M_1(1) M_2 ( 1 - \eps) = - [M_1(1), \chi].  
  \end{equation}
This allows us to compute the commutator in the mode algebra using the gauge transformation $\chi$.

The gauge transformation $\chi$ moving an element from $\abs{z} = 1 + \eps$ to $\abs{z} = 1-\eps$ is determined by the formula
\begin{equation} 
 \oint_{\abs{s} = 1+ \eps}  s^l \mf{t}_b \d s \mc{E}_{0,0}(s) -  \oint_{\abs{s} = 1+ \eps}  s^l \mf{t}_b \d s \mc{E}_{0,0}(s) = \dbar \int_{1 - \eps \le \abs{s} \le 1 + \eps} s^l \mf{t}_b \d s \frac{1}{2 \pi \i} \frac{1}{s - z}  
  \end{equation}
We can compute the contour integral on the right hand side by observing that there is a single simple pole at $s = z$ with residue $z^l$. So we find 
\begin{equation} 
  \oint_{\abs{s} = 1+ \eps}  s^l \mf{t}_b \d s \mc{E}_{0,0}(s) -  \oint_{\abs{s} = 1+ \eps}  s^l \mf{t}_b \d s \mc{E}_{0,0}(s) =  \dbar  \delta_{1 - \eps \le \abs{z}  \le 1 + \eps } z^l \mf{t}_b. \label{eqn_mode_gauge} 
 \end{equation}
Thus, the commutator of the two elements $M_1,M_2$ of the mode algebra is given by
\begin{equation} 
[M_1,M_2] = - [\delta_{1 - \eps \le \abs{z}  \le 1 + \eps } z^l \mf{t}_b,  \oint_{\abs{s} = 1} s^k \mf{t}_a \d s \mc{E}_{0,0}(s)]. 
 \end{equation}
This is 
\begin{equation} 
- \oint_{\abs{s} = 1} s^{k+l} f_{ab}^c \mf{t}_c \delta_{z = s} + O(n^2), 
 \end{equation}
where the subleading term is fixed by the equations of motion. 

We find that the leading term in the expansion of powers of $n$ is indeed the $k+l$ mode of the operator corresponding to $\mc{E}_{0,0}$, as desired.

To compute the central extension, we can use a similar trick to our computation of the two-point function.  If $M_1(C)$, $M_2(C')$ are two field configurations which violate the boundary condition along $\abs{z} = C,C'$  the central extension of the Lie algebra is defined by the integral
\begin{equation} 
\int \Omega M_1(C) \dbar M_2(C')  - \int \Omega M_1(C') \dbar M_2(C). 
 \end{equation}
We will take, as before, 
\begin{align} 
M_1(C) &=  \oint_{\abs{s} = C} s^k \mf{t}_a \d s \mc{E}_{0,0}(s) \\
M_2(C') &=  \oint_{\abs{s} = C'} s^l \mf{t}_b \d s \mc{E}_{0,0}(s) 
 \end{align}
We assume $k \ge 0$.

To compute the central extension term, we will first simplify the expression
\begin{equation} 
M_1(1) =\mf{t}_a s^k  \delta_{\abs{z} = 1} +\mf{t}_a  n^{2}  \frac{N}{2 \pi \i} \frac{ \d \wbar  }{ (1 + \abs{w}^2)^{2}   }\oint_{\abs{s} = 1}s^k \d s   \frac{1}{(z-s)^{2}}. 
 \end{equation} 
 Using the residue theorem, we find that if $k \ge 0$ this becomes
\begin{equation} 
  z^k \delta_{\abs{z} = 1} - \mf{t}_a n^{2}  \frac{N}{ 2 \pi \i} \frac{ \d \wbar  }{ (1 + \abs{w}^2)^{2}   }  k \delta_{\abs{z} \le 1} z^{k-1}.  
 \end{equation}

As in the analysis of the two-point function, we can compute the central extension as the integral
\begin{equation} 
\int_{z,n,w} n^{-3} \d n \d z \d w \left( M_1(1) \dbar M_2(1 + \eps) - M_1(1 + \eps) \dbar M_2(1) \right).  
 \end{equation}
We assume $k \ge 0$, so that $M_1(1)$ is localized at $\abs{z} \le 1$ and $\dbar M_2(1+\eps)$ is localized at $\abs{z} = 1+\eps$, $n=0$, and only the second term contributes.  The integrand is zero outside  a neighbourhood of $\abs{z} = 1$, $n = 0$. The central extension is, dropping various powers of $\pi$, 
\begin{equation} 
N \op{Vol}(\CP^1_w)  \int_{\abs{z} = 1,\abs{n} \le \eps}  n^{-1} \d n \d z k z^{k-1}   \dbar z^l. 
 \end{equation}
 The $\dbar$ operator only appplies in the $n$ direction, so that by the same argument we used when we compute the two-point function, the integral becomes
 \begin{equation} 
 N \op{Vol}(\CP^1_w) \int_{\abs{z} = 1, \abs{n} = \eps} n^{-1} \d n  k z^{k+l-1} \d z   
  \end{equation}
This is proportional to $N k \delta_{k+l}$, which is exactly what we wanted.  The factor of $k$ is the standard one that appears in Kac-Moody central extensions.

\subsection{The holographic global symmetry algebra in the mode algebra}
We have explained how to define and compute the mode algebra, which is the universal enveloping algebra of the central extension of a Lie algebra.  Next we will show that there is a natural  subalgebra of the mode algebra which preserves the vacuum at zero and infinity, and on which the central extension vanishes. This is the global symmetry algebra we discussed before.

Let us define it for holomorphic Chern-Simons theory. Let $F \in \mf{gl}(k \mid k) \otimes \Oo(SL_2(\C))$ be an infinitesimal gauge transformation which preserves the vacuum field configuration $A = 0$.  Let us extend the function $z$, which is a coordinate on the chiral algebra plane on the boundary, to a global continuous function on $SL_2(\C)$. This function is not, of course, holomorphic.  Then, $\delta_{\abs{z} = 1}$ defines a $1$-form on $SL_2(\C)$; we view it as a $(0,1)$ form by the projection map. 

Then, we can define a field configuration
\begin{equation} 
\delta_{\abs{z} = 1} F \in \Omega^{0,1}(SL_2(\C), \mf{gl}(k \mid k) ). 
 \end{equation}
 This satisfies the linearized equations of motion, because $F$ is holomorphic and\footnote{It is important to note that $z$ does not have to be a holomorphic function for $\dbar \delta_{\abs{z} = 1}$ to vanish.  This holds for any continuous complex-valued function, as a consequence of the stronger fact that, when viewed as a $1$-form and not a $(0,1)$-form, $\d \delta_{\abs{z} = 1} = 0$.  } $\dbar \delta_{\abs{z} = 1} = 0$. 

Near the boundary, this field configuration vanishes except on the circle $\abs{z} = 1$. In particular, it is compatible with the boundary condition except along this circle, and so defines an element of the mode algebra. An element of the mode algebra of this form is a global symmetry. 

Next, let us show that the central extension vanishes on global symmetries. If $M_i(C_i)$ are field configurations which violate the boundary condition at $\abs{z} = C_i$, the central extension involves the value of $M_1(C_1)$ along the circle $\abs{z} = C_2$ in the boundary (and conversely).  If $M_1, M_2$ are both global symmetries, then $M_i(C_i)$ are exactly zero away from $\abs{z} = C_i$, so there is no central extension.

To compute the commutator, note that 
\begin{equation} 
\dbar \delta_{1 - \eps \le \abs{z} \le 1 + \eps } F = \delta_{\abs{z} = 1 + \eps} F - \delta_{\abs{z} = 1 - \eps} F.  
 \end{equation}
We have seen that we can compute the commutator between two elements of the mode algebra by using a gauge transformation to move one past the other.  Applying this to global symmetries, we find that the commutator between $\delta_{\abs{z} =1} F$ and $\delta_{\abs{z} = 1} G$ is given by the commutator $[F,G]$ in the Lie algebra of infinitesimal gauge transformations.

We have shown that $\Oo(SL_2(\C)) \otimes \mf{gl}(k \mid k)$ lies inside the mode algebra as a sub-Lie algebra on which the central extension vanishes.  To check that these are indeed global symmetries, we need to verify that these elements preserve the vacuum at $0$ and $\infty$.  

Applying the element of the mode algebra corresponding to $F \in \Oo(SL_2(\C)) \otimes \mf{gl}(k \mid k)$ to the vacuum at $z = 0$ will yield a field configuration localized near $z = 0$.  This field configuration is $\delta_{\abs{z} = \eps} F$.  To show that this field configuration corresponds to the trivial operator, we need to show that we can remove it by a gauge transformation which only violates the boundary conditions near $z = 0$.  The required gauge transformation is $\delta_{\abs{s} \le \eps} F$. 

In a similar way, these elements of the mode algebra preserve the vacuum at $\infty$, and so deserve to be called global symmetries. 

We have seen that the Lie algebra of global symmetries for the holographic chiral algebra build from holomorphic Chern-Simons theory is\footnote{Strictly speaking, we have not verified that there can not be any other global symmetries.  This is not too difficult to check and we will not elaborate on this point.}  $\Oo(SL_2(\C)) \otimes \mf{gl}(k \mid k)$.  In a similar way, the global symmetries coming from the bosonic fields of Kodaira-Spencer theory contribute $\op{Vect}_0(SL_2(\C))$, the Lie algebra of infinitesimal symmetries of the complex manifold $SL_2(\C)$ preserving the holomorphic volume form. Those from the fermionic fields contribute $\Pi \C^2 \otimes \Oo(SL_2(\C))$.  The Lie bracket, and the differential relating fermionic global symmetries of Kodaira-Spencer theory to the bosonic global symmetries of holomorphic Chern-Simons theory, are exactly as in section \ref{sec:global_symmetry_isomorphism}.    

\section{Characterizing the chiral algebra by its global symmetries}
The large $N$ chiral algebra is generated by single-trace operators, and its holographic dual by boundary modifications.  In each case, the OPE of two generators only contains the generators, together with the identity operator.  The OPE of two single-trace operators never yields a multi-trace operator, much like in a Kac-Moody algebra.  

We will show that knowledge of the global symmetry algebra is enough to determine directly \emph{all} OPEs, except for the coefficient of the identity operator.  The coefficient of the identity operator encodes the two-point functions. A second argument will show that all two-point functions are determined by the two-point functions of $T$ and of the operators of dimension $\le 1$.

The idea is very simple, and can be illustrated with the example of the stress-energy tensor $T$.  The modes $\oint z^i T(z) \d z$ for $i \le 2$ are global symmetries of any chiral CFT,  and they commute according to the global conformal algebra $\mf{sl}_2(\C)$.  This implies non-trivial constraints on the OPE of $T$ with itself: we must have $T(0) T(z) \simeq z^{-1} \partial T (0) + 2 z^{-2} T(0) + z^{-4} c$. The central charge $c$ is not constrained by the $\mf{sl}_2$ commutation relations, but all other OPE coefficients are.  

In our chiral algebra, we have an infinite number of global symmetries given by certain modes of the primaries.  We know how they commute with each other.  This immediately imposes an infinite number of constraints on the OPE coefficients of the primaries, which we will now derive carefully.

\subsection{OPE coefficients from global symmetry algebra structure constants}
We will first recall some basic general facts about OPEs, bearing in mind a special feature of our situation: the space of single-trace operators is \emph{not} a representation of the Virasoro algebra, but only of its positive part spanned by $L_{-1}, L_0,L_1,\dots$. 

Recall that an operator is a quasi-primary if it is annihilated by $L_1$.  As a representation of the $\mf{sl}_2(\C)$ of global conformal symmetries, the single trace operators are a sum of lowest-weight Verma modules whose lowest weight is positive.  The lowest-weight vectors are, of course, the quasi-primaries, and the weight is the conformal dimension.  Because these lowest-weight modules do not have any vectors of weight $0$, they can not have any non-trivial sub-modules. This implies that single-trace operator can be written uniquely as a sum of operators obtained by applying $L_{-1}^k$ to single-trace quasi-primaries (and the same holds in the holographic context).

In this setting, quasi-primaries are the same as primaries, however we will use the terminology quasi-primaries because we don't make use of the higher Virasoro modes $L_i$, $i > 1$. 

 Suppose we have two quasi-primaries $\mc{O}_i$ of dimension $r_i$.  Then the modes $\oint z^{k_i} \mc{O}_i(z) \d z$, where $k_i \le 2r_i-2$, are in the global symmetry algebra.  We will show that the commutator in the global symmetry algebra can be written in terms of OPE coefficients.

Suppose that 
\begin{equation} 
	\mc{O}_1(0) \mc{O}_2(z) \simeq \sum z^{-n}  \mc{O}_{(n)} (0) 
\end{equation}
is the OPE.  Let us further expand
\begin{equation} 
	\mc{O}_{(n)} = \sum \frac{1}{m!} L_{-1}^m \mc{O}_{(n,m)} \label{eqn_qp_expansion} 
\end{equation}
where each $\mc{O}_{(n,m)}$ are quasi-primaries.

The fact that the operators $\mc{O}_i$ are quasi-primaries implies we can recover the full OPE from only thse OPE coefficients $\mc{O}_{(n,0)}$ that are quasi-primaries.  This is a standard fact in CFT whose derivation we recall. We have the identity
\begin{align} 
	L_1 \mc{O}_{(n+1)}(0)&= \oint_{\abs{z} = 1} \oint_{\abs{w} = 1+\eps} \mc{O}_1(0) \mc{O}_2(z) T(w) w^2 z^n \d z \d w \\
	&= \oint_{\abs{z} = 1} \oint_{\abs{u} = \eps} \mc{O}_1(0) z^n (u^2 + 2 u z) \mc{O}_2(z) T(z+u)  \d u \d z \\
	&= 2 \oint_z \mc{O}_1(0) z^{n+1} L_0 \mc{O}_2(z) + \oint_z \mc{O}_1(0) z^{n+2} L_{-1} \mc{O}_2(z) \\
	&= (2 r_2  - (n+2) ) \mc{O}_{(n+2)}(0) 
\end{align}
where $r_2$ is the spin of $\mc{O}_2$.

Applying this to the the expansion \eqref{eqn_qp_expansion} of $\mc{O}_{(n)}$ we find 
\begin{equation} 	
	  (L_0 + \tfrac{1}{2}(m-1)  ) \mc{O}_{(n,m)} = (2 r_2 - (n+1) ) \mc{O}_{(n+1,m-1)}. \label{eqn_L-1_identity}	
\end{equation}
Thus, all the $\mc{O}_{(n,m)}$ are determined from the $\mc{O}_{(n+m,0)}$ in a simple way. Note that the coefficient on both sides is non-zero, because $n+1 \le 2 r_2 - 2$.

Our goal is to use the commutation relations in the global symmetry algebra to calculate as many of the quasi-primary operators $\mc{O}_{(n,0)}$ appearing in the OPE as possible. The commutator in the global symmetry algebra takes the form 
\begin{align} 
	\left[ \oint  \mc{O}_1(z_1) \d z_1 , \oint z_2^{k} \mc{O}_2(z_2) \d z_2 \right] &= \oint_{\abs{z_1} = 1}  \oint_{\abs{u} = \eps}  (z_1+u)^{k} \mc{O}_1 ( z_1 ) \mc{O}_2(z_1 + u) \d z_1 \d u\\
	&=   \oint_{\abs{z_1} = 1}  \oint_{\abs{u} = \eps} (z_1+u)^{k} \sum_{n} u^{-n}    \mc{O}_{(n)} ( z_1 ) \d z_1 \d u \\
	&= \sum \oint { k \choose r} z_1^{k-r} \mc{O}_{(r+1)}(z_1) \d z_1\\
	&= \sum (-1)^m \oint { k \choose r} z_1^{k-r-m} \frac{(k-r)!}{(k-r-m)!m! } \mc{O}_{(r+1,m)}(z_1) \d z_1.
\end{align}
In the last line we have used the expansion \eqref{eqn_qp_expansion}.   

By projecting onto the highest weight vectors for the global $SL_2(\C)$ conformal symmetry, we only keep those terms with $k-r-m=0$ in the last line: 
\begin{equation} 
 	\Pi_{HW}  \left[ \oint  \mc{O}_1(z_1) \d z_1 , \oint z_2^{k} \mc{O}_2(z_2) \d z_2 \right]  = \sum (-1)^{k-r} \oint { k \choose r} \mc{O}_{(r+1,k-r)}(z) \d z. 
 \end{equation}
where $\Pi_{HW}$ is the projection onto the highest weight.  Using the identity \eqref{eqn_L-1_identity}, we can write this as
\begin{equation} 
 \Pi_{HW}  \left[ \oint  \mc{O}_1(z_1) \d z_1 , \oint z_2^{k} \mc{O}_2(z_2) \d z_2 \right]  = F(r_1,r_2,k) \oint \mc{O}_{(k+1,0)}(z) \d z \label{eqn_ope_final} 
 \end{equation}
 where $F(r_1,r_2,k)$ is a non-zero combinatorial constant only depending on the spins $r_i$ of the operators $\mc{O}_i$ and on $k$.  	

For any quasi-primary $\mc{O}$, the operator $\mc{O}$ is determined by the mode $\oint \d z \mc{O}(z)$.  Equation \eqref{eqn_ope_final} thus gives us an explicit formula for the quasi-primary component of the coefficient of $z^{-k-1}$ in the OPE between $\mc{O}_1$ and $\mc{O}_2$, for any $k \le 2 r_2 - 2$.

As one can see from the explicit form of the since-trace operators in the large $N$ gauge theory, the OPE of quasi-primaries of spins $r_1,r_2$ with $r_1 \le r_2$ has as its most singular term $z^{-2 r_1 }$ times a quasi-primary of spin $r_2 - r_1$.   

The structure constants of the global symmetry algebra $\mf{a}_{\infty}$ determines all OPE coefficients up to order $z^{1 - 2 r_2}$.  If $r _1 < r_2$, this means that all OPE coefficients have been determined. If $r_1 = r_2$,  the only OPE coefficient which is not determined is the two-point function, which is given as the coefficient of $z^{-2 r_1}$. If $r_1 > r_2$ we can swap the roles of the two operators and again all OPE coefficients are determined.

Strictly speaking, this method does not determine the OPEs involving the two quasi-primaries $\op{Tr} Z_i$ of spin $1/2$, because these do not correspond to any global symmetries. However, these operators are not very interesting. 

\subsection{Matching holographic and large $N$ OPE coefficients} 
We know that the large $N$ global symmetry algebra $\mf{a}_{\infty}$ is isomorphic, as a Lie algebra, to the holographic global symmetry algebra $\mf{a}_{\infty}^{hol}$.  In this section we will explain how to explicitly constrain the structure of the large $N$ chiral algebra from this isomorphism. 

Suppose we choose a basis $M_i$ of highest weight vectors under the $SL_2(\C)$ conformal symmetry of the holographic global symmetry algebra $\mf{a}_{\infty}^{hol}$.  We let $M_{i,k}$, $k \ge 0$ be the descendents.  Suppose we have commutation relations of the form
\begin{equation} 
[M_i,M_{j,k}] = \sum c_{ij,k}^l M_{l} + \text{ descendents}. 
 \end{equation}
Since we have an explicit description of $\mf{a}_{\infty}^{hol}$ one can, in principle, determine the $c_{ij,k}^l$. 

We know that there is an isomorphism $\mf{a}_{\infty}^{hol} \iso \mf{a}_{\infty}$ between the holographic global symmetry algebra and the large $N$ global symmetry algebra.   The analysis above tells us that there exists \emph{some} basis $\mc{O}_i$ of quasi-primaries of the large $N$ chiral algebra, whose OPE coefficients are
\begin{equation} 
\mc{O}_i(0) \mc{O}_j(z) \sim \sum z^{-k-1} \til{c}_{ij,k}^l \mc{O}_l(0) + \text{ descendents} + z^{-2 r_2} \ip{\mc{O}_i, \mc{O}_j} \delta_{r_1,r_2},  
 \end{equation}
and where $\til{c}_{ij,k}^l$ and $c_{ij,k}^l$ are related by a universal factor $F(r_1,r_2,k)$ depending only on $k$ and the spins $r_i$ of the operators involved. 

We can also choose a basis of quasi-primaries of the holographic chiral algebra with the same quantum numbers, and with identical OPE coefficients, except for possibly the two-point functions.  Therefore the algebras are isomorphic except for possibly the two-point functions. 

\subsection{Matching two-point functions}
So far, we have seen that the large $N$ and holographic chiral algebras have the same OPE coefficients, but not necessarily the same two-point functions.  Here we will verify that in each case all two-point functions are determined by the global symmetry algebra, the two-point functions of $T$, and those of operators of spin $\le 1$. 

The global symmetry algebra acts on the space of local operators. If $\mc{O}_i$ are quasi-primaries, the action of a global symmetry $\oint z^k \mc{O}_2 \d z$ on $\mc{O}_1$ is given by the formula
\begin{equation} 
	\oint \mc{O}_1(0) \mc{O}_2(z) z^k \d z 
\end{equation}
for $k \le 2 r_2 - 2$, $r_i$ the spins of the operators.

The OPE between $T$ and any quasi-primary $\mc{O}$ immediately tells us that
\begin{equation} 
	\mc{O} (0) =(2r-2) \oint T(0) \mc{O}(z) z \d z 
\end{equation}
if $\mc{O}$ has spin $r>1$. Thus, all quasi-primaries of spin $>1$ are obtained by applying a global symmetry to $T$.

Suppose $\mc{O}_1,\mc{O}_2$ are two operators of spin $r > 1$.  Let $M_i = \frac{1}{2r-2}\oint z \mc{O}_i (z) \d z$ be corresponding global symmetries, so that $\mc{O}_i = M_i \cdot T$. Because global symmetries preserve the vacuum at $0$ and $\infty$ we have
\begin{align} 
	\ip{\mc{O}_1(0) , \mc{O}_2(\infty)} &= \ip{M_1 \cdot T(0), \mc{O}_2(\infty)} \\
	&= \ip{T,M_1 \cdot \mc{O}_2(\infty)}.
\end{align}
The action of $M_1$ on $\mc{O}_2(\infty)$ is 
\begin{equation} 
	\oint \mc{O}_1(z) z \d z \mc{O}_2(\infty). 
\end{equation}
Because $\mc{O}_2$ is placed at $\infty$, this is determined by the coefficient of $z^{2-2r}$ in the OPE between $\mc{O}_1(z)$ and $\mc{O}_2$.  This is in the range of OPE coefficients determined by the global symmetry algebra.   

Suppose that the OPE between $\mc{O}_1$ and  $\mc{O}_2$ contains $z^{2 - 2r} T C$ for some constant $C$, which can be determined from the structure constants of the global symmetry algebra. Then, this argument tells us that two point function is 
\begin{equation} 
	\ip{\mc{O}_1(0), \mc{O}_2(\infty)} =\frac{1}{2r-2} C \ip{T(0), T(\infty)}  
\end{equation}
Thus, all two-point functions between operators of spin $>1$ have been reduced to the central charge and the structure constants of the global symmetry algebra.

\section{Comparing with type IIB supergravity}
In \cite{Costello:2016mgj}, the first-named author and Si Li gave a conjectural description of a twisted version of type IIB supergravity on $AdS_5 \times S^5$.  In this section we will see that this proposal is fully compatible with the results of this paper.

Recall that in the set-up of \cite{Beem:2013sza}, one localizes a superconformal theory with respect to a super-conformal super-charge of the form
\begin{equation} 
\mathbf{Q} = Q + S 
 \end{equation}
where $Q$ is an ordinary super-symmetry and $S$ is a super-confomral super-charge.  We can instead consider the family of super-charges 
\begin{equation} 
\mathbf{Q}_{\eps} = Q + \eps S 
 \end{equation}
for a parameter $\eps$. For non-zero $\eps$, these are related by conjugating with an element of the superconformal algebra, so that they all gives the same cohomology. We can thus compute the $Q + \eps S$ cohomology by first taking the $Q$-cohomology, and then the $S$-cohomology.

The $Q$-cohomology brings us to the holomorphic twist of $\mc{N}=4$ gauge theory. This has a $\mf{psl}(3 \mid 3)$ holomorphic superconformal symmetry algebra, which is the $Q$-cohomology of $\mf{psl}(4 \mid 4)$ superconformal symmetry of the physical theory.   The superconformal supercharge $S$ commutes with $Q$, and so can be viewed as an element of $\mf{psl}(3 \mid 3)$.  The localization studied in \cite{Beem:2013sza} can be obtained by localizing the holomorphic twist of $\mc{N}=4$ using the superconformal supercharge $S \in \mf{psl}(3 \mid 3)$. 

In \cite{Costello:2016mgj}, a proposal for the holographic dual of the holomorphic twist of $\mc{N}=4$ gauge theory was given. This is a $5$-dimensional topological $B$-model background with an action of $\mf{psl}(3 \mid 3)$. 

It is natural to expect that localizing this $5$-dimensional topological $B$-model with respect to the element $S \in \mf{psl}(3 \mid 3)$ should bring us to the $3$-dimensional $B$-model studied in this paper.  Here we will verify this, showing compatibility with the proposal of \cite{Costello:2016mgj}.
\subsection{The twist of $AdS_5 \times S^5$}
In \cite{Costello:2016mgj}, the holographic dual of the holomorphic twist of $\mc{N}=4$ gauge theory was proposed to be the $B$-model on $\C^5 \setminus \C^2$, in the presence of a certain closed-string field. The closed-string fields are given by the kernel of the operaotr $\partial$ inside the polyvector fields:
\begin{equation} 
\op{Ker} \partial \subset \PV^{\ast,\ast}(\C^5 \setminus \C^2) = \Omega^{0,\ast}(\C^5 \setminus \C^2, \wedge^\ast T (\C^5 \setminus \C^2) ). 
 \end{equation}

Let us choose coordinates $z_i$ on the $\C^2$ wrapped by the brane, and $w_j$ on the three normal directions.  We are removing the locus where $w_i = 0$.  The field which deforms us to the holographic dual of the holomorphic twist of the $N=4$ gauge theory is given by 
\begin{equation} 
F = N \eps_{ijk} \frac{ \wbar_i \d \wbar_j \d \wbar_k }{\norm{w}^6} \partial_{z_1} \partial_{z_2} \in \PV^{2,2}(\C^5 \setminus \C^2).   
 \end{equation}
Note that unlike in analogous setting in the $3$-dimensional topological string which is the main focus of this paper, this field is not a Beltrami differential and so does not deform the geometry. 

We want to analyze the affect of further localizing with respect to the $S$-supercharge in $\mf{psl}(3 \mid 3)$. The action of $\mf{psl}(3 \mid 3)$ on the $5$-dimensional topological string is given by an embedding of $\mf{psl}(3 \mid 3)$ into the closed-string fields of the theory.  The field corresponding to an element $S \in \mf{psl}(3 \mid 3)$ should be thought of as arising from the corresponding super-ghost in type IIB supergravity. To localize with respect to $S$, we should move to the topological string background given by this closed-string field. 
	 
The formula for the embedding of $\mf{psl}(3 \mid 3)$ into the closed-string fields is as follows, up to order $N$ corrections:
\begin{align} 
\partial_{z_i}  & \text{ translations }\\
z_j \partial_{z_i} & \text{ rotations }\\
z_i \partial_{z_i} - \tfrac{2}{3} ( w_j \partial_{w_j}) & \text{ scaling } \\
z_i \left( \sum_j z_j \partial_{z_j} - \sum_{k } w_k \partial_{w_k} \right) & \text { special conformal transformations  } \\
w_i \partial_{w_j} &  SU(3)_R \text{ generators}\\
w_i  & \text{ ordinary supercharges } \\
\partial_{z_i} \partial_{w_j}  & \text{ ordinary supercharges} \\
z_i w_j  & \text{ superconformal supercharges}  \\
\partial_{w_i} \left(\sum_l z_l \partial_{z_l} -  \sum_k w_k \partial_{w_k}\right)  & \text{ superconformal supercharges.} 
 \end{align}
These expressions need to be corrected at order $N$ and possibly higher. These corrections were not explicitly compute in \cite{Costello:2016mgj}: they were shown to exist by a cohomological argument. Shortly we will write down an explicit formula for the order $N$ corrections we care about.

One can check that, under the Schouten-Nijenhuis bracket on polyvector fields, these fields have the commutation relations of $\mf{psl}(3 \mid 3)$.  For example, the fact that a superconformal supercharge and an ordinary supercharge commute to a combination of  rotation, scaling, and an $SU(3)_R$ symmetry is given by the formula
\begin{equation} 
\{z_i w_j,  \partial_{z_k} \partial_{w_l} \} = \delta_{ki} w_j \partial_{w_l} - \delta_{jl} z_i \partial_{z_k}. 
 \end{equation}

The superconformal supercharge $S$ used in \cite{Beem:2013sza} can be taken to be any odd, rank $1$ element of $\mf{psl}(3 \mid 3)$. We will take it to be 
\begin{equation} 
S = z_2 w_1 + O(N). 
 \end{equation}
 The reason we need an order $N$ term is that the bracket of $z_2 w_1$ with the flux sourced by the $D3$ brane is non-zero:
\begin{equation} 
\{z_2 w_1, F\} = N w_1 \frac{\eps_{ijk} \wbar_i \d \wbar_j \d \wbar_k} { \norm{w}^6} \partial_{z_1}.  
 \end{equation}
To correct for this, we need to add on to $z_2 w_1$ at order $N$ a Beltrami differential $\beta$ so that
\begin{equation} 
\dbar \beta =    N w_1 \frac{\eps_{ijk} \wbar_i \d \wbar_j \d \wbar_k} { \norm{w}^6} \partial_{z_1}.  
 \end{equation}
One can check that
\begin{equation} 
\beta =  \frac{1}{4} N \frac{ \wbar_2 \d \wbar_3 - \wbar_3 \d \wbar_2}    { \norm{w}^4} \partial_{z_1}   
 \end{equation}
 works, and is indeed the unique solution to the equation in an appropriate gauge. 

What we have found is that the complete expression for the closed-string field given by $N$ $D3$ branes together with the superghost for the superconformal symmetry $S$ is given by the expression
\begin{equation} 
\Phi = z_2 w_1 + \frac{1}{4} N \frac{ \wbar_2 \d \wbar_3 - \wbar_3 \d \wbar_2}    { \norm{w}^4} \partial_{z_1}  +  N \eps_{ijk} \frac{ \wbar_i \d \wbar_j \d \wbar_k }{\norm{w}^6} \partial_{z_1} \partial_{z_2}. 
 \end{equation}
 This satisfies the Maurer-Cartan equation
 \begin{equation} 
 \dbar \Phi + \tfrac{1}{2} \{\Phi,\Phi\} = 0 
  \end{equation}
  and so defines a consistent background for the topological $B$-model on $\C^5 \setminus \C^2$.   
  
We conjecture, following \cite{Costello:2016mgj}, that $5$-dimensional Kodaira-Spencer theory in this background is equivalent to type IIB supergravity on $AdS^5 \times S^5$, in the background given by the superghosts for the $\mbf{Q} = Q+S$ supercharge used in \cite{Beem:2013sza}.  

One of the terms in the closed-string field $\Phi$ is a Beltrami differential $\beta$, which does not satisfy $\dbar \beta = 0$ except on the locus $z_2 = w_1 = 0$.  Therefore $\beta$ defines a \emph{non-integrable} deformation of the complex structure of $\C^5 \setminus \C^2$, which is integrable when restricted to $\C^3 \setminus \C$.   (Even though $\beta$ is a non-integrable deformation of complex structure, the closed string background is still consistent because of the presence of the other two terms). 

If we restrict to the locus where $z_2 = w_1 = 0$, the closed-string field $\Phi$ becomes the Beltrami differential on $\C^3 \setminus \C$ which turns it into the deformed conifold.  Thus, our $5$-dimensional topological string background contains a copy of the deformed conifold $SL_2(\C)$. This is consistent with the idea that the $3$-dimensional topological string we study in the bulk of the paper lives on $AdS_3 \times S^3$ inside $AdS_5 \times S^5$.

The term $z_2 w_1$ in the closed-string field $\Phi$ is a superpotential term. In the topological $B$-model, a quadratic superpotential localizes the model to the $B$-model on the critical locus, which is the locus $z_2 = w_1 = 0$.  On this locus, the Beltrami differential is integrable and deforms the complex structure to $SL_2(\C)$. 

We have sketched that applying the $\mbf{Q} = Q + S$ localization to type IIB supergravity, using the proposal of \cite{Costello:2016mgj}, localizes us to the topological $B$-model on $SL_2(\C)$, as desired. 

\appendix

\section{Large $N$ matrix and vector indices} \label{app:index}
We can quickly review the large $N$ calculation of indices. 
We need to compute contour integrals for the projection on gauge-invariant operators:
\begin{equation}
I_N = \int \prod_a d\theta_a \prod_{a \neq b} (1-e^{i \theta_a-i \theta_b}) \prod_{a,b} \frac{\prod_u \left(1-f_u e^{i \theta_a-i \theta_b}\right)}{\prod_v \left(1-b_v e^{i \theta_a-i \theta_b}\right)}
\end{equation}

We can evaluate it in a saddle point approximation. The eigenvalues feel a potential 
\begin{equation}
-\log \left| \left(1-e^{i \theta_a-i \theta_b}\right)\frac{\prod_u \left(1-f_u e^{i \theta_a-i \theta_b}\right)}{\prod_v \left(1-b_v e^{i \theta_a-i \theta_b}\right)} \right|^2
\end{equation}
and the saddle point equations are 
\begin{equation}
\frac{e^{i \theta_a} +e^{i \theta_b}}{e^{i \theta_a} - e^{i \theta_b}}+ \sum_u f_u \frac{e^{2 i \theta_a} -e^{2 i \theta_b}}{(e^{i \theta_a} - f_u e^{i \theta_b})(f_u e^{i \theta_a} - e^{i \theta_b})}- \sum_v b_v \frac{e^{2 i \theta_a} -e^{2 i \theta_b}}{(e^{i \theta_a} - b_v e^{i \theta_b})(b_v e^{i \theta_a} - e^{i \theta_b})} =0
\end{equation}

We can also express the overall potential in terms of the eigenvalue density $\rho(\theta)$
\begin{equation}
S = -N^2 \int d \theta d \theta' \rho(\theta) \rho(\theta') \log\left(1-e^{i \theta-i \theta'}\right)\frac{\prod_u \left(1-f_u e^{i \theta-i \theta'}\right)}{\prod_v \left(1-b_v e^{i \theta-i \theta'}\right)} 
\end{equation}
Expanding in Fourier modes 
\begin{equation}
\rho_k = \frac{1}{N} \sum_a e^{i k \theta_a}
\end{equation} 
gives 
\begin{equation}
S = \frac{N^2}{2 \pi} \sum_k |\rho_k|^2 \frac{1}{k}\left(1 + \sum_u f_u^k - \sum_v b_v^k \right)
\end{equation}

As long as $1 + \sum_u f_u^n - \sum_v b_v^n$ is always positive, the obvious saddle is at $\rho_k=0$. 
In the neighbourhood of that saddle, the change of coordinates between $\theta_a$ fluctuations and $\rho_k$ 
fluctuations is a discrete Fourier transform and contributes a constant Jacobian. The semiclassical evaluation of the integral gives 
\begin{equation}
I_\infty = \frac{1}{\prod_{k=1}^\infty \left(1 + \sum_u f_u^k - \sum_v b_v^k \right)}
\end{equation}

Next, we can add vectorial contributions:
\begin{equation}
I'_N = \int \prod_a d\theta_a \prod_{a \neq b} (1-e^{i \theta_a-i \theta_b}) \prod_{a,b} \frac{\prod_u \left(1-f_u e^{i \theta_a-i \theta_b}\right)}{\prod_v \left(1-b_v e^{i \theta_a-i \theta_b}\right)} \prod_a \frac{\prod_s(1 - f'_s e^{i \theta_a})(1 - \bar f'_s e^{-i \theta_a})}{\prod_t(1 - b'_t e^{i \theta_a})(1 - \bar b'_t e^{-i \theta_a})}
\end{equation}
They will not change the saddle if the number of flavors does not scale with $N$. We can write their contribution to the semi-classical action as 
\begin{equation}
S' = -N \int d\theta \rho(\theta) \log \frac{\prod_s(1 - f'_s e^{i \theta})\prod_{\bar s}(1 - \bar f'_{\bar s} e^{-i \theta})}{\prod_t(1 - b'_t e^{i \theta})\prod_{\bar t}(1 - \bar b'_{\bar t} e^{-i \theta})}
\end{equation}
and then 
\begin{equation}
S' = N \sum_{k>0} \rho_k \frac{1}{k}\left(\sum_s (f'_s)^k-\sum_t  (b'_t)^k\right)+N \sum_{k>0} \rho_{-k} \frac{1}{k}\left(\sum_{\bar s} (\bar f'_{\bar s})^k-\sum_{\bar t}  (\bar b'_{\bar t})^k\right)
\end{equation}

Completing the squares, we get a contribution 
\begin{equation}
\exp \left[ \sum_{k>0} \frac{1}{k} \frac{\left(\sum_s (f'_s)^k-\sum_t  (b'_t)^k\right) \left(\sum_{\bar s}  (\bar f'_{\bar s} )^k-\sum_{\bar t}  (\bar b'_{\bar t})^k\right)}{1 + \sum_u f_u^k - \sum_v b_v^k} \right]
\end{equation}
If we expand 
\begin{equation}
\frac{\left(\sum_s (f'_s)^k-\sum_t  (b'_t)^k\right) \left(\sum_{\bar s} (\bar f'_{\bar s})^k-\sum_{\bar t}  (\bar b'_{\bar t})^k\right)}{1 + \sum_u f_u^k - \sum_v b_v^k}  = \sum_i x_i^k-\sum_j y_j^k
\end{equation}
then the index takes the form 
\begin{equation}
I'_\infty = \frac{1}{\prod_{k=1}^\infty \left(1 + \sum_u f_u^k - \sum_v b_v^k \right)} \frac{\prod_i (1-x_i) }{\prod_j (1-y_j)}
\end{equation}

\subsection{Example: one boson matrix}
Without vectors, we have 
\begin{equation}
I_\infty = \frac{1}{\prod_{k=1}^\infty \left(1 - b^k \right)}
\end{equation}
which counts words built from traces $\mathrm{Tr} B^k$.

Adding vectors, we have 
\begin{equation}
\frac{\left(\sum_s (f'_s)^k-\sum_t  (b'_t)^k\right) \left(\sum_{\bar s} (\bar f'_{\bar s})^k-\sum_{\bar t}  (\bar b'_{\bar t})^k\right)}{1 -b^k}  = \sum_i b^{i k}\left(\sum_s (f'_s)^k-\sum_t  (b'_t)^k\right) \left(\sum_{\bar s} (\bar f'_{\bar s})^k-\sum_{\bar t}  (\bar b'_{\bar t})^k\right)
\end{equation}
so that the index is 
\begin{equation}
I'_\infty = \frac{1}{\prod_{k=1}^\infty \left(1 - b^k \right)} \prod_{k=1}^\infty \frac{\prod_{s, \bar t}(1- f'_s \bar b'_{\bar t} b^k)\prod_{\bar s,t}(1- \bar f'_{\bar s} b'_{t} b^k)}{\prod_{s, \bar s}(1- f'_s \bar f'_{\bar s} b^k)\prod_{t, \bar t}(1- b'_t \bar b'_{\bar t} b^k)} 
\end{equation}
which counts words built from traces $\mathrm{Tr} B^k$ and open expressions $\bar B' B^k B'$, $\bar F' B^k F'$, $\bar B' B^k F'$, $\bar F' B^k B'$.

\section{Schur index}
Now we have a pair of adjoint symplectic bosons $Z_i$ and their derivatives, and ghosts $b$, $\partial c$ and their derivatives. 
Then 
\begin{equation}
1 + \sum_u f_u^k - \sum_v b_v^k  = 1 + \frac{2q^k }{1-q^k} - \frac{(y^k + y^{-k})q^{\frac{k}{2}}}{1-q^k} = \frac{(1-y^k q^{\frac{k}{2}})(1-y^{-k} q^{\frac{k}{2}})}{1-q^k}
\end{equation}

The index of single-trace operators is then
\begin{equation}
\sum_k \left(y^k q^{\frac{k}{2}} +y^{-k} q^{\frac{k}{2}} -q^k \right) = \frac{y q^{\frac{1}{2}}}{1-y q^{\frac{1}{2}} }+ \frac{y^{-1} q^{\frac{1}{2}}}{1-y^{-1} q^{\frac{1}{2}} }-\frac{q}{1-q}
\end{equation}
Removing total derivatives, we finally get 
\begin{equation}
\sum_k \left(y^k q^{\frac{k}{2}} +y^{-k} q^{\frac{k}{2}} -q^k \right) = \frac{y q^{\frac{1}{2}}}{1-y q^{\frac{1}{2}} }+ \frac{y^{-1} q^{\frac{1}{2}}}{1-y^{-1} q^{\frac{1}{2}} }-\frac{q}{1-q}
\end{equation}
Expanding out, we recognize the symmetrized traces $\mathrm{Tr} Z_{(i_1} \cdots Z_{i_n)}$ of $su(2)$ spin $n/2$ and
their small $N=4$ super-Virasoro descendants, a fermionic doublet with spin $(n-1)/2$ and a bosonic operator of spin $n/2-1$. 

Next, we can add $r$ fundamental and anti-fundamental symplectic bosons, and $r$ fundamental and anti-fundamental fermions. 
Then 
\begin{align}
\sum_s (f'_s)^k-\sum_t  (b'_t)^k &= \left( \sum_s z^k_s - w^k_s \right)\frac{q^{\frac{k}{2}}}{1-q^k}\cr
  \sum_{\bar s}  (\bar f'_{\bar s} )^k-\sum_{\bar t}  (\bar b'_{\bar t})^k &= \left( \sum_s z^{-k}_s - w^{-k}_s \right)\frac{q^{\frac{k}{2}}}{1-q^k}
\end{align}

Thus the extra contribution to the single trace index is
\begin{equation}
\left( \sum_s z^k_s - w^k_s \right)\left( \sum_s z^{-k}_s - w^{-k}_s \right) \sum_n \chi(y)_{\frac{n}{2}} \frac{q^{1+\frac{n}{2}}}{1-q} \end{equation}

These operators transform in the adjoint of $u(k|k)$. We can split them into an adjoint of 
$psu(k|k)$ and a contribution of the form 
\begin{equation}
2 \sum_n \chi(y)_{\frac{n}{2}} \frac{q^{1+\frac{n}{2}}}{1-q} \end{equation}
which cancels out all the fermionic superpartners of the $\mathrm{Tr} Z_{(i_1} \cdots Z_{i_n)}$ single traces. 

This makes sense: the theory with fundamentals has $N=2$ SUSY rather than $N=4$ and the $N=4$ super-Virasoro symmetry 
is broken to Virasoro and $su(2)$. At the level of cohomology, we thus expect the cancellation to actually happen.

\section{Some large $N$ correlation functions}
It is useful to employ some auxiliary variables $x_1,x_2$ which transform in the fundamental of $SU(2)_R$. We then write 
\begin{equation}
Z(x;z) \equiv Z^i(z) x_i
\end{equation}
to keep track of $SU(2)_R$ indices. For example, the OPE can be written as
\begin{equation}
Z(x;z) Z(x';z') \sim \frac{(x,x')}{z-z'}
\end{equation}

In this notation, we can define
\begin{equation}
A^{(n)}(x;z) = \Tr Z(x;z)^n 
\end{equation}
to package together all components of $A^{(n)}$. The right hand side of this expression is a polynomial of degree $n$ in the variables $x_i$.  The whole expression is $SU(2)_R$ invariant, so that the coefficients of $x_1^a x_2^{n-a}$ transform in the representation of spin $n/2$ of $SU(2)_R$.

In the large $N$ limit, a two-point function $\langle A^{(n)}(x;z) A^{(n)}(x';z') \rangle$ 
will receive $n$ planar contributions: once we choose which field in $A^{(n)}(x';z')$ is contracted with the first field in the trace in 
$A^{(n)}(x;z)$ all other contractions are determined. We get 
\begin{equation}
\langle A^{(n)}(x;z) A^{(n)}(x';z') \rangle = n N^n \frac{(x,x')^n}{(z-z')^n}
\end{equation}
For three-point functions $\langle A^{(n)}(x;z) A^{(n')}(x';z') A^{(n'')}(x'';z'') \rangle$, planar contractions involve a sequence of $\frac{n+n'-n''}{2}$ 
contractions of consecutive fields in the first two operators, etc. It is easy to count how many planar contractions are available: along the first operator there is a specific location where we begin contractions with fields of the second operator, etcetera. That gives $n n' n''$ choices: 
\begin{align}
\langle A^{(n)}(x;z) & A^{(n')}(x';z') A^{(n'')}(x'';z'') \rangle = n n' n'' N^{\frac{n+n'+n''}{2}-1} \cdot \cr &\cdot \frac{(x,x')^{\frac{n+n'-n''}{2}}(x',x'')^{\frac{n'+n''-n}{2}}(x'',x)^{\frac{n''+n-n'}{2}}}{(z-z')^{\frac{n+n'-n''}{2}}(z'-z'')^{\frac{n'+n''-n}{2}}(z''-z)^{\frac{n''+n-n'}{2}}}
\end{align}

\subsection{Some large $N$ OPE's}
We can also look at OPE's in the planar limit. Of course, the OPE's of a chiral algebra 
do not contain any further information than what can be gleaned from two- and three- point functions:
Virasoro symmetry determines the coefficients of all primaries and descendants in the OPE in terms of that 
basic data. On the other hand, as the Virasoro symmetry does {\it not} play very well with the planar approximation, 
a direct calculation of the large $N$ OPE can be useful.  

The computation of the OPE itself is completely straightforward, as the operators are polynomials in free fields. 
The hard part of the calculation is to manipulate the final expression until the OPE contains only standard BRST representatives for the 
single-trace operators. The $SU(2)_R$ global symmetry is extremely helpful, as a term of $SU(2)_R$ spin $\frac{n}{2}$ and dimension $\Delta$ dimension must be BRST equivalent to a combination of $\partial^{\Delta-\frac{n}{2}} A^{(n)}$,$\partial^{\Delta-\frac{n}{2}-1}B^{(n)}$,$\partial^{\Delta-\frac{n}{2}-1}C^{(n)}$ or $\partial^{\Delta-\frac{n}{2}-2}D^{(n)}$ of the correct Grassmann parity and ghost number. If we cannot identify the
canonical BRST representatives at glance, we can use two-point functions to help with the identification.
The procedure is somewhat tedious and we have not been able to write the final large $N$ OPE's in closed form for all generators. 
We leave that calculation for future work. 

The simplest OPE is 
\begin{equation}
\Tr Z(x;z) \Tr Z(y;w)^n \sim n \frac{(x,y)}{z-w} \Tr Z(y;w)^{n-1}
\end{equation}
i.e. 
\begin{equation}
A^{(1)}(x;z) A^{(n)}(y;w) \sim n \frac{(x,y)}{z-w} A^{(n-1)}(y;w)
\end{equation}
The next simplest is the action of $SU(2)_R$ currents: 
\begin{equation}
\Tr Z(x;z)^2 \Tr Z(y;w)^n \sim n N \frac{(x,y)^2}{(z-w)^2} \Tr Z(y;w)^{n-2} + 2 n \frac{(x,y)}{z-w} \Tr Z(x;w) Z(y;w)^{n-1} 
\end{equation}
which can be written as 
\begin{equation}
A^{(2)}(x;z) A^{(n)}(y;w) \sim n N \frac{(x,y)^2}{(z-w)^2} A^{(n-2)}(y;w) + 2 \frac{(x,y)(x \cdot \partial_y)}{z-w} A^{(n)}(y;w)
\end{equation}
In particular, the $A^{(n)}$ are not current primaries, but have the obvious action of the current zeromodes $J^{(2)}_0(x) = \oint \d z A^{(2)}(x;z)$ : 
\begin{equation}
[J^{(2)}_0(x),A^{(n)}(y;w)] =2 (x,y)(x \cdot \partial_y) A^{(n)}(y;w)
\end{equation}

Our final example is the OPE between $A^{(3)}(z)$ and $A^{(n)}(w)$. We can start from the $A^{(3)}(z)$ and $A^{(3)}(w)$ OPE. 
The most singular term in the OPE is constant. The next term goes as $(z-w)^{-2}$ 
and involves a field of dimension $1$, which can only be $A^{(2)}(w)$. The last term goes as $(z-w)^{-1}$ and involves a field of dimension $2$. 
The only $SU(2)_R$ irreps which can appear will be $A^{(4)}(w)$ at spin $2$, $\partial A^{(2)}(w)$ at spin $1$ and $D^{(0)}(w) \equiv T(w)$ 
at spin $0$. No other calculations are needed besides a decomposition into $SU(2)_R$ irreps. 

Concretely, 
\begin{align}
\Tr &Z(x;z)^3  \Tr Z(y;w)^3 \sim 3 N^3 \frac{(x,y)^3}{(z-w)^3} + 9 N \frac{(x,y)^2}{(z-w)^2} \Tr Z(x;w) Z(y;w) + \cr 
&+ 9 N\frac{(x,y)^2}{(z-w)} \Tr \partial_w Z(x;w) Z(y;w) + 9 \frac{(x,y)}{(z-w)} \Tr Z(x;w)^2 Z(y;w)^2
\end{align}
The decomposition into $SU(2)_R$ irreps is straightforward. 
With some work, we obtain 
\begin{align}
A^{(3)}&(x;z)  A^{(3)}(y;w) \sim 3 N^3 \frac{(x,y)^3}{(z-w)^3} + \frac{9 N}{2} \frac{(x,y)^2}{(z-w)^2} (x \cdot \partial_y) A^{(2)}(y;w) + \cr 
&+ \frac{3}{4} \frac{(x,y)}{(z-w)} (x \cdot \partial_y)^2 A^{(4)}(y;w) + \frac{9 N}{4}\frac{(x,y)^2}{(z-w)} (x \cdot \partial_y) \partial_w A^{(2)}(y;w)  + \cr
&+ \frac{3}{2} \frac{(x,y)^3}{(z-w)} \left[ 3 N\epsilon_{ij} \Tr \partial_w Z^i(w) Z^j(w) -\epsilon_{ij}\epsilon_{i'j'} \Tr Z^i(w) Z^j(w)Z^{i'}(w) Z^{j'}(w) \right]
\end{align}
where the last term must be some alternative cohomology representative for the stress tensor $T(w)$, with an overall coefficient of $9 N$. 

For $n > 3$, we have
\begin{align}
\Tr Z(x;z)^3 & \Tr Z(y;w)^n \sim 3 n N^2 \frac{(x,y)^3}{(z-w)^3} \Tr Z(y;w)^{n-3} + 3 n N \frac{(x,y)^2}{(z-w)^2} \Tr Z(x;w) Z(y;w)^{n-2} + \cr 
&+ 3 n N\frac{(x,y)^2}{(z-w)} \Tr \partial_w Z(x;w) Z(y;w)^{n-2} + 3 n \frac{(x,y)}{(z-w)} \Tr Z(x;w)^2 Z(y;w)^{n-1}
\end{align}
Standard manipulations lead to 
\begin{align}
A^{(3)}&(x;z)  A^{(n)}(y;w) \sim 3 n N^2 \frac{(x,y)^3}{(z-w)^3} A^{(n-3)}(y;w) + \frac{3 n}{n-1} N \frac{(x,y)^2(x \cdot \partial_y)}{(z-w)^2}A^{(n-1)}(y;w) + \cr 
&+ \frac{3 n}{(n-1)^2} N \frac{(x,y)^2(x \cdot \partial_y)}{(z-w)} \partial_wA^{(n-1)}(y;w) + \frac{3}{n+1} \frac{(x,y)}{(z-w)} (x \cdot \partial_y)^2 A^{(n+1)}(y;w) \cr
&+\frac{6 n N}{n-1} \frac{(x,y)^3}{(z-w)} D^{(n-3)}(y;w)
\end{align}
where the representative we encounter for $D^{(n-3)}(y;w)$ involves terms with one $\epsilon_{ij}$ tensor and one derivative and terms with two $\epsilon_{ij}$
tensors and no derivatives. 

We can use the OPE to get the action of some global symmetry generators, in particular $J^{(3)}_{\frac12} (x) = \oint z \d z \Tr Z(x;z)^3$
\begin{align}
[ J^{(3)}_{\frac12} (x), A^{(n)}(y;w)]&=  \frac{3 n}{n-1} N (x,y)^2(x \cdot \partial_y) A^{(n-1)}(y;w) +  \cr &+ w N \frac{3 n}{(n-1)^2} (x,y)^2(x \cdot \partial_y) \partial A^{(n-1)}(y;w)  \cr & + w \frac{3}{n+1} (x,y)(x \cdot \partial_y)^2 A^{(n+1)}(y;w) \cr
&+\frac{6 n N}{n-1}  (x,y)^3 w D^{(n-3)}(y;w)
\end{align}
and $J^{(3)}_{-\frac12} (x) = \oint \d z \Tr Z(x;z)^3$
\begin{align}
[ J^{(3)}_{-\frac12} (x), A^{(n)}(y;w)]&= \frac{3 n}{(n-1)^2} N (x,y)^2(x \cdot \partial_y) \partial A^{(n-1)}(y;w)  \cr & + \frac{3}{n+1} (x,y)(x \cdot \partial_y)^2 A^{(n+1)}(y;w) \cr
&+\frac{6 n N}{n-1} (x,y)^3 D^{(n-3)}(y;0)
\end{align}

\subsection{OPE computations for open-string operators}

Computations of large $N$ OPE's between open string operators, ignoring closed string operators, is rather straightforward. 
Define again 
\begin{equation}
E^{(n)}_{ab}(x;z) =I_a(z)  Z(x;z)^n J_b(z)
\end{equation}

The simplest OPE (for $n>0$)
\begin{align}
I_a(z)J_b(z)  & I_c(w)  Z(y;w)^n J_d(w) \sim \frac{(-1)^{f_a f_b + f_a f_c+ f_b f_c}\delta_{a d}}{z-w} I_c(w)  Z(y;w)^n J_b(w) \cr 
&- \frac{(-1)^{f_b}\delta_{b c}}{z-w} I_a(w)  Z(y;w)^n J_d(w)
\end{align}
encodes the $\mathfrak{pgl}(k|k)$ current algebra action. Here the $f_a$ are the Grassmann parity of the various components of $I$ and $J$ 
and the signs are the same as one would encounter in the $\mathfrak{pgl}(k|k)$ structure constants. 
We can thus write 
\begin{equation}
E_\mathfrak{t}^{(0)}(z)  E_\mathfrak{t'}^{(n)}(y;w) \sim \frac{1}{z-w} E_{[\mathfrak{t},\mathfrak{t'}]}^{(n)}(y;w)
\end{equation}
and the zeromode action
\begin{align}
[e^{(0)}_{\mathfrak{t},0},  E_\mathfrak{t'} ^{(n)}(y;w)] =   E_{[\mathfrak{t},\mathfrak{t'}]}^{(n)}(y;w)
\end{align}
and thus 
\begin{align}
[e^{(0)}_{\mathfrak{t},0},  e^{(n)}_{ \mathfrak{t'},k}(y)] = e^{(n)}_{[\mathfrak{t},\mathfrak{t'}],k}(y)
\end{align}

The next non-trivial OPE is
\begin{align}
E^{(1)}_\mathfrak{t} (x;z)  & E_\mathfrak{t'}^{(n)}(y;w) \sim N\frac{(x,y)}{(z-w)^2} E^{(n-1)}_{[\mathfrak{t},\mathfrak{t'}]}(y;w) \cr 
&+ N\frac{1}{n+1} \frac{(x,y)}{(z-w)}\partial \left[E_{[\mathfrak{t},\mathfrak{t'}]}^{(n-1)}(y;w) \right] \cr 
&+\frac{1}{n+1}\frac{1}{z-w} (x \cdot \partial_y) E_{[\mathfrak{t},\mathfrak{t'}]}^{(n+1)}(y;w)
\end{align}
It gives the action of the next set of global symmetry generators 
\begin{multline}
[e^{(1)}_{\mathfrak{t} ,-\frac12}(x),  E_\mathfrak{t'}^{(n)}(y;w) ] = N\frac{1}{n+1} (x,y)\partial \left[ E_{[\mathfrak{t},\mathfrak{t'}]}^{(n-1)}(y;w) \right] \\ 
+ \frac{1}{n+1} (x \cdot \partial_y)  E_{ [\mathfrak{t},\mathfrak{t'}]}^{(n+1)}(y;w)
\end{multline}
and 
\begin{align}
[e^{(1)}_{\mathfrak{t},\frac12}(x),  &E^{(n)}_\mathfrak{t'}(y;w)]= N(x,y) E^{(n-1)}_{[\mathfrak{t},\mathfrak{t'}]}(y;w) \cr 
&+N w \frac{1}{n+1} (x,y) \partial \left[  E^{(n-1)}_{ [\mathfrak{t},\mathfrak{t'}]}(y;w) \right] \cr 
&+\frac{1}{n+1} w (x \cdot \partial_y)E_{ [\mathfrak{t},\mathfrak{t'}]}^{(n+1)}(y;w)
\end{align}
and then
\begin{multline}
[e^{(1)}_{\mathfrak{t},-\frac12}(x),  e^{(n)}_{\mathfrak{t'},k}(y)] = - N\frac{\frac{n}{2}+k}{n+1} (x,y) e^{(n-1)}_{[\mathfrak{t},\mathfrak{t'}],k-\frac12}(y) 
+\frac{1}{n+1} (x \cdot \partial_y) e^{(n+1)}_{[\mathfrak{t},\mathfrak{t'}],k-\frac12}(y)
\label{eqn_lvl1_open_currents}
\end{multline}
and 
\begin{multline}
[ e^{(1)}_{\mathfrak{t} ,\frac12}(x),   e^{(n)}_{\mathfrak{t'} ,k}(y)]= N\frac{\frac{n}{2}-k}{n+1}(x,y)  e^{(n-1)}_{[\mathfrak{t},\mathfrak{t'}],k+\frac12}(y)  
+\frac{1}{n+1}  (x \cdot \partial_y)  e^{(n+1)}_{[\mathfrak{t},\mathfrak{t'}],k+\frac12}(y)
\label{eqn_lvl1_open_currents2}
\end{multline}

In important consequence of equations \eqref{eqn_lvl1_open_currents} and  \eqref{eqn_lvl1_open_currents2} is that
\begin{equation}  
[ e^{(1)}_{\mathfrak{t} ,\frac12}(x),   e^{(1)}_{\mathfrak{t'} ,-\frac12}(y)] -  [ e^{(1)}_{\mathfrak{t} ,-\frac12}(x),   e^{(1)}_{\mathfrak{t'} ,\frac12}(y)] = - \frac{N}{2} (x,y)e^{(0)}_{[t,t']}.  \label{eqn_conifold_commutator} 
 \end{equation}
This commutation relation is clearly similar to the relation defining the algebra of functions on the deformed conifold, and we will leverage this fact later.

It is also straightforward to find the action of the generators of the closed string global symmetry algebra onto the open string generators. For example, we have
\begin{align}
A^{(3)}(x;z) & E^{(n)}(y;w) \sim 3 (n-2) N^2 \frac{(x,y)^3}{(z-w)^3} E^{(n-3)}(y;w)+ \cr &+ 3 N \frac{(x,y)^2}{(z-w)^2} (x \cdot \partial_y) E^{(n-1)}(y;w)+ \cr 
&+ \frac{3 N}{n+1}\frac{(x,y)^2}{(z-w)} (x \cdot \partial_y) \partial_w E^{(n-1)}(y;w) + \cr &+ \frac{3}{n+1} \frac{(x,y)}{(z-w)} (x \cdot \partial_y)^2 E^{(n+1)}(y;w)
\end{align}
which gives 
\begin{align}
[ J^{(3)}_{\frac12} (x), & E^{(n)}(y;w)] \sim  3 N (x,y)^2 (x \cdot \partial_y)  E^{(n-1)}(y;w)+ \cr 
&+ w \frac{3 N}{n+1}(x,y)^2 (x \cdot \partial_y) \partial_w E^{(n-1)}(y;w) + \cr &+ w \frac{3}{n+1} (x,y) (x \cdot \partial_y)^2 E^{(n+1)}(y;w)
\end{align}
and
\begin{align}
[ J^{(3)}_{-\frac12} (x), & E^{(n)}(y;w)] \sim \frac{3 N}{n+1}(x,y)^2 (x \cdot \partial_y) \partial_w E^{(n-1)}(y;w) + \cr &+ \frac{3}{n+1} (x,y) (x \cdot \partial_y)^2 E^{(n+1)}(y;w)
\end{align}
and then
\begin{align}
[ J^{(3)}_{\frac12} (x), & e^{(n)}_k(y)] \sim  3 N \frac{\frac{n}{2}-k}{n+1}(x,y)^2 (x \cdot \partial_y)  e^{(n-1)}_{k+\frac12}(y)
+ \cr &+ \frac{3}{n+1} (x,y) (x \cdot \partial_y)^2 e^{(n+1)}_{k+\frac12}(y)
\end{align}
and
\begin{align}
[ J^{(3)}_{-\frac12} (x), & e^{(n)}_k(y)] \sim  3 N \frac{\frac{n}{2}+k}{n+1}(x,y)^2 (x \cdot \partial_y)  e^{(n-1)}_{k-\frac12}(y)
+ \cr &+ \frac{3}{n+1} (x,y) (x \cdot \partial_y)^2 e^{(n+1)}_{k-\frac12}(y)
\end{align}

\section{Computation of the BRST cohomology of the $\mscr{N}=4$ chiral algebra at large $N$} \label{app:coho}
The $\mscr{N}=4$ chiral algebra is the BRST reduction of a system of symplectic bosons valued in $\mf{gl}_N \oplus \mf{gl}_N $ by the adjoint action $\mf{gl}_N$.   As such, the chiral algebra is generated by operators:  
\begin{align} 
	X^i_j,Y^i_j &  \text{ of spin } 1/2 \\
	\b^i_j & \text{ of spin } 1 & \text{ and ghost number } -1 \\
	\c^i_j & \text{ of spin } 0 & \text{ and ghost number } 1.
\end{align}
The OPEs are
\begin{align} 
	X^i_j (0) Y^k_l (z) & \simeq \delta^i_l \delta^k_j \frac{1}{z} \\
	\b^i_j(0) \c^k_l(z) & \simeq \delta^i_l \delta^k_j \frac{1}{z}.
\end{align}
The BRST operator is, as usual, given by the current
\begin{equation} 
	J_{BRST} = \op{Tr} \c X  Y + \tfrac{1}{2} \op{Tr} \b \c^2. 
\end{equation}
We let $X_n = \oint z^{-n-1} X(z) \d z$, and similarly for $Y_n, \b_n, \c_n$.  Then $X_n$ is of spin $n+1/2$, etc.   

At a first pass, the vacuum module for the chiral algebra is then generated by the action of the modes $X_n$, $Y_n$,  $\b_n$, $\c_n$ for $n \ge 0$.   The BRST operator is defined by $\oint J_{BRST}(z) \d z$.     

This is not strictly correct. We should not include the zero modes of the $\c$-ghost in the vacuum module.  The purpose of the $\c$-ghost is to impose gauge invariance.  The zero mode of the $\c$-ghost imposes gauge invariance for constant gauge transformations.  Since constant gauge transformations form a compact group, we should impose gauge invariance for these transformations directly, by looking only at $U(N)$ (or $GL(N,\C$) invariant states.   Gauge invariance for non-constant gauge transformations is best achieved cohomologically, using the non-zero modes of the $\c$-ghost. 

Thus, the correct definition of the vacuum module is given by considering the $GL(N)$ invariants in the module generated by the action of $X_n$, $Y_n$, $\b_n$ for $n \ge 0$ and $\c_n$ for $n \ge 1$ on the vacuum vector. The BRST operator acts on this vacuum module by $Q_{BRST} = \oint J_{BRST}(z) \d z$, as before; except that we drop any terms containing the ghost zero mode $\c_0$ after applying $Q_{BRST}$. 

We will compute the BRST cohomology at large $N$ using spectral sequence methods.  This means that we will approximate the BRST operator by a simpler operator, whose cohomology we can compute; and then, if necessary, correct by taking cohomology with respect to a secondary differential.

The first approximation we will use is that we will pass to the semi-classical limit. This means the following. Introduce a parameter $\hbar$ so that the OPE between $X$ and $Y$ is $\hbar / z$, and that between $\b$ and $\c$ is $\hbar / z$.  Then, when we apply $Q_{BRST}$ to a state in the vacuum module, we can expand $\hbar^{-1} Q_{BRST} = Q_{BRST}^{sc} + \dots$.  The first approximation we will take is that we will compute the cohomology of $Q_{BRST}^{sc}$.

In the semi-classical approximation, the ordering of the operators forming a state doesn't matter.  We can write the vacuum module as the $GL(N)$ invariants in a polynomial algebra in even and odd variables: 
\begin{equation} 
	\op{Vac} = \C[X_n,Y_m, \b_r, \c_s]^{GL(N)} \text{ where } n,m,r \ge 0 \text{ and } s \ge 1. 
\end{equation}
The semi-classical BRST operator takes the form
\begin{equation} 
	Q_{BRST}^{sc} = \c_n X_m \partial_{X_{n+m}} + \c_n Y_m \partial_{Y_{n+m}} + \c_m \b_n \partial_{\b_{n+m}} + X_m Y_n \partial_{\b_{n+m}} + \tfrac{1}{2} \c_m \c_n \partial_{\c_{n+m}}.   
\end{equation}
Here we sum over $n,m$ in every term, and drop terms containing $\c_0$.

This BRST cochain complex can be rewritten in terms of an auxiliary graded algebra. Let
\begin{equation} 
	A = \C[z,\eps_1,\eps_2] 
\end{equation}
where $\eps_i$ are of ghost number $1$, and hence anti-commuting.  (The algebra $A$ is the cohomology of the space of open-string states on a single brane wrapping $\C$ inside $\C^3$). 

Then, we claim that the vacuum module, with its semi-classical BRST operator, is precisely the relative Lie algebra cochains of $A \otimes \mf{gl}_N$, relative to $\gl_N$. This is a general phenomenon: local operators in any open-string field theory are always relative Lie algebra cohomology for algebra of open-string states.

By definition, the relative Lie algebra cochains of $A \otimes \mf{gl}_N$ are the $GL(N)$ invariants in the symmetric algebra of $A^\vee/\C [-1] \otimes \gl_N$.  This space has a certain differential. 

To make the identification with the vacuum module of the chiral algebra, we will identify the dual of $\C[z]$ with expressions like $z^{-n-1} \d z$, using the residue pairing.  The dual to $\eps_i$ will be denoted $\eps_i^\ast$.  The identification is
\begin{align} 
	X_n &= \eps_1^\ast z^{-n-1} \d z \\
	Y_n &= \eps_2^\ast z^{-n-1} \d z\\
	\b_n &= \eps_1^\ast \eps_2^\ast z^{-n-1} \d z \\
	\c_n &= z^{-n-1} \d z.
\end{align}
Then, it is clear that the vacuum module is the same as the $GL(N)$ invariants in the symmetric algebra on $A^\vee/ \ C \otimes \mf{gl}_N$.  

One can also check, without much difficulty, that the semi-classical BRST operator matches the differential on Lie algebra cochains. The only term which is not completely trivial to match is the term $\sum X_n Y_m \partial_{\b_{n+m}}$.  There is such a term in the Lie algebra cochains also, coming from the commutator
\begin{equation} 
	[M_1 \eps_1 z^n, M_2 \eps_2 z^m] = [M_1,M_2] \eps_1 \eps_2 z^{n+m} 
\end{equation}
for $M_1,M_2 \in \mf{gl}_N$.  

We have rephrased the problem of computing the cohomology of the semi-classical BRST operator in terms of that of computing the relative Lie algebra cohomology of $A \otimes \mf{gl}_N$. There is an isomorphism
\begin{equation} 
	H^\ast(A \otimes \gl_N \mid \gl_N) \otimes H^\ast(\gl_N) \iso H^\ast(A \otimes \gl_N)  
\end{equation}
of Lie algebra cohomology groups, where on the right hand side we have the absolute Lie algebra cohomology and on the left we have the relative Lie algebra cohomology, tensored with the Lie algebra cohomology of $\gl_N$.  The Lie algebra cohomology of $\gl_N$ is an exterior algebra on variables $\eps_1,\eps_3,\dots,\eps_{2N-1}$ where $\eps_i$ is in degree $i$. (These correspond to the operators $\op{Tr} \c^i$ built from the ghost field).

From this we see that it suffices to compute the Lie algebra cohomology of $A \otimes \gl_N$, and then strip off the Lie algebra cohomology of $\gl_N$.

Tsygan \cite{Tsy83} and Loday-Quillen \cite{LodQui84} show that, when $N \to \infty$, the Lie algebra cohomology of $A \otimes \gl_N$ for any associative algebra $A$ is isomorphic to the symmetric algebra on the dual of the cyclic homology of $A$, with a shift by $1$:
\begin{equation} 
	H^\ast(A \otimes \gl_\infty) \simeq \Sym^\ast (HC_\ast(A)^\vee[-1]). 
\end{equation}
Elements of $HC_\ast(A)^\vee[-1]$ are single-trace operators in the gauge theory.  

To compute $HC_\ast(A)$, we use the result of Connes (see the book \cite{Loday2013}) that for any associative algebra there is a spectral sequence converging to $HC_\ast(A)$ whose first term is
\begin{equation} 
	HH_\ast(A) \otimes \C[t^{-1}]. 
\end{equation}
Here $HH_\ast(A)$ is the Hochschild homology of $A$, and $t$ is a parameter of degree $2$.  The differential on this first page of the spectral sequence is $t B$, where $B$ is the Connes $B$-operator on Hochschild homology.

The Hochschild homology of any graded commutative algebra can be computed by the Hochschild-Kostant-Rosenberg theorem.  In our case, we find
\begin{equation} 
	HH_\ast(A) = \C[z,\d z, \eps_i, \d \eps_i ] 
\end{equation}
where $\d z$ is of cohomology degree $-1$, and so fermionic; whereas $\d \eps_i$ are of cohomological degree $0$, and bosonic. The Connes $B$-operator is the de Rham operator
\begin{equation} 
	B = \d_{dR} = \d z \partial_z + \sum \d \eps_i \partial_{\eps_i}.  
\end{equation}
We can decompose this as a sum of three terms, $B^z = \d z \partial_z$ and $B^{\eps_i} = \d \eps_i \partial_{\eps_i}$.

Next, let us compute the cohomology of $HH_\ast(A)[t^{-1}]$ with the differential $t B_z$. This differential does not affect the $\eps_i$, $\d \eps_i$ variables, so we may as well compute the cohomology of $\C[z,\d z,t^{-1}]$ with the differential $B_z t$. This cohomology consists of $\C[z] \oplus t^{-1} \C[t^{-1}]$.  Re-introducing the $\eps_i, \d \eps_i$ variables, we find our cyclic homology groups at this stage of the spectral sequence are
\begin{equation} 
	z \C[z] \otimes \C[\eps_i, \d \eps_i] \oplus \C[\eps_i, \d \eps_i] [t^{-1}]. 
\end{equation}
At this stage, we can introduce the differential $B_{\eps_1} t + B_{\eps_2} t$.  We find that the cohomology of this differential consists of
\begin{equation} 
	z \C[z] \otimes \C[\eps_i, \d \eps_i] \oplus \eps_1 \eps_2 \C[\d \eps_i] \oplus \left( \eps_1 \C[\d \eps_i] \oplus \eps_2 \C[\d \eps_i]       \right)/ (\op{Im} B_{\eps_i} )      \oplus t^{-1} \C[t^{-1}] . 
\end{equation}
The last summand, $t^{-1} \C[t^{-1}]$, corresponds to those operators which disappear when we pass to relative Lie algebra cohomology. We will therefore drop them.  

The summand $ \left( \eps_1 \C[\d \eps_i] \oplus \eps_2 \C[\d \eps_i]       \right)/ (\op{Im} B_{\eps_i} ) $ denotes expressions which are linear in the $\eps_i$, polynomials in the $\d \eps_i$, and taken up to the image of $B_{\eps_i} = \sum \d \eps_i \partial_{\eps_i}$.

There are no further terms in the spectral sequence, so this computation gives a complete description of cyclic homology and hence of the single-trace operators in the gauge theory.   Let us use the notation $[\mbf{m},\mbf{n}]$ to denote the representation of $SU(2)_R \times SO(2)$ which is of spin $m$ under $SU(2)_R$ and of spin $n$ under $SO(2)$.  We will write down all the terms we have found in terms of these representations:
\begin{align*} 
	z \C[z,\d \eps_i] &= \oplus_{k \ge 1,m \ge 0} [\mbf{m/2}, \mbf{m/2 + k} ]  \\
	\eps_1 \eps_2 \C[z, \d \eps_i] &= \oplus_{k \ge 1, m \ge 0} [\mbf{m/2}, \mbf{m/2+k}] \\
	\eps_1 z\C[z, \d \eps_i] \oplus \eps_2 z \C[z], \d \eps_i] &= \oplus_{k \ge 1, m \ge 0} [\mbf{1/2}, \mbf{1/2}] \otimes [\mbf{m/2}, \mbf{m/2+k}]   \\
	&= \oplus_{k \ge 1, m \ge 0} \left(  [\mbf{(m+1)/2}, \mbf{(m+1)/2+k}] \oplus [\mbf{(m-1)/2}, \mbf{(m+1)/2+k}  \right)  \\
	\left( \eps_1 \C[\d \eps_i] \oplus \eps_2 \C[\d \eps_i]\right)/ \op{Im} B_{\eps_i} &= \oplus_{m \ge 1} [\mbf{m/2}, \mbf{m/2} ].
\end{align*}
The first two lines correspond to the fermionic single trace operators in the $B,C$ towers and the second two to bosonic single trace operators in the $A,D$ towers. 

\subsection{Absence of further terms in the spectral sequence}
We have seen that the semi-classical BRST cohomology of the single trace operators is exactly what we want: it is given by the bosonic $A,D$ towers and the fermionic $B,C$ towers. We need to check that there are no further differentials in the spectral sequence.

To see this, note that the BRST operator commutes with both $SU(2)_R$ and the $SL_2$ global conformal symmetry.  Therefore any further differentials in the spectral sequence will have the same feature, and will take Virasoro quasi-primaries to each other.  The $A$ tower contains quasi-primaries of spin $r$ living in the $SU(2)_R$ representation of spin $r$, the $D$ tower has quasi-primaries of spin $r+2$ in the representation of $SU(2)_R$ of spin $r$, and the $B$ and $C$ towers have quasi-primaries of spin $r+1$ in the $SU(2)_R$ representation of spin $r$.  Because of this, there are no possible fermionic operators commuting with the $SU(2)_R$ and $SL_2$ symmetries. 

\section{Chiral algebra on topological D-branes} \label{app:chiral}
In this section we will analyze the open-string field theory living on a brane in the topological $B$-model. We will focus on branes supported on an algebraic curve $C$, and re-derive the standard answer: the dimensional reduction of holomorphic Chern-Simons theory, i.e. a gauged $\beta \gamma$ system on $C$ valued in the adjoint of the $U(N)$ gauge group. This is the same as is the chiral algebra for $\mc{N}=4$ gauge theory. 

We will also modify the setup to include the effect of extra space-filling branes on the world-volume theory of the branes wrapping $C$. These will contribute extra $\beta \gamma$ systems valued in the fundamental and anti-fundamental representations of the 
$U(N)$ gauge group.

\subsection{Open strings on B-branes}

It is known (see the survey \cite{Aspinwall:2004jr}) that a coherent sheaf $E$ on $X$ gives rise to a topological $B$-brane.  The cohomology of the algebra of open-string states is the algebra $\op{Ext}^\ast(E,E)$: the self-Ext's of $E$ in the derived category of coherent sheaves on $X$.

We are interested in topological $B$-branes which are supported on a curve $C \subset X$.  The most general such brane can be quite complicated, but for the purposes of understanding the chiral algebra it suffices to consider branes which are of the form $\Oo_C^{\oplus N}$, where $\Oo_C$ of $X$.    Following the general presription \cite{Wit92} one should engineer a field theory living on $C$ whose fields are the ghost number $1$ states of an open string stretched between the brane $\Oo_C^{\oplus N}$ and itself. 

The Ext-groups $\op{Ext}^\ast(\Oo_C^{\oplus N}, \Oo_C^{\oplus N})$ are the zero-modes of the space of open string states, and hence do not capture the local fluctuations of the field.  To understand the field theory, we need to find a suitable cochain complex whose cohomology groups are these Ext-groups.

If we were studying the space-filling brane corresponding to a vector bundle $E$ on $X$, we can use the Dolbeault complex $\Omega^{0,\ast}(X, E^\vee \otimes E)$ as our cochain model for the Ext-groups.  With this model, the ghost number $1$ open-string states consist of $\Omega^{0,1}(X,E^\vee \otimes E)$, and one is led, following \cite{Wit92}, to the conclusion that the open-string field theory is holomorphic Chern-Simons theory.

In the case of a brane supported on a curve $C$ in $X$, this analysis does not immediately apply.  The general algorithm for constructing the space of open string states at the cochain level would be to resolve $\Oo_C$ by a complex of holomorphic vector bundles $E^\bullet$ on $X$, and then take the Dolbeault complex of $X$ with coefficients in the sheaf of differential graded algebras given by 
\begin{equation} 
	\underline{\op{End}}(E^\bullet) = (E^\bullet)^\vee \otimes E^\bullet. 
\end{equation}
In each degree, this differential graded algebra is a holomorphic vector bundle on $X$, and the differential is a map of holomorphic vector bundles on $X$.

This algorithm, however, yields something which is too large to work with.  To find a smaller model, we replace $\underline{\op{End}}(E^\bullet)$  by its cohomology, which is a graded algebra in the category of coherent sheaves on $C$.  This cohomology is denoted
\begin{equation} 
	\underline{\op{Ext}}^\bullet(\Oo_C,\Oo_C) =H^\ast\left(  \underline{\op{End}}(E^\bullet) \right). 
\end{equation}
Then, a nice model for the space of open strings stretched from $C$ to $C$ is given by the differential graded algebra
\begin{equation} 
	\Omega^{0,\ast}\left(C,  \underline{\op{Ext}}^\bullet(\Oo_C,\Oo_C) \right). 
\end{equation}
A standard computation tells us that 
\begin{equation} 
	  \underline{\op{Ext}}^\bullet(\Oo_C,\Oo_C) = \wedge^\ast N_C 
\end{equation}
where $N_C$ is the normal bundle to $C$ in $X$. 

There is a subtle point in this analysis which we have skipped over. When passing from the sheaf of differential graded algebras $\underline{\op{End}}(E^\bullet)$ to its cohomology sheaves, we should remember that the cohomology sheaves can have an induced $A_\infty$ structure. This $A_\infty$ structure makes $\wedge^\ast N_C$ into an $A_\infty$ algebra in the dg category of complexes of vector bundles on $C$, meaning that there can be $A_\infty$ operations \begin{equation} 
	\mu_n \in \Omega^{0,\ast}(C, \underline{\Hom} (\wedge^\ast N_C)^{\otimes n}, \wedge^\ast N_C ) ). 
\end{equation}
Fortunately, these $A_\infty$ operations all vanish in the case that $X$ is the total space of a vector bundle $V = N_C$ on $C$.  (For $X$ to be Calabi-Yau, we need $\wedge^2 V = K_C$).  In what follows we will focus on the special case
\begin{equation} 
	V = K_C^{1/2} \otimes \C^2. 
\end{equation}
In this case, the space of open-string states for strings stretched from $\Oo_C$ to itself is the differential graded algebra
\begin{equation} 
	\mathcal{A}_C=	\Omega^{0,\ast}(C, \wedge^\ast (K_C^{1/2} \otimes \C^2) ).
\end{equation}
More generally, if we take $N$ copies of the brane $\Oo_C$, we would find the algebra $\mathcal{A}_C \otimes \mf{gl}_N$.
\subsection{The open-string field theory}
Following Witten's prescription, the fields of the open-string field theory living on the brane $\Oo_C^{\oplus N}$ consist of the ghost number $1$ elements of the open-string differential graded algebra $\mathcal{A}_C \otimes \mf{gl}_N$, equipped with the Chern-Simons action 
\begin{equation} 
	\int_C 	\tfrac{1}{2} \op{Tr} (\alpha \d \alpha) + \tfrac{1}{3} \op{Tr} \alpha^3.\label{eqn_CS_action} 
\end{equation}
Here by $\op{Tr}$ we mean the trace map
\begin{equation} 
	\mathcal{A}_C^3 \otimes \gl_N = \Omega^{1,1}(C) \otimes \gl_N \to \Omega^{1,1}(C). 
\end{equation}
If we work in the BV formalism, the space of fields, ghosts, anti-fields and anti-fields to ghosts is $\mathcal{A}_C \otimes\mf{gl}_N[1]$, where $[1]$ indicates a shift in ghost number by $1$.  The BV action is given by same formula \eqref{eqn_CS_action}, but where $\alpha$ is now taken to be an open string state of arbitrary ghost number.

Let us now translate this into more familiar terms. The fields of ghost number zero consist of a partial connection $A \in \Omega^{0,1}(C) \otimes \mf{gl}_N$ and two fields $X,Y \in \Omega^{0,0}(C, K_C^{1/2})$. Despite the fact that the fields $X,Y$ are spinors, they are \emph{bosonic} fields.  The Lagrangian is
\begin{equation} 
	\int_C X \dbar_A Y. 
\end{equation}
Thus, the fields $X,Y$ are a system of symplectic bosons valued in $\C^2 \otimes \gl_N$.  These are coupled to a gauge field $A \in \Omega^{0,1}(C) \otimes \mf{gl}_N$, and acted on in the usual way by gauge transformations which at the infinitesimal level are in $\Omega^{0,0}(C) \otimes \mf{gl}_N$.

Coupling to the gauge field $A$ has the effect of performing BRST reduction to the system of symplectic bosons. We will describe the algebra of local operators of our gauge theory, with its BRST differential. We will find that this algebra is obtained from the algebra of the system of symplectic bosons by adjoining a $\b$ and $\c$ ghost, with the usual BRST differential.

The $\b$ and $\c$ ghost operators are of non-zero ghost number, and so must come from the ghosts and anti-fields of the gauge theory.  In the BV formalism, the fields of the gauge theory consist of
\begin{equation} 
	\Omega^{0,\ast}(C, \wedge^\ast (K^{1/2} \otimes \C^2) \otimes \mf{gl}_N[1]. 
\end{equation}
The fields can be written in terms of four copies of the Dolbeault complex of $C$:
\begin{align} 
	\mbf{C} \in	\Omega^{0,\ast}(C)\otimes \gl_N[1] \\
	\mbf{X} \in  \Omega^{0,\ast}(C, K^{1/2}) \otimes \gl_N\\
	\mbf{Y}	\in \Omega^{0,\ast}(C, K^{1/2}) \otimes \gl_N
\\
	\mbf{B} \in  \Omega^{0,\ast}(C,K)\otimes \gl_N[-1] \\
\end{align}
The linearized BRST operator is the $\dbar$ operator on each copy of the Dolbeault complex.  The various fields have the following interpretation:
\begin{align}
	\mbf{C}^0 &= \c & \mbf{C}^1 &= A \\
	\mbf{X}^0 &= X & \mbf{X}^1 &= Y^\ast\\
	\mbf{Y}^0 &= Y & \mbf{Y}^1 &= -X^\ast \\
	\mbf{B}^0 &= A^\ast & \mbf{B}^1 &= \c^\ast	
\end{align}
where the $Y^\ast$ means the anti-field to $Y$, etc.  We will identify $A^\ast$ with the $\b$-ghost.

The algebra of local operators at a point $p \in C$ is generated by the operators:
\begin{align} 
	\partial_z^n \c(p) &  \text{ of spin } n\\ 
	\partial_z^n X(p) &  \text{ of spin } n+\tfrac{1}{2} \\ 
	\partial_z^n Y(p)  &  \text{ of spin } n+ \tfrac{1}{2} \\ 
	\partial_z^n \b(p) &    \text{ of spin } n +1. 
\end{align}
(where $n \ge 0$). The remaining operators we can write down are either BRST exact or not BRST closed. Operators involving $\zbar$ derivatives of the fields $\c,X,Y,\b$ are BRST exact, whereas operators involving the fields $A,Y^\ast, X^\ast, \c^\ast$ are not BRST closed.     Later, we will see that a more careful analysis tells us we should discard the operators built from the $\c$-ghost without any derivatives. 

Because the kinetic term in the action functional contains the $\dbar$ operator, and the cubic term involves fields which are in $\Omega^{0,1}$, the only non-trivial OPEs are given by
\begin{align} 
	\c(0) \cdot \b(z)& \simeq \frac{1}{z} \\
	X(0) \cdot Y(z) & \simeq \frac{1}{z}.
\end{align}
To verify that the algebra of operators of the gauge theory is the BRST reduction of that of a system of symplectic bosons, we need to analyze the non-linear term in the BRST operator of the gauge theory.  This is associated to the cubic term in the action, which is 
\begin{equation} 
	\tfrac{1}{3}	\int \op{Tr} \alpha^3 
\end{equation}
for $\alpha \in \Omega^{0,\ast}(C, \wedge^\ast(K_C^{1/2} \otimes \C^2) )$.  If we expand $\alpha$ into components, this reads
\begin{equation} 
     S^{cubic} = 	\int A X Y + \tfrac{1}{2} \int \c \c \c^\ast + \int \c Y^\ast Y  
	+ \int \c X X^\ast + \int A \c A^\ast. 
\end{equation}
Recalling that we identify $A = \b^\ast$, and so $\b = A^\ast$, from this expression we can read of the BRST transformations of any operator. The formula is
\begin{equation} 
	Q_{BRST} O = \{O,S^{cubic}\} 
\end{equation}
where $\{-,-\}$ is the BV anti-bracket.  We find explicitly:
\begin{align} 
	Q_{BRST} \b &= XY + [\c, \b] &  Q_{BRST} \c &= \tfrac{1}{2} [\c, \c] \\	
	Q_{BRST} X &=  [\c, X]   & Q_{BRST} Y &= [\c, Y] .
\end{align}
This is the BRST operator for the BRST reduction of the symplectic boson.  

We have shown that the chiral algebra living on the $B$-brane $\Oo_C^{\oplus N}$ inside the Calabi-Yau three-fold which is the total space of $K_C^{1/2} \oplus K_C^{1/2}$ is the chiral algebra of $\mscr{N}=4$ Yang-Mills, with $(X,Y) \equiv (Z^1,Z^2)$.

\subsection{Including space-filling branes}
This analysis can be generalized to give a description of the chiral algebra living on a $1$-complex dimensional brane in a Calabi-Yau $3$-fold, when we also include space-filling branes.

The theory on $K$ space-filling brane is holomorphic Chern-Simons theory for the group $\mf{gl}(K)$. More generally, one can consider $K$ branes and $L$ anti-branes, giving rise to holomorphic Chern-Simons for the supergroup $\mf{gl}(K \mid L)$. As we already discussed, the only situation in which this theory is consistent at the quantum level is when $K = L$.

Because of this, we will focus on the case when we have the same number of space-filling branes and anti-branes, and we will analyze the effect this has on the chiral algebra supported on a brane on an algebraic curve.  The coherent sheaf corresponding to $K$ space-filling branes is $\Oo_X \otimes \C^K$. The sheaf corresponding to $K$ space-filling anti-branes is $\Oo_X \otimes \Pi \C^K$. Here we work with sheaves which have two gradings, one corresponding to fermion number and the other two ghost number.

We will continue to restrict ourselves to the case when the Calabi-Yau three-fold $X$ is the total space of $K^{1/2} \oplus K^{1/2}$ over a curve $C$. The space of states for an open string stretched between the space-filling branes $\Oo_X \otimes \C^{K \mid K}$ and the branes $\Oo_C \otimes \C^N$ is 
\begin{equation} 
	\Omega^{0,\ast}(C, \op{Hom}(\C^{K \mid K}, \C^N) ). 
\end{equation}
A short calculation of $\op{Ext}$-sheaves tells us that open strings going from $\Oo_C\otimes \C^N$ to $\Oo_X \otimes \C^{K \mid K}$ contribute
\begin{equation} 
	\Omega^{1,\ast}(C, \op{Hom}(\C^N, \C^{K \mid K} )[-2]. 
\end{equation}
We conclude that the open-string field theory in the presence of
$K \mid K$ space-filling branes is obtained by adjoining to the theory we discussed earlier the bi-fundamental matter
\begin{equation} 
	\Omega^{0,\ast}(C, \op{Hom}(\C^{K \mid K}, \C^N) )[1] \oplus 	\Omega^{1,\ast}(C, \op{Hom}(\C^N, \C^{K \mid K} )[-1]  . 
\end{equation}
We have described the matter in the BV formalism, including fields and anti-fields.  To give a description in more familiar terms, we need to modify the ghost numbers of the fields in a procedure familiar from topological twisting.  We give the fields a grading coming from the $\C^\times$ action on $\C^{K \mid K}$ under which all elements have weight $1$. Then, we can shift the ghost number and fermion number of the fields by saying that a field of charge $l$ under this $U(1)$ action gets ghost number shifted by $l$ and fermion number shifted by $l$ modulo $2$.  

If we do this, we find the same bi-fundamental matter but with a different grading:
\begin{equation} 
		\Omega^{0,\ast}(C, \op{Hom}(\C^{K \mid K}, \C^N) ) \oplus 	\Omega^{1,\ast}(C, \op{Hom}(\C^N, \C^{K \mid K} )  .  
\end{equation}
With this grading, the fields of ghost number $0$ consist of 
\begin{align} 
	\gamma_{i\alpha} & \in \Omega^{0,0}(C, \op{Hom}(\C^{K \mid K}, \C^N) ) \\
\beta^{i\alpha} & \in \Omega^{1,0}(C, \op{Hom}(\C^N, \C^{K \mid K} )  ).  
\end{align}
The Lagrangian is $\int \beta \dbar_A \gamma$, where $A$ is the $\mf{gl}_N$ gauge field.  We find that the bifundamental matter we are coupling is a $\beta-\gamma$ system valued in bifundamental the super-representation $\op{Hom}(\C^{K \mid K}, \C^N)$ of $\mf{gl}_N \oplus \mf{gl}_{K \mid K}$.  

\section{D-brane sources} \label{app:source}
Consider now the topological $B$-model on a Calabi-Yau $3$-fold $X$ with $N$ branes living on a holomorphic curve $C \subset X$.  The closed-string field
\begin{equation} 
	\alpha \in \op{Ker} \partial \subset \oplus_{i+j = 2}\PV^{i,j}(X) 
\end{equation}
is coupled to the brane via $ N \int_{C} \partial^{-1} \alpha$.  Here we are interpreting $\partial^{-1} \alpha$ as a $2$-form on $X$, and integrating this over the curve $C$. The closed-string Lagrangian in the presence of this term becomes
\begin{equation} 
	\tfrac{1}{2} \int \partial^{-1} \alpha \dbar  \alpha + \tfrac{1}{6} \int \alpha^3 + N \int_C \partial^{-1} \alpha. 
\end{equation}
Let us solve the equations of motion in the presence of this term. Letting $\alpha = \partial \beta$ and varying $\beta$, we find 
\begin{equation} 
	\dbar \partial \beta + \tfrac{1}{2}\partial \left(  (\partial \beta)(\partial \beta) \right) + N \delta_C = 0.	
\end{equation}
Reinserting $\alpha = \partial \beta$, and using the identity
\begin{equation} 
	\partial(\alpha^2) = \{\alpha,\alpha\}, 
\end{equation}
we find
\begin{equation} 
	\dbar \alpha + \tfrac{1}{2}\{\alpha,\alpha\} + N \delta_C = 0. \label{eqn_source_appendix} 
\end{equation}
Here we are treating $\delta_C$ as an element of $\PV^{1,2}(X)$, using the isomorphism between $\PV^{1,2}(X)$ and $\Omega^{2,2}(X)$. 
This equation tells us that the obstruction to $\alpha$ satisfying the Maurer-Cartan equation defining an integrable deformation of complex structure of $X$ is given by $-N\delta_C \in \PV^{1,2}$.

\section{Some computations of cohomology of coherent sheaves}
\label{appendix_coherent_cohomology}
In this section we will perform certain computations of the cohomology of coherent sheaves on the compactified deformed conifold $\br{X}$, which we used at various points in the body of the paper.  These computations are all quite easy and use standard techniques of homological algebra. 

The first computation we will do is the following.
\begin{lemma} 
The cohomology $H^\ast(\br{X}, \Oo(-D))$ vanishes. 
 \end{lemma}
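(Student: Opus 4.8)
The plan is to reduce the computation to the cohomology of line bundles on $\CP^4$ via the Koszul resolution of $\br{X}$ as a quadric, where everything is governed by the Bott formula.

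First I would identify the relevant line bundle. Since the boundary divisor $D \subset \br{X}$ is the zero locus of a section of $\Oo(1)$, we have $\Oo(-D) = \Oo_{\br{X}}(-1)$, the restriction to $\br{X}$ of the hyperplane bundle on $\CP^4$. So the claim is equivalent to the vanishing of $H^\ast(\br{X}, \Oo_{\br{X}}(-1))$.

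Next, because $\br{X}$ is the zero locus of a section of $\Oo(2)$ on $\CP^4$, there is the short exact sequence of sheaves on $\CP^4$
\begin{equation}
0 \to \Oo_{\CP^4}(-2) \to \Oo_{\CP^4} \to \Oo_{\br{X}} \to 0,
\end{equation}
given by multiplication by the defining quadric. Twisting by $\Oo(-1)$ yields
\begin{equation}
0 \to \Oo_{\CP^4}(-3) \to \Oo_{\CP^4}(-1) \to \Oo_{\br{X}}(-1) \to 0.
\end{equation}
The associated long exact sequence in cohomology expresses each $H^i(\br{X}, \Oo_{\br{X}}(-1))$ in terms of the cohomology of $\Oo_{\CP^4}(-1)$ and $\Oo_{\CP^4}(-3)$.

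Finally I would invoke the Bott formula for line bundles on projective space: on $\CP^n$ the groups $H^i(\CP^n, \Oo(d))$ vanish for \emph{all} $i$ whenever $-n-1 < d < 0$. For $n = 4$ this acyclic range is $-5 < d < 0$, which contains both $d = -1$ and $d = -3$. Hence every cohomology group of $\Oo_{\CP^4}(-1)$ and of $\Oo_{\CP^4}(-3)$ vanishes, and the long exact sequence forces $H^i(\br{X}, \Oo_{\br{X}}(-1)) = 0$ for all $i$, as desired. There is no genuine obstacle here: the argument is entirely formal, and the only point requiring care is the degree bookkeeping in the Bott formula, namely checking that $-1$ and $-3$ both lie strictly between $-5$ and $0$ so that the ambient cohomology is acyclic.
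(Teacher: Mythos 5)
Your proof is correct and follows essentially the same route as the paper: the same short exact sequence $0 \to \Oo_{\CP^4}(-3) \to \Oo_{\CP^4}(-1) \to \Oo_{\br{X}}(-1) \to 0$ combined with the vanishing of $H^\ast(\CP^4, \Oo(-k))$ for $k = 1,\dots,4$. The only difference is presentational: you spell out the derivation of the sequence from the ideal sheaf of the quadric and the identification $\Oo(-D) = \Oo_{\br{X}}(-1)$, which the paper leaves implicit.
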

\begin{proof} 
Recall that $\br{X}$ is a quadric in $\CP^4$, and the divisor is the intersection of $\br{X}$ with a copy of $\CP^3$.  There is a short exact sequence of coherent sheaves on $\CP^4$
\begin{equation} 
0 \to \Oo_{\CP^4}(-3) \to \Oo_{\CP^4}(-1) \to \Oo_{\br{X}}(-1) \to 0. 
 \end{equation}
On any $\CP^n$, the cohomology of $\Oo(-k)$ vanishes for $k = 1,\dots,n$.  From this we see that the cohomology of $\Oo_{\br{X}}(-1)$ vanishes.

 \end{proof}

Next we will show that:
\begin{lemma} 
	The cohomology $H^\ast_{\dbar}(\br{X}, \Omega^{2}_{\br{X}}(\log D))$ vanishes. 
 \end{lemma}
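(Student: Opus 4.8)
The plan is to prove the vanishing by a direct computation of the coherent cohomology $H^\ast(\br{X},\Omega^2_{\br{X}}(\log D))$, exploiting the very concrete geometry established earlier: $\br{X}$ is a smooth quadric threefold in $\CP^4$, $D\iso\CP^1\times\CP^1$ is a smooth hyperplane-divisor, $K_{\br{X}}=\Oo(-3)$, and $\br{X}\setminus D$ is the affine deformed conifold $SL_2(\C)$. The workhorse is the logarithmic residue sequence for a smooth divisor,
\begin{equation}
0 \to \Omega^2_{\br{X}} \to \Omega^2_{\br{X}}(\log D) \xrightarrow{\ \op{Res}\ } \Omega^1_D \to 0,
\end{equation}
which replaces the log sheaf by the two honest sheaves $\Omega^2_{\br{X}}$ and $\Omega^1_D$, together with the connecting maps of the induced long exact sequence.

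First I would assemble the two inputs by classical Hodge theory. Since $\br{X}$ is a smooth projective quadric threefold, $h^{p,q}(\br{X})=\delta_{pq}$ for $0\le p\le 3$, so $H^q(\br{X},\Omega^2_{\br{X}})=\C$ precisely for $q=2$ and vanishes otherwise. For $D\iso\CP^1\times\CP^1$ we have $\Omega^1_D=\Oo(-2,0)\oplus\Oo(0,-2)$, whose cohomology (Künneth) is concentrated in degree $1$, equal to $H^{1,1}(D)=\C^2$. Feeding these into the long exact sequence of the residue sequence already yields the vanishing of $H^0$, $H^2$ and $H^3$ of $\Omega^2_{\br{X}}(\log D)$: the groups bounding $H^0$ and $H^3$ are zero, and $H^2$ is squeezed between $\op{coker}\bigl(H^1(\Omega^1_D)\to H^2(\Omega^2_{\br{X}})\bigr)$ and $H^2(\Omega^1_D)=0$.

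The decisive step, around which I would organize the argument, is Deligne's comparison theorem together with $E_1$-degeneration of the logarithmic Hodge-to-de Rham spectral sequence: the hypercohomology of $(\Omega^\bullet_{\br{X}}(\log D),\partial+\dbar)$ computes $H^\ast_{dR}(\br{X}\setminus D)=H^\ast(SL_2(\C))$, and degeneration gives $\bigoplus_{p+q=k}H^q(\br{X},\Omega^p_{\br{X}}(\log D))\iso H^k(SL_2(\C))$. Since $SL_2(\C)$ retracts onto $SU(2)\iso S^3$, its cohomology is $\C$ in degrees $0,3$ and zero elsewhere. The groups $H^q(\Omega^2_{\br{X}}(\log D))$ therefore inject into $H^{2+q}(SL_2(\C))$, and only $q=1$ meets a nonzero target, $H^3(SL_2(\C))=\C$; moreover degeneration forces $H^2(\Omega^2_{\br{X}}(\log D))\hookrightarrow H^4(SL_2(\C))=0$, confirming that the connecting map $\delta\colon H^1(\Omega^1_D)\to H^2(\Omega^2_{\br{X}})$ is surjective. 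Thus the \emph{entire} content of the lemma collapses to a single question: whether the unique degree-$3$ de Rham class of $SL_2(\C)$ is carried by the $H^1(\Omega^2_{\br{X}}(\log D))$ summand, or instead lies in $F^3$, i.e. comes from a \emph{global} logarithmic $(3,0)$-form in $H^0(\br{X},\Omega^3_{\br{X}}(\log D))$.

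This is exactly where the real work, and the main obstacle, lies. The natural representative of the generator is the invariant holomorphic volume form $\Omega_{\br{X}}=N\op{Tr}(g^{-1}\d g)^3$, a $(3,0)$-form with nonzero period $\int_{S^3}\Omega_{\br{X}}=N$; if this extended to a logarithmic $(3,0)$-form, the class would sit in $F^3=H^0(\Omega^3_{\br{X}}(\log D))$ and $H^1(\Omega^2_{\br{X}}(\log D))$ would vanish as desired. The delicate point is that $\Omega_{\br{X}}$ has a \emph{cubic} pole on $D$, not a simple one, so placing the class in $F^3$ is not automatic. I would therefore try to pin this down in one of two ways: (i) produce a genuinely logarithmic $(3,0)$-representative of the class — equivalently, a holomorphic volume form on $SL_2(\C)$ with at worst a simple pole along the boundary quadric — thereby identifying $H^0(\Omega^3_{\br{X}}(\log D))$ as the carrier; or (ii) compute $\delta$ explicitly as the Gysin pushforward $i_\ast\colon H^{1,1}(D)\to H^{2,2}(\br{X})$ and analyze its kernel. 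I expect the reconciliation of the cubic-pole volume form with the logarithmic Hodge filtration to be the crux of the whole lemma — it is the same pole-order subtlety responsible for the normalizable zero-mode encountered in the Kodaira–Spencer boundary-condition analysis, and controlling it carefully is what makes or breaks the vanishing statement.
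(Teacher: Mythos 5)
Your refusal to push the vanishing through is the most important thing you got right: the statement as printed is false, and the paper's own proof says so. Running exactly your residue sequence with exactly your Hodge inputs ($H^q(\br{X},\Omega^2_{\br{X}})=\C$ only for $q=2$; $H^q(D,\Omega^1_D)=\C^2$ only for $q=1$), the paper concludes that $H^i(\br{X},\Omega^2_{\br{X}}(\log D))$ is $\C$ for $i=1$ and zero otherwise --- the lemma's wording contradicts the conclusion of its own proof. The nonzero $H^1$ is no accident: it is precisely the ``single zero-mode'' of Kodaira--Spencer theory discussed in the boundary-condition section, represented by a logarithmic $(2,1)$-form whose residue has unit period on a boundary $\CP^1$, and it is the reason the boundary condition is amended by the constraint $\int \op{Res}\,\alpha^{2,1}=0$. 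Moreover, your own write-up already contains a complete proof of the non-vanishing, and the ``open question'' you pose at the end was answered two sentences earlier: you used $H^4(SL_2(\C))=0$ and $E_1$-degeneration to show that the connecting map $\delta\colon H^1(D,\Omega^1_D)\iso\C^2\to H^2(\br{X},\Omega^2_{\br{X}})\iso\C$ is surjective, and exactness of the residue sequence then forces $H^1(\br{X},\Omega^2_{\br{X}}(\log D))=\ker\delta$, of dimension $2-\op{rank}\delta=1$. There is nothing left to decide about where the degree-$3$ class of $SL_2(\C)$ sits: it must be carried by $H^1(\Omega^2_{\br{X}}(\log D))\iso\C$.

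Both escape routes you proposed also close in one line, and both in the non-vanishing direction, confirming that the cubic pole you worried about is essential rather than removable. Route (i): $\Omega^3_{\br{X}}(\log D)=K_{\br{X}}\otimes\Oo(D)=\Oo(-3)\otimes\Oo(1)=\Oo(-2)$, and the sequence $0\to\Oo_{\CP^4}(-4)\to\Oo_{\CP^4}(-2)\to\Oo_{\br{X}}(-2)\to 0$ shows $H^\ast(\br{X},\Oo(-2))=0$; so no logarithmic $(3,0)$-representative of the volume form exists, $F^3H^3=0$, and the class drops into $H^1(\Omega^2_{\br{X}}(\log D))$ (the remaining summands $H^2(\Omega^1_{\br{X}}(\log D))$ and $H^3(\Oo_{\br{X}})$ vanish by the analogous Hodge-number bookkeeping). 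Route (ii): $\delta$ is the Gysin pushforward, which sends each ruling class of $D\iso\CP^1\times\CP^1$ to the class of a line in the quadric $\br{X}$, nonzero in $H^{2,2}(\br{X})\iso\C$, so $\delta$ has rank one and kernel $\C$. Methodologically your argument is the paper's (same residue sequence, same inputs) augmented by the Deligne/$E_1$-degeneration comparison with $H^\ast(SL_2(\C))$, and that extra input is in fact a cleaner way to pin down $\delta$ than anything in the appendix; the only correction needed is to restate the lemma as ``$H^i(\br{X},\Omega^2_{\br{X}}(\log D))=\C$ for $i=1$ and $0$ otherwise,'' which is the statement the rest of the paper actually uses.
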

\begin{proof} 
There is a short exact sequence of coherent sheaves on $\br{X}$
	\begin{equation} 
		0 \to \Omega^2(\br{X}) \to \Omega^2(\br{X},\log D) \xto{Res} \Omega^1(D) \to 0 	 
	\end{equation}
It is easy to compute that the topological Euler characteristic of $\br{X}$ is $4$, and by the Lefschetz theorem $H^i(\br{X})$ is that of $\CP^4$ in degrees $\le 2$. We conclude by Poincar\'e duality that $H^i(\br{X}) = H^i(\CP^3)$, and in particular that $H^i(\br{X}, \Omega^2_{\br{X}})$ is $\C$ in degree $2$ and $0$ in other degrees.  Similarly, $H^i(D, \Omega_D)$ is $\C^2$ in degree $1$ and zero in other degrees.   
We find from the long exact sequence in cohomology
\begin{equation} 
\dots \to H^1(\br{X}, \Omega^2_{\br{X}} (\log D)) \to H^1(D, \Omega^1_D) \to H^2(\br{X}, \Omega^2_{\br{X}} ) \to \dots 
 \end{equation}
 that $H^i(\br{X}, \Omega^2_{\br{X}} (\log D)) $ is $\C$ if $i = 1$ and zero otherwise.

\begin{equation} 
0 \to T_X(-2 D) \to T_{\CP^4}(-2) \mid_{\br{X}} \to \Oo_X \to 0. 
 \end{equation}
 Further, there is a short exact sequence of coherent sheaves on $\CP^4$
 \begin{equation} 
 0 \to T_{\CP^4}(-4) \to T_{\CP^4}(-2) \to T_{\CP^4}(-2)\otimes \Oo_{\br{X}} \to 0. 
  \end{equation}
Finally, we have the Euler sequence
\begin{equation} 
0 \to \Oo_{\CP^4} \to \Oo(1)^4 \to T_{\CP^4} \to 0. 
 \end{equation}
By tensoring the Euler sequence with $\Oo(-k)$, we find that the coherent cohomology of $T_{\CP^4}(-k)$ vanishes for $k = 2,3,4$.  
 \end{proof}

\section{Consistency of the boundary conditions for Kodaira-Spencer theory and holomorphic Chern-Simons theory}\label{app:bc}
The first thing to check is that the boundary conditions we have proposed \eqref{eqn_bc} are consistent boundary conditions.  We need to check that there is a propagator which satisfies these conditions, and that the interaction term for holomorphic Chern-Simons theory and Kodaira-Spencer theory is well-defined for fields satisfying these boundary conditions.

Let us first verify consistency of the boundary condition for holomorphic Chern-Simons theory for the group $\mf{psl}(K \mid K)$. Let $A^{0,\ast} \in \Omega^{0,\ast}(X) \otimes \mf{psl}(K \mid K)[1]$ be the field for holomorphic Chern-Simons theory, in the BV formalism.  Our boundary condition is that  $A^{0,\ast}$ extends to an element
\begin{equation} 
	A^{0,\ast} \in \Omega^{0,\ast}(\br{X}, \Oo(- D) ) \otimes \mf{psl}(K \mid K)[1]. 
\end{equation}
Since $\Omega_X$ has a pole of order $3$ on the boundary divisor, the cubic interaction
\begin{equation} 
\int \Omega_X (A^{0,\ast} \wedge A^{0,\ast} \wedge A^{0,\ast}) 
 \end{equation}
 is well-defined.  

Next, let us check that there is a propagator satisfying this boundary condition. To do this, we need to understand what equations a propagator on $\br{X}$ must satisfy. Let us first describe the corresponding equations for the propagator on $X$.

The $\delta$-function on the diagonal in $X \times X$ is naturally a section of the determinant of the conormal bundle to the diagonal, which is $K_X$.  We can use the holomorphic volume form $\Omega_X$ to view this $\delta$-function as a Dolbeault form valued in the trivial bundle:
\begin{equation} 
	\Omega_X^{-1} \delta_{\op{Diag}} \in \br{\Omega}^{0,3}(X \times X). 
\end{equation}
(Here $\br{\Omega}^{0,\ast}$ indicates Dolbeault forms whose coefficients can be distributions).  Stripping off the Lie algebra factor, the propagator is an element in 
\begin{equation} 
	P_X \in \br{\Omega}^{0,2}(X \times X) 
\end{equation}
which satisfies
\begin{equation} 
	\dbar P_X = \Omega_X^{-1} \delta_{\op{Diag}}. 
\end{equation}
We can also impose a gauge condition, for instance, asking that 
\begin{equation} 
	\dbar^\ast P_X = 0. 
\end{equation}

If we work on $\br{X}$, then the volume form $\Omega_{\br{X}}$ has poles, so that $\Omega_{\br{X}}^{-1}$ has zeroes.  The delta-function $\Omega_{\br{X}}^{-1} \delta_{\op{Diag}}$ on the diagonal is then a distribution twisted by the bundle $\Oo_{\br{X}}(-3D))$ on the diagonal.

If we have a function on $X \times X$ which has a first order zero on $D \times X$ and on $X \times D$, then when we restrict it to the diagonal it has a second order zero on the divisor $D$ in the diagonal.  The $\delta$-function on the diagonal is a distribution which has a third order zero (and in particular a second order zero) along the divisor $D$ in the diagonal. We can therefore view it as a distribution on $X \times X$ twisted by the line bundle $\Oo(-D) \boxtimes \Oo(-D)$: 
\begin{equation} 
	\Omega_X^{-1} \delta_{\op{Diag}} \in \br{\Omega}^{0,3}(\br{X} \times \br{X}, \Oo(- D) \boxtimes \Oo(- D) ). 
\end{equation}

A propagator for the theory with fields extend to $\br{X}$ is then defined to be an element
\begin{equation} 
	P_{\br{X}} \in \br{\Omega}^{0,2}(\br{X} \times \br{X}, \Oo(-D) \boxtimes \Oo(-D))  
\end{equation}
which satisfies the equation
\begin{equation} 
	\dbar P_{\br{X}} =\Omega_X^{-1} \delta_{\op{Diag}}.\label{eqn_propagator} 
\end{equation}
To fix the gauge freedom, we will also ask that $P_{\br{X}}$ satisfies the gauge condition
\begin{equation} 
	\dbar^\ast P_{\br{X}} = 0\label{eqn_gauge} 
\end{equation}
(where $\dbar^\ast$ is defined with respect to some K\"ahler metric on $\br{X}$). 

\emph{A priori}, it is not obvious that such a propagator exists, or that if it exists it is unique. However, a cohomology calculation given in the appendix \ref{appendix_coherent_cohomology} shows that
\begin{equation} 
	H^\ast_{\dbar}(\br{X}, \Oo(-D)) = 0. 
\end{equation}
This, together with standard Hodge theory results, immediately implies that there is a unique propagator $P_{\br{X}}$ which satisfies equation \eqref{eqn_propagator} and equation \eqref{eqn_gauge}.

\subsection{Boundary condition for the fields of Kodaira-Spencer theory}
Now we will repeat the above analysis in the more difficult case of Kodaira-Spencer theory. 
We will focus on the Beltrami differential field of KS theory, which we view as a $(2,1)$ form.  We impose the equation $\partial \alpha = 0$ homologically, so that the full field content in the BV formalism is an element of $\Omega^{\ge 2,\ast}(X)$.  The linearized BRST operator is the de Rham operator $\dbar + \partial$.

The boundary conditions we impose are that $\alpha$ extends to a form on $\br{X}$ with logarithmic poles on the boundary ,and that the reside of $\alpha$ in the boundary divisor integrates to $0$ over  the curve $z = 0$.  It is not difficult to check that the natural map
\begin{equation}
H^\ast (\Omega^{\ge 2,\ast}(\br{X}, \log D) ) \to H^\ast(X)  
\end{equation}
is an isomorphism in degrees $> 0$, so that the only cohomology of the left hand side corresponds to the class in $H^3(X)$. Imposing this residue condition on the field means that there are no zero modes.

Now let us explain how to construct the propagator. The $\delta$-function on the diagonal in $X \times X$ is an element
\begin{equation} 
	\delta_{\op{Diag}} \in \br{\Omega}^{3,3}(X \times X). 
\end{equation}
We can apply the operator $\partial$ on one factor to get a distribution
\begin{equation} 
	(\partial \otimes 1) \delta_{\op{Diag}} \in \br{\Omega}^{4,3}(X \times X). 
\end{equation}
We can project this onto the components which is in $\Omega^{2,\ast}$ on each factor of $X$:
\begin{equation} 
	\pi_{2,2} (\partial \otimes 1) \delta_{\op{Diag}} \in \br{\Omega}^{(\ge 2, \ge 2),3}(X \times X).
\end{equation}
Here by $\Omega^{(\ge 2,\ge 2),\ast}$ we mean it lives in $\Omega^{\ge 2,\ast}$ on each factor.

A propagator for Kodaira-Spencer theory, written in the language of differential forms instead of polyvector fields, is then an element
\begin{equation} 
	P_X \in \Omega^{(\ge 2, \ge 2),\ast}(X \times X) 
\end{equation}
which satisfies the conditions
\begin{align} 	
	P_X &= \sigma P_X \\
	(\partial + \dbar) P_X &= \pi_{2,2}(\partial \otimes 1) \delta_{\op{Diag}}.\label{eqn_ks_propagator} 
\end{align}
(In the first equation, $\sigma$ is the operator which switches the two factors of $X$). We could also impose a particular gauge and ask that $\dbar^\ast P_X = 0$ if we like. 

This makes sense if we extend everything to $\br{X}$ and have logarithmig poles on the boundary. Firstly, we note that
\begin{equation} 
	\pi_{2,2} (\partial \otimes 1) \delta_{\op{Diag}} \in \br{\Omega}^{(\ge 2, \ge 2),3}(\br{X} \times\br{X}, \log D) 
\end{equation}
where $D$ here is the boundary of $\br{X} \times \br{X}$. 

The question is then whether we can find an element
\begin{equation} 
	P_{\br{X}} \in   \br{\Omega}^{(\ge 2, \ge 2),2}(\br{X} \times\br{X}, \log D) 
\end{equation}
satisfying equations \ref{eqn_ks_propagator}.  Because the cohomology of  $\br{\Omega}^{(\ge 2, \ge 2),2}(\br{X} \times\br{X}, \log D)$ is one dimensional, spanned by a $6$-form class, there exists such a $P_{\br{X}}$.  It is not unique even up to the addition of an  exact element, because we can always add a multiple of the $6$-form class; however, if we further ask that the integral of the residue of $P_{\br{X}}$ in the boundary vanishes, it is unique up to the addition of an exact element.

\noindent {\bf Acknowledgements.} K.C. and D.G. are supported by the NSERC
Discovery Grant program and by the Perimeter Institute for Theoretical
Physics. Research at Perimeter Institute is supported by the
Government of Canada through Industry Canada and by the Province of
Ontario through the Ministry of Research and Innovation.

\providecommand{\href}[2]{#2}\begingroup\raggedright\endgroup



\end{document}